\documentclass[11pt,letterpaper]{article}
\usepackage[margin=1in]{geometry}
\usepackage{amssymb,amsmath,amsthm}
\usepackage{graphicx}
\usepackage[colorlinks,citecolor=blue,linkcolor=blue,urlcolor=red,pagebackref]{hyperref}
\usepackage{subcaption}
\usepackage{booktabs}
\usepackage{color}
\usepackage{enumitem}
\usepackage{tikz}
\usepackage[noend]{algpseudocode}
\usepackage{framed}
\usepackage{thm-restate}
\usepackage{algorithm}
\usepackage{multirow}
\usepackage[T1]{fontenc}

\setcounter{totalnumber}{1}

\newtheorem{theorem}{Theorem}
\newtheorem{definition}[theorem]{Definition}
\newtheorem{corollary}[theorem]{Corollary}
\newtheorem{claim}[theorem]{\bfseries{Claim}}
\newtheorem{lemma}[theorem]{\bfseries{Lemma}}
\newtheorem{observation}[theorem]{\bfseries{Observation}}

\newcommand{\eps}{\varepsilon}
\renewcommand{\S}{{\cal S}}

\renewcommand{\leq}{\leqslant}
\renewcommand{\geq}{\geqslant}

\newbox\ProofSym
\setbox\ProofSym=\hbox{%
\unitlength=0.18ex%
\begin{picture}(10,10)
\put(0,0){\framebox(9,9){}}
\put(0,3){\framebox(6,6){}}
\end{picture}}

\begin{document}

\title{
    Approximating Partition in Near-Linear Time
}

\author{
  Lin Chen\thanks{chenlin198662@zju.edu.cn. Zhejiang University. Part of this work was done when the author was affiliated with Texas Tech University.}
  \and 
  Jiayi Lian\thanks{jiayilian@zju.edu.cn. Zhejiang University.}
  \and
  Yuchen Mao\thanks{maoyc@zju.edu.cn. Zhejiang University. Supported by National Natural Science Foundation of China [Project No. 12271477]}
  \and
  Guochuan Zhang\thanks{zgc@zju.edu.cn. Zhejiang University. Supported by National Natural Science Foundation of China [Project No. 12131003]}
}

\date{}

\maketitle
\begin{abstract}
    We propose an $\widetilde{O}(n + 1/\eps)$-time FPTAS (Fully Polynomial-Time Approximation Scheme) for the classical  Partition problem. This is the best possible (up to a polylogarithmic factor) assuming SETH (Strong Exponential Time Hypothesis) [Abboud, Bringmann, Hermelin, and Shabtay'22].  Prior to our work, the best known FPTAS for Partition runs in $\widetilde{O}(n + 1/\eps^{5/4})$ time [Deng, Jin and Mao'23, Wu and Chen'22].  Our result is obtained by solving a more general problem of weakly approximating Subset Sum.
\end{abstract}

\section{Introduction}

\paragraph{Subset Sum} Given a (multi-)set $X$ of $n$ positive integers and a target $t$, Subset Sum asks for a subset $Y \subseteq X$ with maximum $\Sigma(Y)$ that does not exceed $t$, where $\Sigma(Y)$ is the sum of the integers in $Y$. It is a fundamental problem in computer science and operations research. Although NP-Hard, it admits FPTASes (Fully Polynomial-Time Approximation Schemes).  The first FPTAS for Subset Sum was due to Ibarra and Kim~\cite{IK75} and Karp~\cite{Kar75} in the 1970s.  Since then, a great effort has been devoted to developing faster FPTASes (see Table~\ref{table:SubsetSum}). The latest one is by Kellerer, Pferschy and Speranza~\cite{KPS97}, and has an $\widetilde{O}(n+1/\eps^2)$\footnote{In the paper we use an $\widetilde{O}(\cdot)$ notation to hide polylogarithmic factors in $n$ and $\frac{1}{\eps}$.} running time. This was recently shown, by Bringmann and Nakos~\cite{BN21b}, to be the best possible (up to a polylogarithmic factor) assuming the $(\min,+)$-convolution conjecture.

\paragraph{Partition} Partition is a special case of Subset Sum where $t=\Sigma(X)/2$. It is often considered as one of ``the easiest NP-hard Problems'' and yet has many applications in scheduling \cite{CL91}, minimization of circuit sizes and cryptography \cite{MH78}, and game theory \cite{Hay02}.  Despite the fact that Partition can be reduced to Subset Sum, algorithms specific to Partition have been developed in the hope of solving it faster than Subset Sum. See Table~\ref{table:Partition}. Mucha, W\k{e}grzycki and W\l{}odarczyk~\cite{MWW19} gave the first subquadratic-time FPTAS for Partition that runs in $\widetilde{O}(n+1/\eps^{5/3})$ time.  This result, together with the quadratic lower bound for Subset Sum, implies that Partition is easier than Subset Sum at least in terms of polynomial-time approximation schemes. On the negative side, the conditional lower bound of $\mathrm{poly}(n)/\eps^{1-o(1)}$ from Abboud, Bringmann, Hermelin, and Shabtay~\cite{ABHS22} implies that a running time of $O((n + \frac{1}{\eps})^{1-o(1)})$ is impossible assuming SETH. This naturally raises the following question.
\begin{center}
    \emph{Can Partition be approximated within $O(n + \frac{1}{\eps})$ time?}
\end{center}
There is a line of work trying to resolve this question. Bringmann and Nakos~\cite{BN21b} gave a deterministic FPTAS that runs in $\widetilde{O}(n+1/\eps^{3/2})$. The current best running time is $\widetilde{O}(n+1/\eps^{5/4})$ which was obtained independently by Deng, Jin, and Mao~\cite{DJM23} and by Wu and Chen~\cite{WC22}. The question is still away from being settled.

\paragraph{Weakly Approximating Subset Sum} The study of weak approximation schemes for Subset Sum was initiated by Mucha, W\k{e}grzycki and W\l{}odarczyk~\cite{MWW19}. Weak approximation lies between strong\footnote{From now on, we say that the standard approximation is strong, in order to distinguish it from weak approximation.} approximation for Subset Sum and Partition in the sense that any weak approximation scheme for Subset Sum implies a strong approximation scheme for Partition. Consider a Subset Sum instance $(X, t)$. Given any $\eps > 0$, a weak approximation scheme finds a subset $Y$ of $X$ such that 
\[
    (1 - \eps)\Sigma(Y^*) \leq \Sigma(Y) \leq (1+\eps)t,
\]
where $Y^*$ is the optimal solution, that is, a weak approximation scheme allows a solution to slightly break the constraint (also known as resource augmentation).
Mucha, W\k{e}grzycki and W\l{}odarczyk ~\cite{MWW19} proposed an $\widetilde{O}(n+1/\eps^{5/3})$-time weak approximation scheme.  It was later improved to $\widetilde{O}(n+1/\eps^{3/2})$~\cite{BN21b, WC22}.  The lower bound is $(n + \frac{1}{\eps})^{1-o(1)}$, which is the same as that for Partition.

\subsection{Our Results}

\begin{theorem}\label{thm:subsetsum}
    There is an $\widetilde{O}(n + \frac{1}{\eps})$-time randomized weak approximation scheme for Subset Sum, which succeeds with probability at least $1 - (\frac{n}{\eps})^{-O(1)}$.
\end{theorem}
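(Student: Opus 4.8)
The plan is to reduce weak approximation of Subset Sum to computing a compact description of the achievable subset sums of a heavily pre-processed instance of ``effective size'' $\widetilde O(1/\eps)$, and then to read the answer off a window of width $\Theta(\eps t)$. First I would reduce to an additive, one-sided-augmented problem. Discard every item larger than $t$; if the remaining items sum to at most $t$, return all of them. Otherwise the greedy rule ``sort descending, keep adding while the running sum stays $\le t$'' must stop (it cannot exhaust all items, else $\Sigma(X)\le t$), at a partial sum $s$ with $s+x>t$ for the next item $x$; since $x$ is at most every item already chosen, $s\ge x$, so $s>t/2$, and as $s$ is an achievable sum $\le t$ we get $\Sigma(Y^*)\ge t/2$. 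Hence a multiplicative $(1-\eps)$ loss on $\Sigma(Y^*)$ costs at most $\eps t$ additively, and it suffices to find a subset whose sum lies in the window $[\,\Sigma(Y^*)-\eps t,\ (1+\eps)t\,]$.

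Next I would shrink the set of items that matter. Split $X$ into small items $S$ (those $x<\eps t/\mathrm{polylog}(1/\eps)$) and large items. The small items act as an almost continuous resource: by a greedy density argument, for every $\tau\in[0,\Sigma(S)]$ some subset of $S$ has sum in $[\tau-\eps t,\tau]$, so it is enough to hit the window after ``smearing'' the large-item sumset by the interval $[0,\Sigma(S)]$. For the large items I would bucket by value into $O(\log(1/\eps))$ dyadic classes $[2^j\theta,2^{j+1}\theta)$; a feasible solution uses $O(t/(2^j\theta))$ items of class $j$, so rounding the class-$j$ values down to a suitable multiple $\gamma_j$ loses only $\eps t/O(\log(1/\eps))$ per class — hence $\le \eps t/2$ in total — while collapsing the large items to $\widetilde O(1/\eps)$ distinct values, each with a known multiplicity and a cap on how many copies can ever be useful.

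Finally I would compute the achievable sums of the reduced large-item instance and combine them with the small-item interval. The tool is color coding plus FFT-based merging in a balanced binary tree, in the style of Bringmann's and Bringmann--Nakos's subset-sum algorithms: randomly distribute the relevant large ``pieces'' into $\widetilde O(1/\eps^2)$ groups so that the $\widetilde O(1/\eps)$ pieces used by the optimum land in distinct groups with constant probability, take the Boolean OR inside each group, and merge the groups pairwise up the tree, truncating every partial sumset at $(1+\eps)t$; repeated values of high multiplicity are absorbed beforehand by an arithmetic-progression / structural compression of their contribution rather than by color coding. Then convolve the result with $[0,\Sigma(S)]$ and scan the window. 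Because pre-processing has already capped the total mass at $\widetilde O(1/\eps)$, each level of the tree should cost $\widetilde O(1/\eps)$, so everything outside input handling runs in $\widetilde O(1/\eps)$; $O(\log(1/\eps))$ independent repetitions of the color coding drive the failure probability down to $(1/\eps)^{-O(1)}$.

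I expect the main obstacle to be carrying out this last step within $\widetilde O(1/\eps)$. Two things threaten an $\widetilde O(1/\eps^2)$ blowup: the dyadic classes are rounded to different resolutions $\gamma_j$, so naively putting all partial sumsets on one common grid would need $\sim 1/\eps^2$ points; and the total number of ``copies'' of rounded large values is itself $\sim 1/\eps^2$, so individual pieces cannot simply be color-coded. Overcoming this requires handling each class at its native resolution and merging classes by a multi-resolution convolution, together with representing the per-class achievable sums as a union of $\widetilde O(1)$ arithmetic progressions so that convolutions cost time per progression rather than per grid point, and then proving this extra coarsening still loses only $O(\eps t)$. Making the many rounding errors, the color-coding guarantee, and the small-item smearing compose into a clean $(1-\eps)\Sigma(Y^*)\le\Sigma(Y)\le(1+\eps)t$ is where the real work lies.
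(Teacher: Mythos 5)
Your high-level architecture (reduce to a windowed problem, color-code so that only $\widetilde O(1/\eps)$ ``pieces'' matter, merge pairwise up a binary tree, truncate at $(1+\eps)t$) matches the paper's in broad outline, but the part you flag as ``where the real work lies'' is precisely where a crucial new idea is needed and your sketch does not supply it. You propose to absorb high-multiplicity rounded values by a per-class ``arithmetic-progression / structural compression'' and then a multi-resolution convolution that keeps the per-class sumset representable as a union of $\widetilde O(1)$ arithmetic progressions. Two problems: (a) there is no argument that such an $\widetilde O(1)$-AP representation survives convolution across classes — sumsets of two sets each representable by $k$ APs generally need up to $\Theta(k^2)$ APs, and the paper certainly does not argue this way; (b) even granting the representation, you have not said how to decide, within the time budget, when a partial sumset has grown too large to compute explicitly. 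The paper's escape from the $1/\eps^2$ blowup is an adaptive sparse/dense dichotomy driven by two tools you do not use: an output-sensitive sparse convolution that lets a ``sparse'' tree level (total size $\widetilde O(1/\eps)$) be computed in time proportional to its output, plus a size-estimation routine so that one can test sparsity without first computing the level.

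More fundamentally, you are missing the additive-combinatorics theorem that makes the dense case tractable. The paper invokes the Szemer\'edi--Vu result (their Theorem~\ref{lem:add-comb}) that if many integer sets $A_1,\dots,A_\ell\subseteq[0,u]$ each have size at least $|A|$ with $\ell(|A|-1)\geq cu$, then $A_1+\cdots+A_\ell$ contains an arithmetic progression of length $\geq u$; this is then extended via Lemma~\ref{lem:ap-extend} into a sequence spanning the whole window with small gaps, at which point a uniform $O(1/\eps)$-point grid approximates the window trivially. Without a theorem of this kind, you have no certificate that a ``dense'' level can be short-circuited; you only have the hope that a compressed representation exists. You also silently discard the need to \emph{recover} a subset $Y$, which in the dense branch is nontrivial exactly because the AP is existential; the paper devotes Section~\ref{sec:algo-solution} to computing \emph{partial} levels so that a traceback remains possible. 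Finally, your dyadic bucketing by \emph{value} into classes with different grid resolutions is a genuinely different decomposition from the paper's (which instead dyadically decomposes the \emph{target window} into scales $\beta$, fixes a single grid per window, and thereby never faces the multi-resolution merge issue), and it is this choice that creates the very obstacle you worry about.
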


Theorem~\ref{thm:subsetsum} immediately implies the following theorem since any weak approximation scheme for Subset Sum is a strong approximation scheme for Partition. 

\begin{theorem}\label{thm:partition}
     There is an $\widetilde{O}(n + \frac{1}{\eps})$-time randomized FPTAS for Partition, which succeeds with probability at least $1 - (\frac{n}{\eps})^{-O(1)}$.
\end{theorem}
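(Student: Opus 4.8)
The plan is to derive Theorem~\ref{thm:partition} directly from Theorem~\ref{thm:subsetsum}, using the standard fact that a weak approximation scheme for Subset Sum is a strong approximation scheme for Partition, together with a small amount of care in converting a possibly infeasible weak solution into a feasible one.

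First I would set up the reduction. Given a Partition instance on a multiset $X$ of $n$ positive integers, form the Subset Sum instance $(X, t)$ with $t = \Sigma(X)/2$; here the Partition optimum corresponds to a subset $Y^*$ of maximum sum not exceeding $t$, since the resulting imbalance $\Sigma(X) - 2\Sigma(Y^*) = 2t - 2\Sigma(Y^*)$ is minimized exactly when $\Sigma(Y^*)$ is maximized subject to $\Sigma(Y^*) \le t$. Running the weak approximation scheme of Theorem~\ref{thm:subsetsum} with parameter $\eps$ on $(X,t)$ returns, in $\widetilde{O}(n + 1/\eps)$ time and with failure probability $(1/\eps)^{-O(1)}$, a subset $Y \subseteq X$ with
\[
    (1-\eps)\Sigma(Y^*) \leq \Sigma(Y) \leq (1+\eps)t .
\]

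Next I would handle feasibility, since $Y$ may overshoot $t$. If $\Sigma(Y) \le t$, then $Y$ is already a feasible Partition solution of value at least $(1-\eps)\Sigma(Y^*)$, and I would output it. Otherwise $\Sigma(Y) > t$, and I would instead output the complement $Y' = X \setminus Y$. Then $\Sigma(Y') = \Sigma(X) - \Sigma(Y) = 2t - \Sigma(Y) < t$, so $Y'$ is feasible, and moreover $\Sigma(Y') = 2t - \Sigma(Y) \ge 2t - (1+\eps)t = (1-\eps)t \ge (1-\eps)\Sigma(Y^*)$, where the last step uses $\Sigma(Y^*) \le t$. Hence in both cases the returned set is feasible for Partition with value at least $(1-\eps)\Sigma(Y^*)$, i.e. an $\eps$-approximate solution. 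Computing the complement and the relevant sums adds only $O(n)$ time, and the success probability is inherited from Theorem~\ref{thm:subsetsum}.

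Since essentially all the algorithmic work is absorbed into Theorem~\ref{thm:subsetsum}, the only real point to get right here is the infeasible case: one must observe that a weak solution overshooting $t$ has a complement that undershoots $t$ by exactly the same amount, and that this complement is still within a $(1-\eps)$ factor of the Partition optimum precisely because the optimum itself is at most $t$. In particular no rescaling of $\eps$ is required, so the strong approximation guarantee for Partition is obtained with the same parameter, the same near-linear running time, and the same success probability.
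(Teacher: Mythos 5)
Your proposal is correct and follows exactly the route the paper takes: the paper derives Theorem~\ref{thm:partition} as an immediate corollary of Theorem~\ref{thm:subsetsum} via the standard observation that a weak approximation scheme for Subset Sum yields a strong approximation scheme for Partition. You have simply written out the (elementary but worth stating) complement trick that makes the implication precise, which the paper leaves implicit.
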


Both of the above two results match (up to a polylogarithmic factor) the lower bound of $(n + \frac{1}{\eps})^{1-o(1)}$.  To our best knowledge, Partition is the first NP-hard problem that admits an FPTAS that is near-linear in both $n$ and $1/\eps$.  We also remark that our weak approximation scheme for Subset Sum generalizes Bringmann's $\widetilde{O}(n + t)$-time exact algorithm for Subset Sum~\cite{Bri17} in the sense that when $\eps : = \frac{1}{2t}$, the weak approximation scheme becomes an exact algorithm with $\widetilde{O}(n + t)$ running time.

To attain Theorem~\ref{thm:subsetsum}, we utilize an additive combinatorics result that is different from the ones previously used in Subset Sum and Knapsack~\cite{GM91, MWW19, BN20, BW21, WC22, DJM23, CLMZ23}. We believe that it may be of independent interest.

\begin{table}[!ht]
    \centering
    \caption{Polynomial-time approximation schemes for Subset Sum. Reference with a star (*) indicates that the algorithm was designed for Knapsack and works for Subset Sum. Symbol (\dag) means that it is a randomized approximation scheme.}
    \begin{tabular}{cc}
        \toprule 
        \quad Running Time\qquad & Reference \\
        \toprule 
        \multicolumn{2}{c}{Strong approximation scheme for Subset Sum}\\
        \hline \specialrule{0em}{0pt}{2pt}
        $O(n/\eps^2)$ & *Ibarra and Kim~\cite{IK75}, *Karp~\cite{Kar75}\\
        $O(n/\eps)$ & Gens and Levner~\cite{GL78, GL79}\\
        $O(n + 1/\eps^4)$ & *Lawler~\cite{Law79}\\
        $O(n + 1/\eps^3)$ & Gens and Levner~\cite{GL94}\\
        $\widetilde{O}(n + 1/\eps^2)$ & Kellerer, Mansini, Pferschy and Speranza\cite{KMPS03}\\
        \hline
        C.L.B. $(n+1/\eps)^{2-o(1)}$ & Bringmann and Nakos~\cite{BN21b}\\
        \toprule 
        \multicolumn{2}{c}{Weak approximation scheme for Subset Sum}\\
        \hline \specialrule{0em}{0pt}{2pt}
        $\widetilde{O}(n+1/\eps^{5/3})$ &\dag Mucha, W\k{e}grzycki and W\l{}odarczyk~\cite{MWW19}\\
        $\widetilde{O}(n+1/\eps^{3/2})$ & \dag Bringmann and Nakos~\cite{BN21b}, Wu and Chen~\cite{WC22}\\
        $\widetilde{O}(n + 1/\eps)$ & \dag This Paper \\
        \hline
        C.L.B. $\mathrm{poly}(n)/\eps^{1-o(1)}$ & Abboud, Bringmann, Hermelin and Shabtay~\cite{ABHS22}\\
    \bottomrule 
    \end{tabular}
    \label{table:SubsetSum}
\end{table}

\begin{table}[!ht]
    \centering
    \caption{Polynomial-time approximation schemes for Partition. Symbol (\dag) means that it is a randomized approximation scheme.}
    \begin{tabular}{cc}
        \toprule 
        \quad Running Time\qquad & Reference \\
        \hline \specialrule{0em}{0pt}{2pt}
        $\widetilde{O}(n + 1/\eps^2)$ & Gens and Levner~\cite{GL80} \\
        $\widetilde{O}(n+1/\eps^{5/3})$ &\dag Mucha, W\k{e}grzycki and W\l{}odarczyk~\cite{MWW19}\\
        $\widetilde{O}(n+1/\eps^{3/2})$ & Bringmann and Nakos~\cite{BN21b}\\
        $\widetilde{O}(n + 1/\eps^{5/4})$ & Deng, Jin and Mao~\cite{DJM23}, Wu and Chen~\cite{WC22} \\
        $\widetilde{O}(n + 1/\eps)$ &\dag This Paper\\
        \hline
        C.L.B. $\mathrm{poly}(n)/\eps^{1-o(1)}$& Abboud, Bringmann, Hermelin and Shabtay~\cite{ABHS22}\\
        \bottomrule 
    \end{tabular}
    \label{table:Partition}
\end{table}

\subsection{Technical Overview} 
Let $X = \{x_1, \ldots, x_n\}$. Solving Subset Sum can be reduced to computing the sumset $\{x_1, 0\} + \cdots + \{x_n, 0\}$. Our main technique is an approach for efficiently approximating (a major part of) the sumset of integer sets, which can be seen as a combination of a deterministic sparse convolution~\cite{BFN22} (see Lemma~\ref{lem:sparse-fft}) and an additive combinatorics result from Szemer{\'e}di and Vu~\cite{SV05} (see Theorem~\ref{lem:add-comb}). We briefly explain the idea below.

Suppose that we are to compute the sumset of $A_1, \ldots, A_{\ell}$. We compute it in a tree-like manner.  $A_1, \ldots, A_{\ell}$ forms the bottom level of the tree. We compute the next level by taking the sumset of every two nodes in the bottom level. That is, we compute $A_1 + A_2, A_3 + A_4, \ldots, A_{\ell-1} + A_{\ell}$.  Let $B_i = A_{2i-1} + A_{2i}$ for $i=1,\ldots,\ell/2$.  If $B_1, \ldots, B_{\ell/2}$ have small total sizes, then we can compute all of them efficiently via sparse convolution, and proceed to the next round. When $B_1, \ldots, B_{\ell/2}$ has a large total size, we cannot afford to compute them. Instead, we utilize an additive combinatorics result from Szemer{\'e}di and Vu~\cite{SV05} to show that  $A_1 + \cdots + A_{\ell}$ has a long arithmetic progression. We will further show that this arithmetic progression can be extended to a long sequence with small differences between consecutive terms. Given the existence of such a long sequence, we can directly compute a set that nicely approximates (a major part of) $A_1 + \cdots + A_{\ell}$.

Besides sparse convolution and the additive combinatorics result, the following techniques are also essential for obtaining a near-linear running time.
\begin{itemize}
    \item Recall that we compute a level only when it has a small total size. Therefore, we should estimate the size of a level without actually computing it. The estimation can be done by utilizing the sparse convolution algorithm~\cite{BFN22}.
    
    \item The additive combinatorics result~\cite{SV05} only guarantees the existence of a long arithmetic progression, and is non-constructive. Therefore, we can only obtain a good approximation of solution values, but may not be able to recover solutions efficiently. A similar issue also arises in the algorithm for dense Subset Sum~\cite{BW21}.  In Section~\ref{sec:algo-solution}, we show the issue can be fixed in an approximate sense.
    
    \item Rounding is necessary every time when we compute a new level. If the tree has too many levels, then the error caused by rounding would accumulate and become far more than what is acceptable.  To ensure a small number of tree levels, we use the color-coding technique, which was proposed by Bringmann~\cite{Bri17} to design exact algorithms for Subset Sum. For technical reasons, the color-coding technique is slightly modified to ensure additional properties.
\end{itemize}

Our algorithm, and in particular, the usage of arithmetic progression, is inspired by the dense Subset Sum algorithm by Bringmann and Wellnitz~\cite{BW21}, and Galil and Margalit~\cite{GM91}, but we obtain the arithmetic progression in a way that is different from these two works. Bringmann and Wellnitz~\cite{BW21}, and Galil and Margalit~\cite{GM91} got the arithmetic progression via an additive combinatorics result saying that if an integer set $A$ has $\widetilde{\Omega}(\sqrt{u})$ distinct integers, where $u$ is the maximum integer in $A$, then the collection of subset sums of $A$ must contain an arithmetic progression of length ${\Omega}(u)$ {(see, e.g., \cite{Sar94})}. In contrast, we use a different result from additive combinatorics, which states that if we have $\ell$ sets $A_1, \ldots, A_\ell$ of size at least $k$  and $\ell k \geq \Omega(u)$, then $A_1 + \cdots + A_\ell$ must contain an arithmetic progression of length ${\Omega}(u)$.

\subsection{Further Related Work}
Subset Sum is a special case of Knapsack. There is a long line of research on approximation schemes for Knapsack, e.g., \cite{IK75, Kar75, Law79, KP04, Rhee15, Cha18, Jin19, DJM23}. There is a conditional lower bound of $(n + 1/\eps )^{2-o(1)}$ based on the (min, +)-convolution hypothesis\cite{CMWW19, KPS17}. Very recently, \cite{CLMZ23a} and \cite{Mao23} establish an $\widetilde{O}(n+1/\eps^2)$-time FPTAS for Knapsack, independently.

In addition to approximation algorithms, exact pseudopolynomial-time algorithms for Knapsack and Subset Sum have also received extensive studies in recent years, e.g., \cite{Bri17, AT19, PRW21, BC23, CLMZ23, Bri23a, Jin23}. 
It is worth mentioning that there is an algorithm~\cite{BN20} with running time $\widetilde{O}({k^{4/3}})$ for Subset Sum where $k$ is the number of subset sums that are smaller than $t$, which shares a similar flavor to our algorithm in the sense that it combines sparse convolution and Ruzsa’s triangle inequality for sumset estimation (which is a different result in additive combinatorics).

Our algorithm utilizes sparse convolution algorithms, see, e.g.,~\cite{CH02, AR15, CL15a, Nak20a, BFN21, BN21a, BFN22}. We also adopt results from additive combinatorics, see, e.g., \cite{Alo87, Sar89, Sar94, SV05, SV06}. Additive combinatorics results have been exploited extensively in recent years on Knapsack and Subset Sum, see, e.g., \cite{GM91, MWW19, BN20, BW21, WC22, DJM23, CLMZ23}.

\subsection{Paper Organization} In Section~\ref{sec:pre}, we introduce some necessary terminology and tools, and show that it suffices to solve a reduced problem. In Section~\ref{sec:alg-determine}, we present a near-linear-time algorithm for approximating optimal values. We further show how to recover an approximate solution in near-linear time in Section~\ref{sec:algo-solution}. Section~\ref{sec:conclude} concludes the paper. All the omitted proofs can be found in the appendices.

\section{Preliminary}\label{sec:pre}
\subsection{Notation and Definitions}
Throughout this paper, we assume that $\eps > 0$ is  sufficiently small. We also assume that $\frac{1}{\eps}$ is an integer by adjusting $\eps$ by an $O(1)$ factor. 
All logarithms ($\log$) in this paper are base $2$.

Let $w,v$ be two real numbers.  We use $[w,v]$ to denote the set of integers between $w$ and $u$.  That is, $[w,v] = \{z \in \mathbb{Z} : w\leq z\leq v\}$.  Let $Y$ be a nonempty set of integers. We write $Y \cap [w,v]$ as $Y[w,v]$. We denote the minimum and maximum elements of $Y$ by $\min(Y)$ and $\max(Y)$, respectively.  We write $\sum_{y \in Y}y$ as $\Sigma(Y)$. We refer to the number of elements in $Y$ as the size of $Y$, denoted by $|Y|$. Through this paper, we use both the terms ``set'' and ``multi-set''. Only when the term ``multi-set'' is explicitly used do we allow duplicate elements. Unless otherwise stated, a subset of a multi-set is a multi-set.

Let $(X, t)$ be an arbitrary Subset Sum instance. We assume that $x \leq t$ for any $x \in X$ since the integers greater than $t$ can be safely removed from $X$.    We define $\S_X$ to be the set of all subset sums of $X$. That is, $\S_X = \{\Sigma(Y) : Y\subseteq X\}$. Subset Sum actually asks for the maximum element of $\S_X[0,t]$. 

We consider the more general {problem} of approximating $\S_X$.
\begin{definition}\label{def:approx}
    Let $S$ be a set of integers. Let $w,v$ be two real numbers. We say a set $\widetilde{S}$ approximates $S[w,v]$ with additive error $\delta$ if 
    \begin{enumerate}[label={\normalfont(\roman*)}]
        \item for any $s \in S[w,v]$, there is $\tilde{s} \in \widetilde{S}$ with $s - \delta\leq \tilde{s} \leq s + \delta$, and

        \item for any $\tilde{s} \in \widetilde{S}$, there is $s \in S$ with $\tilde{s} -\delta \leq s \leq \tilde{s} + \delta$.
    \end{enumerate}
\end{definition}

Our algorithm has two phases. In the first phase, it computes a set $\widetilde{S}$ that approximates $\S_X[0,t]$ with additive error $\eps t$. Let $\tilde{s}$ be the maximum element of $\widetilde{S}[0,(1 + \eps) t]$. Clearly, $\tilde{s} \leq (1 + \eps) t$.  Definition~\ref{def:approx}(i) implies that $\tilde{s} \geq \Sigma(Y^*) - \eps t$, and Definition~\ref{def:approx}(ii) implies that there exists $Y \subseteq X$ such that $\tilde{s} - \eps t \leq \Sigma(Y) \leq \tilde{s} + \eps t$.  Therefore, 
\[
         (1- 4\eps)\Sigma(Y^*) \leq  \Sigma(Y^*) - 2\eps t \leq \tilde{s} - \eps t \leq \Sigma(Y) \leq \tilde{s} + \eps t \leq t + 2\eps t
\]
The first inequality is due to the fact that 
we can assume the optimal objective value $\Sigma(Y^*) \geq t/2$\footnote{If $\Sigma(Y^*) < t/2$, every interger in $X$ must be less than $t/2$. Then it implies that $Y^* = X$ because otherwise, we can improve $Y^*$ by selecting one more integer in $X$. Such an instance can be solved trivially in $O(n)$ time.}. By adjusting $\eps$ by a factor of $4$, we have that 
\[
    (1- \eps)\Sigma(Y^*) \leq \Sigma(Y) \leq (1 +\eps) t.
\]
The second phase of the algorithm recovers such a $Y$ from $\tilde{s}$.

\subsection{Sumset}
Let $(X_1,X_2)$ be a partition of $X$.  We have $\S_X = \S_{X_1} + \S_{X_2}$, where the sum of two sets is defined by the following.
\[
    A + B = \{a + b : a \in A, b\in B\}.
\]
The set $A + B$ is called the sumset of $A$ and $B$.  It is well-known that the sumset of two integer sets can be computed via the classical convolution algorithm based on Fast Fourier Transformation (FFT). 
\begin{lemma}\label{lem:fft}
    Let $u$ be a positive integer. Let $A$ and $B$ be two subsets of $[0,u]$. We can compute their sumset $A + B$ in $O(u\log u)$ time.
\end{lemma}

When the sumset has a small size, sparse convolution algorithms whose running time is linear in the output size may be used. Although there are faster randomized algorithms, to ease the analysis, we use the following deterministic algorithm due to Bringmann, Fischer, and Nakos~\cite{BFN22}.
\begin{lemma}[\normalfont \cite{BFN22}]\label{lem:sparse-fft}
    Let $u$ be a positive integer. Let $A$ and $B$ be two subsets of $[0,u]$. We can compute their sumset $A + B$  in $O(|A + B| \log^5 u\,\mathrm{polyloglog}\,u)$ time.
\end{lemma}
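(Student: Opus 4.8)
Lemma~\ref{lem:sparse-fft} is quoted verbatim from \cite{BFN22}, so strictly speaking nothing needs to be reproved; below I sketch the route one would take to establish it, using only the dense convolution of Lemma~\ref{lem:fft} as a primitive. Since we only need the \emph{set} $A+B$ and not the multiplicities, it suffices to decide, for each integer $s$, whether $s\in A+B$; write $k=|A+B|$, so the goal is a running time that is near-linear in $k$. As $k$ is not known in advance, I would run the core routine below with successive guesses $\bar k = 2,4,8,\dots$ for an upper bound on $k$, designing each run so that it certifies whether $\bar k \geq k$; because the cost of a run is near-linear in $\bar k$, a geometric-series argument shows the total cost is dominated by the final, correct guess.

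The core routine hashes down to a linear-size universe. Fix a modulus $m=\Theta(\bar k)$, reduce every element of $A$ and of $B$ modulo $m$, and form the two $0/1$ indicator vectors in $\{0,1\}^m$; their cyclic convolution over $\mathbb{Z}_m$ can be computed in $O(m\log m)$ time via Lemma~\ref{lem:fft} after a standard padding to avoid wraparound, and it reveals for every residue $r$ whether some pair $(a,b)\in A\times B$ satisfies $a+b\equiv r \pmod m$. To turn a residue back into an actual element of $[0,2u]$, I would additionally convolve \emph{weighted} copies of the vectors, placing the value $a$ (resp.\ $b$) at coordinate $a \bmod m$ (resp.\ $b\bmod m$): at any residue that is hit by exactly one pair $(a,b)$, the ratio of the weighted entry to the plain entry equals $a+b$, recovering that element exactly. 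Residues suffering collisions are resolved by repeating with fresh hash functions so that, over $O(\log u)$ rounds, every element of $A+B$ is isolated in at least one round; a final verification pass over the $O(k)$ recovered candidates (e.g.\ one more convolution restricted to that candidate set) both removes spurious elements and tells us whether the size guess was large enough.

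The only place randomness enters is the choice of hash functions guaranteeing few collisions, and I expect this to be the main obstacle: a fully deterministic near-linear algorithm must replace the random modulus by an explicit, efficiently enumerable family of (almost) linear hash functions together with an argument that for every $\bar k$-sparse instance some member of the family has few collisions, so that it can be found by exhaustive search or conditional expectations within the time budget. This derandomization is exactly what \cite{BFN22} supplies, and it is responsible for the $\log^5 u\,\mathrm{polyloglog}\,u$ overhead in the stated bound; with randomization the same scheme already gives an $O(k\log^2 u)$-type Las Vegas bound by a Cole--Hariharan-style analysis.
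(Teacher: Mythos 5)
The paper does not prove this lemma at all; it invokes \cite{BFN22} as a black box, exactly as you note in your first sentence, so there is no in-paper argument to compare your sketch against. Your outline of the underlying sparse-convolution machinery is accurate in its main ideas: size-doubling over guesses $\bar k$, hashing $A$ and $B$ modulo $m=\Theta(\bar k)$ and performing a dense convolution of the resulting short vectors via Lemma~\ref{lem:fft}, pairing an unweighted indicator convolution with a value-weighted one so that the ratio at an isolated residue reveals the corresponding element of $A+B$, repeating with fresh moduli over $O(\log u)$ rounds to isolate every element with high probability, and a final verification pass; and you correctly identify that the $\log^5 u\,\mathrm{polyloglog}\,u$ overhead in the deterministic bound comes from replacing the random hash by an explicit family plus a search/conditional-expectations step, which is precisely the contribution of \cite{BFN22}. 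Two small points worth flagging if you ever needed to make this rigorous: first, after hashing to $\mathbb{Z}_m$ you genuinely want the cyclic convolution, not a padded linear one (or you must fold the padded result back modulo $m$), since wraparound is intrinsic to the hash; second, the isolation argument needs some care because $A+B$ is what you are trying to compute, so ``few collisions among the $k$ target sums'' has to be argued via the sizes of $A$, $B$, and the current guess $\bar k$ rather than via $A+B$ directly. Neither affects the validity of citing the lemma, and your proposal is an appropriate treatment of a result the paper itself uses as an external primitive.
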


Lemma~\ref{lem:fft} immediately implies a few simple approaches for approximating the sumset of two sets.

\begin{lemma}\label{lem:approx-k-fft}
    Let $u$ be a positive integer. Let $A_1,\ldots, A_{\ell}$ be subsets of $[0,u]$ with total size $k$. For any $\eps < 1$, in $O(k + \frac{\ell^2}{\eps}\log \frac{\ell}{\eps})$ time, we can compute a set $S$ of size $O(\frac{1}{\eps})$ that approximates $(A_1+\cdots+A_{\ell})[0, u]$ with additive error $\eps u$.
\end{lemma}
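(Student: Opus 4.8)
The plan is to generalize the two-set construction of Lemma~\ref{lem:approx-two-fft} by processing the sets one at a time in a chain, while (a) splitting the $\eps u$ error budget across all $\ell$ sets and (b) truncating to $[0,u]$ after every partial sumset so that no intermediate set ever leaves a domain of size $O(\ell/\eps)$. First I would round every element of every $A_i$ down to the nearest multiple of $\eps u/(4\ell)$; reading the input and performing this rounding costs $O(k)$. After dividing all elements by $\eps u/(4\ell)$, each $A_i$ becomes a subset of $[0,4\ell/\eps]$. Replacing $a_i\in A_i$ by its rounded-down value $a_i'$ decreases each coordinate by less than $\eps u/(4\ell)$, so for any choice $a_1\in A_1,\dots,a_\ell\in A_\ell$ we have $a_1'+\cdots+a_\ell' \leq a_1+\cdots+a_\ell$ with difference at most $\ell\cdot \eps u/(4\ell)=\eps u/4$.

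Next I would compute $S_1:=A_1$ and, for $i=2,\dots,\ell$, set $S_i:=(S_{i-1}+A_i)\cap[0,u]$ (with $u$ in scaled coordinates, i.e.\ $4\ell/\eps$), using the FFT-based convolution of Lemma~\ref{lem:fft} at each step. Because the truncation to $[0,u]$ is applied every time, both $S_{i-1}$ and $A_i$ always lie in $[0,4\ell/\eps]$, so each of the $\ell-1$ convolutions runs in $O(\frac{\ell}{\eps}\log\frac{\ell}{\eps})$ time, for a total of $O(\frac{\ell^2}{\eps}\log\frac{\ell}{\eps})$; no re-rounding is done between steps (the scaled elements are integers and stay on the same grid), so the only error incurred so far is the $\eps u/4$ from the first step. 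The correctness point here is that if $a_1'+\cdots+a_\ell'\leq u$ then every prefix sum is also $\leq u$, so truncation never discards a sum we need to keep, while conversely every surviving element of $S_\ell$ is a genuine sum $a_1'+\cdots+a_\ell'$ with each $a_i'$ a rounded element of $A_i$.

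Finally I would let $S$ be the set obtained by rounding each element of $S_\ell$ (back in the original coordinates, where $S_\ell\subseteq[0,u]$) down to the nearest multiple of $\eps u/2$; this loses at most an additional $\eps u/2$ and gives $|S|=O(1/\eps)$. To verify Definition~\ref{def:approx}: for (i), given $s=a_1+\cdots+a_\ell\in(A_1+\cdots+A_\ell)[0,u]$, the first step yields $s':=a_1'+\cdots+a_\ell'$ with $s'\leq s\leq u$ and $s-s'\leq \eps u/4$, the chain keeps $s'\in S_\ell$, and the last step rounds $s'$ to some $\tilde s\leq s'$ with $s'-\tilde s\leq \eps u/2$, so $0\leq s-\tilde s\leq 3\eps u/4\leq \eps u$. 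For (ii), any $\tilde s\in S$ comes from some $s'=a_1'+\cdots+a_\ell'\in S_\ell$ with $a_i'\leq a_i\in A_i$, so $s:=a_1+\cdots+a_\ell\in A_1+\cdots+A_\ell$ and $0\leq s-\tilde s=(s-s')+(s'-\tilde s)\leq \eps u/4+\eps u/2\leq \eps u$. Collecting the costs gives $O(k+\frac{\ell^2}{\eps}\log\frac{\ell}{\eps})$.

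I do not expect a genuine obstacle in this lemma; the only points that require care are apportioning the error — rounding the $A_i$ to the fine grid $\eps u/(4\ell)$ rather than $\eps u/2$, so that the accumulated error over $\ell$ sets stays within budget — and justifying the per-step truncation to $[0,u]$, which is exactly what keeps the convolution domain (and hence the per-step running time) at $O(\ell/\eps)$ without destroying any sum relevant to $(A_1+\cdots+A_\ell)[0,u]$. (A balanced binary tree of merges via Lemma~\ref{lem:approx-two-fft}, with per-merge error $\eps u/\ell$ to absorb the $\log\ell$ levels of doubling, would give the same bound, but the linear chain above is simpler to analyze.)
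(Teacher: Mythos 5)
Your proof is correct, and it takes a genuinely different route from the paper. The paper builds a balanced binary tree over $A_1,\ldots,A_\ell$ and merges siblings via Lemma~\ref{lem:approx-two-fft}, re-rounding at every internal node; it then adjusts $\eps$ by a factor of $\ell$ so that the $\Theta(\ell)$ per-merge errors of $\eps u/\ell$ each sum to at most $\eps u$, and each merge runs on a domain of size $O(\ell/\eps)$. Your chain instead rounds exactly once at the start onto a grid of spacing $\eps u/(4\ell)$, folds in one $A_i$ at a time via exact FFT, truncates to $[0,u]$ after each step, and rounds once more at the end. The observation that replaces the paper's per-merge re-rounding is the truncation invariant: because all elements are nonnegative, if $a_1'+\cdots+a_\ell'\leq u$ then every prefix sum is $\leq u$, so the truncation never discards a sum that is needed; this is exactly what keeps the per-step domain at $O(\ell/\eps)$ and hence the per-step cost at $O(\frac{\ell}{\eps}\log\frac{\ell}{\eps})$. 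Both devices serve the same purpose and yield the same bound and output size; yours has the small advantage that the error bookkeeping is concentrated in two rounding steps rather than spread over $\Theta(\ell)$ merges, while the paper's has the advantage of being a direct corollary of its already-proved two-set lemma. Your closing parenthetical is right that a binary tree with per-merge error $\eps u/\ell$ gives the same bound — that is precisely the paper's proof.
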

\begin{proof}
    We first round the elements of all $A_i$'s down to the nearest multiple of $\eps u$, and delete duplicate elements in each $A_i$. This step takes $O(k + \frac{\ell}{\eps})$ time and incurs an additive error of at most $\eps u$ for each $A_i$. The we can compute $(A_1+\cdots+A_{\ell})[0, u]$ via Lemma~\ref{lem:fft}, and the total running time is $O(\frac{\ell}{\eps}\log \frac{1}{\eps})$. The total additive errors is $\ell \eps u$.  By adjusing $\eps$ by a factor of $\ell$, the total additive error becomes $\eps u$ and the running becomes $O(k + \frac{\ell^2}{\eps}\log \frac{\ell}{\eps})$.

\end{proof}

\subsection{Problem Reduction}
We can reduce our task to the following problem $\mathrm{RP}(\beta)$ where  $\beta\in[1,\frac{1}{\eps}]$ be an integer.

\begin{definition}[The Reduced Problem $\mathrm{RP}(\beta)$]
    Let $X$ be a multi-set of integers from $[\frac{1}{\eps}, \frac{2}{\eps}]$ such that $\Sigma(X) \geq \frac{4\beta}{\eps}$.  (i) Compute a set $\widetilde{S}$ of size $\widetilde{O}(\frac{1}{\eps})$ that approximates $\S_X[\frac{\beta}{\eps}, \frac{2\beta}{\eps}]$ with additive error $\delta= {O}(\beta\,\, \mathrm{polylog}\,\frac{n}{\eps})$. (ii) Given any $\tilde{s} \in \widetilde{S}$, recover a subset $Y \subseteq X$ such that $\tilde{s}-\delta \leq \Sigma(Y) \leq \tilde{s}+ \delta$.
\end{definition}
\begin{restatable}{lemma}{lemreduce}
\label{lem:reduce}
    There is an $\widetilde{O}(n + \frac{1}{\eps})$-time weak approximation scheme for Subset Sum if, for any $\beta\in[1,\frac{1}{\eps}]$, the reduced problem $\mathrm{RP(\beta)}$ can be solved in $\widetilde{O}(n + \frac{1}{\eps})$ time. 
\end{restatable}
The proof of the following lemma is deferred to Appendix~\ref{sec:reduce}. Basically, we can preprocess the instance to get rid of tiny integers less than $\eps t$. Then we scale the instance by $\eps^2 t$. After that, all integers in $X$ are within $[\frac{1}{\eps}, \frac{1}{\eps^2}]$. Then we partition $X$ into $\log\frac{1}{\eps}$ groups so that the integers within the same group differ by a factor of at most $2$. We deal with each group separately and merge their results by Lemma~\ref{lem:approx-k-fft}. By further scaling, we can assume that $x \in [\frac{1}{\eps}, \frac{2}{\eps}]$ for each group.

In what follows, we present an algorithm for part (i) of the reduced problem $\mathrm{RP}(\beta)$ in Section~\ref{sec:alg-determine}, and an algorithm for part (ii) in Section~\ref{sec:algo-solution}.

\section{Approximating the Set of Subset Sums}\label{sec:alg-determine}
Lemma~\ref{lem:approx-k-fft} already implies an $O(\frac{n^2}{\eps}\log \frac{n}{\eps})$-time algorithm for approximating $\S_X[0,t]$: recall that $n = |X|$, partition $X$ into $n$ singletons, and merge via Lemma~\ref{lem:approx-k-fft}. Recall that the algorithm works in a tree-like manner. The inefficiency of this algorithm is mainly due to the large node number of the tree and the large cost at each node. To improve the running time, our algorithm utilizes the following two techniques.
\begin{enumerate}[label={(\roman*)}]
    \item We reduce the number of tree nodes via the two-layer color-coding from Bringmann~\cite{Bri17}. 

    \item Before a new level of the tree is computed, we estimate the total size of the nodes in this level. Only when the total size is small do we compute this level, and we can compute efficiently via output-sensitive algorithms (Lemma~\ref{lem:sparse-fft}). When the total size is large, we show that a good approximation can be immediately obtained via additive combinatorics tools.
\end{enumerate}

\subsection{Color Coding}
Recall that our goal is to approximate $\S_X[\frac{\beta}{\eps}, \frac{2\beta}{\eps}]$ and that $x \in [\frac{1}{\eps}, \frac{2}{\eps}]$ for any $x \in X$. 
For any $Y\subseteq X$ with $\Sigma(Y) \in \S_X[\frac{\beta}{\eps}, \frac{2\beta}{\eps}]$, we have $|Y| \leq 2\beta$. Color-coding is a technique for dealing with such a situation where the subsets under consideration are small.

Let $Y$ be any subset of $X$ with $|Y| \leq k$.  Let $X_1, \ldots, X_{k^2}$ be a random partition of $X$. By standard balls and bins analysis, with probability at least $\frac{1}{4}$, $|Y \cap X_i| \leq 1$ for all $i \in [1, k^2]$, which implies that $\Sigma(Y) \in (X_1 \cup \{0\}) + \cdots + (X_{k^2}\cup \{0\})$.  This probability can be boosted to $1-q$, if we repeat the above procedure for $\lceil\log_{4/3} q \rceil$ times and take the union of the resulting sumsets.

Bringmann~\cite{Bri17} proposed a two-layer color coding technique that reduces the number of subsets in the partition by roughly a factor of $k$.  Let $q$ be the target error probability. The first layer of color coding randomly partitions $X$ into roughly $k/\log(k/q)$ subsets. With probability at least $1-q/2$, each subset contains at most $6\log(k/q)$ elements from $Y$. The second layer randomly partitions each of the subsets obtained in this first layer into $36\log^2(k/q)$ subsets.  The second-layer partition will be repeated for $\lceil \log \frac{k}{q} \rceil$ times in order to boost the success probability to $1-q/2$. See Algorithm~\ref{alg:orignal-color-coding} for details.

\begin{algorithm}
\caption{$\mathtt{ColorCoding}(X, k, q)$~\cite{Bri17}}
\label{alg:orignal-color-coding}
    \begin{algorithmic}[1]
    \Statex \textbf{Input:} A multi-set $X$ of integers, a positive integer $k$, and a target error probability $q$
    \Statex \textbf{Output:} $r$ partitions $\{X^j_{1,1}, \ldots, X^j_{1,g}, \ldots, X^j_{m,1}, \ldots, X^j_{m, g}\}_{j \in [1, r]}$ of $X$
    \State $m:= k/\log(k/q)$ rounded up to be next power of $2$\;
    \State $g:= 36\log^2(k/q)$ rounded up to the next power of $2$\;
    \State $r:=\lceil \log \frac{k}{q} \rceil$\;
    \State Randomly partition $X$ into $m$ subsets $X_1, \ldots, X_m$\;
    \For{$j =1,\ldots,r$}
        \For{$i = 1,\ldots,m$}
           \State Randomly partition $X_i$ into $g$ subsets $X^j_{i,1}, \ldots, X^j_{i,g}$
        \EndFor
    \EndFor
  \State \Return the $r$ partitions $\{X^j_{1,1}, \ldots, X^j_{1,g}, \ldots, X^j_{m,1},$ $ \ldots, X^j_{m, g}\}_{j \in [1, r]}$ of $X$. 
  \end{algorithmic}
\end{algorithm}

\begin{lemma}[\cite{Bri17}]\label{lem:original-color-coding}
    Let $m,g,r$ be defined as that in Algorithm~\ref{alg:orignal-color-coding}. Let $\{X^j_{1,1}, \ldots, X^j_{1,g}, \ldots, X^j_{m,1},$ $ \ldots, X^j_{m, g}\}_{j \in [1, r]}$ be the partitions of $X$ returned by $\mathtt{ColorCoding}(X, k, q)$ in Algorithm~\ref{alg:orignal-color-coding}.  For $j \in [1,r]$, let $S^j_i = (X^j_{i,1}\cup \{0\}) + \cdots + (X^j_{i,g} \cup \{0\})$. For any subset $Y\subseteq X$ with $|Y| \leq k$, with probability at least $1 - q$,
    \[
        \Sigma(Y) \in \bigcup_{j=1}^r S^j_1 +  \cdots + \bigcup_{j=1}^r S^j_m.
    \]
\end{lemma}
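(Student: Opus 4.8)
The plan is to fix an arbitrary $Y \subseteq X$ with $|Y| \le k$ and show that a single ``good event'' for $Y$ occurs with probability at least $1-q$, after which the stated sumset membership follows deterministically. Throughout write $\mu := \log(k/q)$, so the algorithm uses $m \ge k/\mu$ first-layer parts, $g \ge 36\mu^2$ second-layer parts per first-layer part, and $r = \lceil \mu \rceil$ repetitions of the second layer; the first-layer partition is drawn only once.

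For the first layer, I would fix $i \in [1,m]$ and note that $|Y \cap X_i|$ is a sum of $|Y| \le k$ independent indicators of expectation $1/m \le \mu/k$, so its mean is at most $\mu$. A Chernoff-type upper tail (e.g.\ $\Pr[Z \ge t] \le (e\,\mathbb{E}Z/t)^{t}$ with $t = 6\mu$) gives $\Pr[\,|Y\cap X_i| > 6\mu\,] \le (q/k)^{c}$ for an absolute constant $c > 1$. A union bound over the $m \le k$ parts then yields that, with probability at least $1 - q/2$, every part satisfies $|Y\cap X_i| \le 6\mu$; call this event $E_1$. For the second layer I would condition on $E_1$ and fix $i$, so $k_i := |Y\cap X_i| \le 6\mu$ and hence $g \ge 36\mu^2 \ge k_i^2$. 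In one repetition $j$, the $k_i$ elements of $Y\cap X_i$ land independently and uniformly in $g$ bins, and a birthday estimate gives $\Pr[\text{all distinct}] = \prod_{t=1}^{k_i-1}(1 - t/g) \ge 1 - \binom{k_i}{2}/g \ge 1/2$. Since the $r$ repetitions are independent, the probability that \emph{no} repetition perfectly splits $Y\cap X_i$ is at most $2^{-r} \le 2^{-\mu} = q/k$; union bounding over $i \in [1,m]$, with probability at least $1 - q/2$ (conditional on $E_1$) there is for every $i$ an index $j(i)$ with $|Y \cap X^{j(i)}_{i,l}| \le 1$ for all $l \in [1,g]$. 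Call this event $E_2$; then $\Pr[E_1 \cap E_2] \ge 1 - q$.

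It remains to argue, deterministically on $E_1 \cap E_2$, that $\Sigma(Y)$ lies in the claimed sumset. Fix $i$ and take $j(i)$ from $E_2$: for each $l$ we have $\Sigma(Y \cap X^{j(i)}_{i,l}) \in X^{j(i)}_{i,l}\cup\{0\}$ because $|Y \cap X^{j(i)}_{i,l}| \le 1$, and summing over $l$ (the $X^{j(i)}_{i,l}$ partition $X_i$) gives $\Sigma(Y\cap X_i) \in (X^{j(i)}_{i,1}\cup\{0\}) + \cdots + (X^{j(i)}_{i,g}\cup\{0\}) = S^{j(i)}_i \subseteq \bigcup_{j=1}^r S^j_i$. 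Summing over $i$ and using that the $X_i$ partition $X$, we get $\Sigma(Y) = \sum_{i=1}^{m}\Sigma(Y\cap X_i) \in \sum_{i=1}^{m}\bigcup_{j=1}^{r} S^j_i = \bigcup_{j=1}^{r} S^j_1 + \cdots + \bigcup_{j=1}^{r} S^j_m$, which is exactly the assertion.

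The only genuinely delicate part is the quantitative bookkeeping: since the first-layer partition is not repeated, its failure probability must beat the number of parts $m \approx k/\mu$ directly, which is why one needs the $6\mu$ slack together with a sufficiently strong Chernoff tail; and one must check that $r = \lceil\log(k/q)\rceil$ repetitions suffice in the second layer given only a constant success probability per repetition. The rounding of $m$ and $g$ up to powers of two only makes the bins more numerous and hence only helps these estimates, so it can be absorbed into the constants. Everything else is a union bound and the elementary identity for sums over a partition.
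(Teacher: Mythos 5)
The paper cites this lemma directly to Bringmann~\cite{Bri17} without reproducing a proof, so there is no in-paper argument to compare against; your proof correctly reconstructs the standard two-layer color-coding analysis from that source (Chernoff bound on the first layer to guarantee $|Y\cap X_i| \le 6\log(k/q)$, birthday bound per repetition on the second layer amplified over $r$ rounds, union bounds, and the deterministic reassembly of $\Sigma(Y)$). The constant-factor bookkeeping you flag (e.g.\ whether the second-layer union bound yields $q/2$ or merely $O(q)$) is indeed the only delicate point, but it resolves as you say once $k/q$ is not tiny, so the proof is sound.
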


By adjusting the error probability by a factor of $\frac{2\beta}{\eps}$, we can have, with probability at least $1-q$, that $\bigcup_{j=1}^r S^j_1 +  \cdots + \bigcup_{j=1}^r S^j_m$ contains all the elements of $\S_X[\frac{\beta}{\eps}, \frac{2\beta}{\eps}]$.

For technical reasons, we need an extra property that for any partition of $X$ obtained from coloring-coding, the sum of the maximum elements of the subsets is large. That is, we need, for all $j$,
\[
    \max(X^j_{1,1}) + \cdots + \max(X^j_{1,g}) + \cdots +\max(X^j_{m,1}) + \cdots + \max(X^j_{m, g})  \geq \frac{4\beta}{\eps},
\]
where the maximum of an empty set is defined to be $0$.  We claim that this property can be assured with a slight modification of the color coding algorithm. Details will be provided in Appendix~\ref{app:color-coding}.

\begin{restatable}{lemma}{lemcolorcoding}
\label{lem:mod-color-coding}
    Let $m$ be $4\beta/\log \frac{4\beta^2}{\eps q^*}$ rounded up to next power of $2$, let $g$ be $36\log^2\frac{4\beta^2}{\eps q^*}$ rounded up to next power of $2$, and let $r := \lceil \log \frac{4\beta^2}{\eps q^*} \rceil$. In $\widetilde{O}(n + \frac{1}{\eps})$ time, we can obtain $r$ partitions $\{X^j_{1,1}, \ldots, X^j_{m, g}\}_{j \in [1, r]}$ of $X$ such that the following is true.  For $i\in [1, m]$ and $j \in [1,r]$, let $S^j_i = (X^j_{i,1}\cup \{0\}) + \cdots + (X^j_{i,g} \cup \{0\})$. With probability at least $1 - q^*$,
    \[
        \S_X[\frac{\beta}{\eps}, \frac{2\beta}{\eps}] = \left(\bigcup_{j=1}^r S^j_1 +  \cdots + \bigcup_{j=1}^r S^j_{m}\right)[\frac{\beta}{\eps}, \frac{2\beta}{\eps}].
    \]
    Moreover, for every $j\in [1,r]$,
    \[
        \max(X^j_{1,1}) + \cdots + \max(X^j_{1,g}) + \cdots +\max(X^j_{m,1}) + \cdots + \max(X^j_{m, g}) \geq \frac{4\beta}{\eps}. 
    \]   
\end{restatable}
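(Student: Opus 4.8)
The plan is to produce the desired family of partitions by running $\mathtt{ColorCoding}(X, k, q)$ with $k := 4\beta$ and $q := \eps q^* / \beta$, and then lightly modifying its output; note that with these choices $k/q = \tfrac{4\beta^2}{\eps q^*}$, so the internal parameters of Algorithm~\ref{alg:orignal-color-coding} are precisely the $m, g, r$ fixed just before the statement. Taking $k = 4\beta$ is safe because every $Y \subseteq X$ with $\Sigma(Y) \le \tfrac{2\beta}{\eps}$ has $|Y| \le 2\beta$ (each element of $X$ is at least $\tfrac1\eps$). The running time is $\widetilde{O}(n + \tfrac1\eps)$: the procedure only samples $r = \widetilde{O}(1)$ random partitions of $X$ into $mg = \widetilde{O}(\beta)$ blocks each, plus an $O(\tfrac1\eps)$-time reservation step described below.

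I would first prove the sumset identity. For ``$\subseteq$'', union-bound Lemma~\ref{lem:original-color-coding}: there are at most $\tfrac{2\beta}{\eps}$ integers in $\S_X[\tfrac{\beta}{\eps}, \tfrac{2\beta}{\eps}]$, and for each we fix a witness $Y$ with $|Y| \le 2\beta \le k$; Lemma~\ref{lem:original-color-coding} places $\Sigma(Y)$ into $\bigcup_{j=1}^{r} S^j_1 + \cdots + \bigcup_{j=1}^{r} S^j_m$ except with probability $q$, so with probability at least $1 - \tfrac{2\beta}{\eps}q = 1 - 2q^*$ all witnesses land there, and $2q^*$ is again of the form $(n + \tfrac1\eps)^{-O(1)}$. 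For ``$\supseteq$'' no randomness is needed: each $S^j_i$ consists of subset sums of the first-layer block $X_i$, the blocks $X_1, \dots, X_m$ are pairwise disjoint, so every element of $\bigcup_j S^j_1 + \cdots + \bigcup_j S^j_m$ is a subset sum of $X$ and hence lies in $\S_X$; intersecting both sides with $[\tfrac{\beta}{\eps}, \tfrac{2\beta}{\eps}]$ yields the claimed identity.

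Next I would secure the ``max-sum'' property by reserving sub-blocks for a short sorted prefix of $X$. Sort $X$ as $x_{(1)} \ge x_{(2)} \ge \cdots$ and let $p$ be minimal with $x_{(1)} + \cdots + x_{(p)} \ge \tfrac{4\beta}{\eps}$; since $\Sigma(X) \ge \tfrac{4\beta}{\eps}$ and every element is in $[\tfrac1\eps, \tfrac2\eps]$, we get $2\beta \le p \le 4\beta < mg$. Modify the coloring so that $x_{(1)}, \dots, x_{(p)}$ occupy $p$ pairwise distinct sub-blocks, each as a singleton, spread over as few first-layer blocks as possible, and so that all remaining elements of $X$ are placed among the other sub-blocks essentially as in Algorithm~\ref{alg:orignal-color-coding}. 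Then, for every $j$, the $p$ sub-blocks holding $x_{(1)}, \dots, x_{(p)}$ are pairwise distinct and their maxima are at least $x_{(1)}, \dots, x_{(p)}$ respectively, so the sum of all $mg$ sub-block maxima is at least $x_{(1)} + \cdots + x_{(p)} \ge \tfrac{4\beta}{\eps}$ --- and this holds deterministically, as required.

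The step I expect to be the main obstacle is showing that this reservation does not spoil the guarantee of Lemma~\ref{lem:original-color-coding}, since it perturbs the distribution of the partitions. The points to exploit: reserved sub-blocks are singletons, so they hold at most one element of any target $Y$ automatically, and a first-layer block all of whose occupied sub-blocks are singletons has $S^j_i$ equal to the full set of subset sums of $X_i$, whence $\Sigma(Y \cap X_i) \in \bigcup_j S^j_i$ trivially; meanwhile every non-reserved element is still placed by (essentially) the original random process restricted to the non-reserved sub-blocks. Since $p = O(\beta)$ while there are $mg = \widetilde{\Omega}(\beta)$ sub-blocks, only a lower-order part of the randomness is affected, so for a fixed $Y$ with $|Y| \le 2\beta$ the first-layer Chernoff bound and the second-layer / $r$-repetition argument of~\cite{Bri17} both go through with the same parameters up to constants (the relevant expected loads change by at most a constant factor, absorbed by the constants in $m$ and $g$). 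Hence, with probability at least $1-q$, every sub-block holds at most one element of $Y$, and, as in Lemma~\ref{lem:original-color-coding}, $\Sigma(Y) \in \bigcup_j S^j_1 + \cdots + \bigcup_j S^j_m$; re-running the union bound of the second paragraph completes the proof. The precise recipe for reserving the sub-blocks (including a small-$m$ edge case), the verification of the constant-factor load changes, and the elementary ``$\supseteq$'' containment are routine bookkeeping --- which is why the modification is genuinely ``slight'' and its full analysis can be deferred to Appendix~\ref{app:color-coding}.
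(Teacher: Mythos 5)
Your overall plan — run a ColorCoding-like scheme and then reserve some sub-blocks as singletons to force the max-sum property — is in the same spirit as the paper. But the concrete mechanism you choose creates a re-analysis burden that the paper deliberately avoids, and you yourself flag it as ``the main obstacle'' without actually closing it; as written, that is a genuine gap.

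Concretely: you reserve singleton sub-blocks for the sorted prefix $x_{(1)},\dots,x_{(p)}$ \emph{inside} the same blocks produced by $\mathtt{ColorCoding}(X,4\beta,q)$, ``spread over as few first-layer blocks as possible,'' and then place the rest of $X$ essentially by the original random process restricted to the unreserved sub-blocks. This conditions/perturbs the distribution of the partition, so Lemma~\ref{lem:original-color-coding} no longer applies verbatim: you would need to redo the first-layer balls-and-bins Chernoff bound and the second-layer argument for the modified process, and your appeal to ``expected loads change by at most a constant factor, absorbed by the constants in $m$ and $g$'' is a plausibility argument, not a proof. Since the entire point of Lemma~\ref{lem:mod-color-coding} is to deliver a correctness guarantee with an explicit failure probability, that unverified claim is the missing piece.

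The paper sidesteps this entirely with a cleaner decomposition (Algorithm~\ref{alg:mod-color-coding}). It sets $m = 2k/\log(k/q)$ (with $k=2\beta$, twice the original numerator), splits $X$ into an arbitrary $X_1$ of size exactly $mg/2$ and $X_2 = X\setminus X_1$, gives $X_1$ a completely deterministic ``one element per sub-block'' partition using the first $m/2$ first-layer blocks, and runs the \emph{unmodified} $\mathtt{ColorCoding}(X_2,k,q)$ on $X_2$ using the remaining $m/2$ first-layer blocks (these two calls never interact). Any witness $Y$ then splits as $Y_1\cup Y_2$ with $Y_i\subseteq X_i$: $\Sigma(Y_1)$ is captured trivially by the singletons, and $\Sigma(Y_2)$ is captured with probability $\geq 1-q$ by Lemma~\ref{lem:original-color-coding} applied to $X_2$ unchanged (since $|Y_2|\leq |Y|\leq k$). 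No re-derivation of concentration bounds is needed. The max-sum property is also immediate: $|X_1|=mg/2\geq\tfrac{4\beta}{\eps}\cdot\eps=4\beta$ (in fact $mg/2\geq 72\beta$), each element is at least $\tfrac1\eps$, so the deterministic half alone contributes $\geq\tfrac{4\beta}{\eps}$, without any sorting. If you wanted to salvage your version you would essentially have to reinvent this separation anyway, so I would recommend adopting the two-piece split rather than the in-place reservation.

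Two minor points: (a) your failure bound comes out as $1-2q^*$ rather than $1-q^*$; you note this yourself, and it is easily repaired by halving $q$ as the paper does ($q=q^*\eps/(2\beta)$). (b) Your ``$\supseteq$'' direction is fine and matches the paper; the content of the lemma is really the ``$\subseteq$'' direction plus the max-sum property, which is exactly where your argument is incomplete.
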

The proof of the lemma is deferred to Appendix~\ref{app:color-coding}.

Throughout the rest of the paper, we fix $q^*: = (\frac{n}{\eps})^{-O(1)}$, $m$ to be $4\beta/\log \frac{4\beta^2}{\eps q^*}$ rounded up to next power of $2$,  $g$ to be $36\log^2\frac{4\beta^2}{\eps q^*}$ rounded up to next power of $2$, and $r := \lceil \log \frac{4\beta^2}{\eps q^*} \rceil$. 

\subsection{Sparse and Dense Tree Levels}\label{subsec:tree-like-computing}
Let $\{X^j_{1,1}, \ldots, X^j_{m, g}\}_{j \in [1, r]}$ be the $r$ partitions of $X$ obtained via Lemma~\ref{lem:mod-color-coding}. To simplify the notation, we assume that every subset in the $r$ partitions of $X$ is a set (rather than a multi-set), and that every subset contains $0$. This assumption is without loss of generality, because when we define $S^j_i$ in Lemma~\ref{lem:mod-color-coding}, we always add $0$ to the subsets of $X$ . To approximate $\S_X[\frac{\beta}{\eps}, \frac{2\beta}{\eps}]$, it suffices to compute 
\(
    S = \bigcup_{j=1}^r S^j_1 + \cdots + \bigcup_{j=1}^r S^j_{m},
\)
where $S^j_i$ is defined in Lemma~\ref{lem:mod-color-coding}.

The procedure for computing $S$ can be viewed as a tree.  The level $0$ of the tree has $mg$ leaves that represent the $mg$ subsets of $X$ in the $j$th partition. Given level $i$, we compute level $i+1$ by taking the sumset of every two nodes in level $i$.  Therefore, in level $\log g$, we will obtain $S^j_1, \ldots, S^j_m$. The computation from level $0$ to level $g$ will be repeated for $r$ times, so we can get $S^j_1, \ldots, S^j_m$ for all $j$. Then we take union and get a new level $\log g$ consisting of $\cup_j S^j_1, \ldots, \cup_j S^j_m$. Then we continue to compute level $1 + \log g, 2 + \log g, \ldots$ until we get a single node (set) in level $\log mg$.

\begin{observation}\label{obs:level-property}
    Consider some level $h$ of the tree.  It has $\ell = \frac{mg}{2^h}$ nodes, say $A_1, \ldots, A_{\ell}$.  For each $i$, $A_i$ is a subset of $[0, \frac{2^{h+1}}{\eps}]$ and $0 \in A_i$. Moreover, $\sum_{i=1}^{\ell} \max(A_i) \geq \frac{4\beta}{\eps}$.
\end{observation}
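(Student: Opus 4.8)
The plan is to establish the three assertions by structural induction on the tree, using two elementary facts about finite nonempty sets of integers: $\max(A+B)=\max(A)+\max(B)$ and $\max(A\cup B)=\max\{\max(A),\max(B)\}$, together with the convention $\max(\emptyset)=0$. The convention is harmless here: after the normalisation made just above the observation, every node of the tree contains $0$ (in particular no node is empty), so $\max(Y\cup\{0\})=\max(Y)$ in all cases, and the additive identity applies to all nodes.

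First the node count and the range. Since $m$ and $g$ were each rounded up to a power of $2$, $mg$ is a power of $2$; level $0$ consists of $mg$ leaves, and each successive level is obtained by pairing the nodes of the previous level and replacing every pair by a single sumset, which halves the count, so level $h$ carries $\ell=mg/2^h$ nodes. For the range: every leaf is a subset of $X\cup\{0\}\subseteq\{0\}\cup[\tfrac1\eps,\tfrac2\eps]\subseteq[0,\tfrac{2}{\eps}]=[0,2^{0+1}/\eps]$ and contains $0$ by construction; and if $A,B$ are level-$h$ nodes with $A,B\subseteq[0,2^{h+1}/\eps]$ and $0\in A\cap B$, then $A+B\subseteq[0,2^{h+2}/\eps]$ with $0=0+0\in A+B$, while a union of such sets again lies in $[0,2^{h+1}/\eps]$ and contains $0$. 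Hence by induction every node $A_i$ at level $h$ satisfies $A_i\subseteq[0,2^{h+1}/\eps]$ and $0\in A_i$.

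Finally the bound on the sum of maxima. I will prove the invariant that $\sum_{A}\max(A)\ge 4\beta/\eps$, where the sum ranges over all nodes $A$ of a given level (one color-coding copy for the bottom $\log g$ levels, the single merged copy at and above level $\log g$). The base case $h=0$ is exactly the ``moreover'' conclusion of Lemma~\ref{lem:mod-color-coding}, since $\max(X^j_{i,k}\cup\{0\})=\max(X^j_{i,k})$ and the leaves at level $0$ are precisely the sets $X^j_{i,k}\cup\{0\}$. For a level obtained from the previous one purely by pairing and summing, the pairs partition the previous level and $\max(A+B)=\max(A)+\max(B)$, so the sum of maxima is unchanged and the invariant is inherited. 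For the one step at which the $r$ copies of level $\log g$ are merged, $\max\bigl(\bigcup_{j}S^j_i\bigr)\ge\max(S^{j_0}_i)$ for any fixed $j_0$, hence $\sum_i\max\bigl(\bigcup_j S^j_i\bigr)\ge\sum_i\max(S^{j_0}_i)\ge 4\beta/\eps$ by the invariant for copy $j_0$. This gives $\sum_{i=1}^{\ell}\max(A_i)\ge 4\beta/\eps$ at every level.

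I do not expect a genuine obstacle: the observation is essentially a bookkeeping summary of the tree construction, and each part is a one-line consequence of the two max identities plus the fact that $0$ lies in every node. The only point deserving care is keeping the sum-of-maxima invariant stated per-copy below level $\log g$ and globally at and above it, and verifying that the union-merge step preserves it — which it does, because the maximum of a union dominates the maximum of each set being unioned.
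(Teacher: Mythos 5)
Your proof is correct. The paper states this as an unproved Observation---treating it as immediate from the tree construction and the ``moreover'' clause of Lemma~\ref{lem:mod-color-coding}---and your inductive argument (node count halving, range doubling with $0+0\in A+B$, $\max(A+B)=\max(A)+\max(B)$ preserving the sum of maxima across sumset levels, and the union-merge step being handled by $\max(\bigcup_j S^j_i)\geq\max(S^{j_0}_i)$) is precisely the bookkeeping the paper leaves implicit.
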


Given a level $h$, if we compute the sumset of two sets via standard FFT (Lemma~\ref{lem:fft}), it would require $O(\frac{mg}{2^h} \cdot \frac{2^{h}}{\eps} \log \frac{2^{h}}{\eps}) = \widetilde{O}(\frac{\beta}{\eps})$ time to obtain the next level. This is already too much as $\beta$ can be as large as $\frac{1}{\eps}$.

Our algorithm computes a new level only if the nodes in this level have a total size of roughly $\widetilde{O}(\frac{1}{\eps})$. We say a level is sparse in this case. A sparse level can computed in nearly $\widetilde{O}(\frac{1}{\eps})$ via the output-sensitive algorithm for sumset whose running time is linear in the output size (Lemma~\ref{lem:sparse-fft}). The only exception is level $0$: in $O(n)$ time, we can determine whether it is sparse or not, and we will try to compute level 1 only if level $0$ is sparse. Recall that there are $1 + \log mg$ levels and that the first $1+ \log g$ levels may be repeated but only for $r$ times. Both $1 + \log mg$ and $r$ are logarithmic in $n$, $\frac{1}{\eps}$ and $\beta$. So the total running time for computing $S$ will be $\widetilde{O}(n + \frac{1}{\eps})$, if all levels are sparse.  For dense levels, we cannot afford to compute them. Instead, we use additive combinatorics tools to show that if some level is dense, $\S_X$ must have a long sequence with a small difference between consecutive terms. Then $\S_X[\frac{\beta}{\eps}, \frac{2\beta}{\eps}]$ can be immediately approximated via the following lemma.

\begin{lemma}\label{lem:easy-approx}
    Let $\delta$ be a positive integer. Suppose that $S$ has a sequence $a_1 < \ldots < a_k$ such that  $a_i - a_{i-1} \leq \delta$ for any $i \in [2, k]$.  Then for any real numbers $w$ and $v$ such that $a_1 \leq w$ and $v \leq a_k$, the following set approximates $S[w, v]$ with additive error $\delta$.
    \[
        \{w + j\delta : j \in [0, \frac{v - w}{\delta}]\}
    \]
\end{lemma}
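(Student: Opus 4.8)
The plan is to unwind Definition~\ref{def:approx} directly for the candidate set $T := \{w + j\delta : j \in [0, \frac{v-w}{\delta}]\}$, where (consistent with our interval notation) $j$ ranges over the integers between $0$ and $\frac{v-w}{\delta}$. Write $\tilde s_j := w + j\delta$ for the element of $T$ with index $j$. Two things must be checked: condition (i), that every element of $S[w,v]$ is within $\delta$ of some element of $T$, and condition (ii), that every element of $T$ is within $\delta$ of some element of $S$.

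For condition (i), I would take an arbitrary $s \in S[w,v]$, so $w \leq s \leq v$, and set $j := \lfloor (s-w)/\delta\rfloor$. Since $s \geq w$ we get $j \geq 0$, and since $s \leq v$ we get $j \leq \lfloor (v-w)/\delta\rfloor$, so $\tilde s_j \in T$. From $j \leq (s-w)/\delta < j+1$ it follows that $s-\delta < \tilde s_j \leq s$, hence $\tilde s_j \in [s-\delta, s+\delta]$. Note this half of the proof uses only that $s$ lies in the real interval $[w,v]$, not the sequence $a_1 < \cdots < a_k$ at all.

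For condition (ii), the sequence is exactly what is needed. Given $\tilde s_j \in T$, from $j \geq 0$ and $j\delta \leq v-w$ we obtain $a_1 \leq w \leq \tilde s_j \leq v \leq a_k$, so $\tilde s_j$ lies in the range spanned by the sequence. Let $i$ be the least index with $a_i \geq \tilde s_j$. If $i = 1$, then $\tilde s_j = a_1 \in S$ and we are done. If $i \geq 2$, then $a_{i-1} < \tilde s_j \leq a_i$, and the gap bound $a_i - a_{i-1} \leq \delta$ yields $\tilde s_j \leq a_i \leq a_{i-1} + \delta < \tilde s_j + \delta$; thus $s := a_i \in S$ satisfies $\tilde s_j - \delta \leq s \leq \tilde s_j + \delta$. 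Combining the two conditions gives the claim.

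The argument is entirely elementary and I do not expect any genuine obstacle; the only points that need care are the half-open rounding when choosing $j$ in condition (i) and the boundary case $i = 1$ in condition (ii), both of which are handled above. In particular, no appeal to the integrality of the $a_i$ or of $w,v$ is required beyond $\delta$ being a positive integer so that the index set $[0,\frac{v-w}{\delta}]$ is meaningful.
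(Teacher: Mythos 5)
Your proof is correct and follows the same approach as the paper: for condition (i) choose the index $j$ by rounding $(s-w)/\delta$ down, and for condition (ii) locate $\tilde s_j$ between consecutive terms of the sequence and use the gap bound. The paper states the same two steps more tersely, implicitly handling the boundary cases you make explicit.
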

\begin{proof}
    For any $s\in S[w,v]$, there must be some $j \in [0, \frac{v - w}{\delta}]$ such that
    \[
        w + j\delta \leq s \leq w + (j+1)\delta.
    \]
    The first condition of Definition~\ref{def:approx} is satisfied. In the other direction, for any $\tilde{s}\in\{w + j\delta : j \in [0, \frac{v - w}{\delta}]\}$, there must be some $i\in[1,k-1]$ such that
    \(
        a_i \leq \tilde{s} \leq a_{i+1}\leq a_i+\delta.
    \) 
    The second condition is also satisfied.  So $\{ w+j\delta:j\in[0,\frac{v-w}{\delta}]\}$ approximates $S[w,v]$ with additive error $\delta$.
\end{proof}

\subsection{Density and Arithmetic Progressions}
We formalize the concepts of ``dense'' and ``sparse'' and derive some properties resulting from them. The core of our technique is an additive combinatorics result from Szemer{\'e}di and Vu~\cite{SV05}, which basically states that given many large-sized sets of integers, their sumset must have a long arithmetic progression.

\begin{definition}
    An arithmetic progression of length $k$ and common difference $\Delta$ is a set of $k$ integers $a_1 <  \ldots < a_k$ such that the differences of consecutive terms are all $\Delta$.
\end{definition}

\begin{theorem}[Corollary 5.2~\cite{SV05}]\label{lem:add-comb}
    For any fixed integer $d$, there are positive constants $c_1$ and $c_2$ depending on $d$ such that the following holds.  Let $A_1, \ldots, A_{\ell}$ be subsets of $[1,u]$ of size $k$.  If $\ell^d k \geq c_1u$, then $A_1 + \cdots + A_\ell$ contains an arithmetic progression of length at least $c_2\ell k^{1/d}$.
\end{theorem}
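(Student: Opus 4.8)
Since the statement is Corollary~5.2 of Szemer\'edi and Vu~\cite{SV05}, the proof I would give follows theirs; here is the strategy. The goal is to produce, inside the $\ell$-fold sumset $S := A_1 + \cdots + A_\ell$, a \emph{proper generalized arithmetic progression} (GAP) $P = \{\, b_0 + \sum_{j=1}^{d} b_j x_j : 0 \le x_j < L_j \,\}$ of dimension $d$ whose volume $L_1 \cdots L_d$ is $\Omega(\ell^d |A|)$. Once such a $P$ is found, its longest side is an honest one-dimensional arithmetic progression of length at least $(L_1 \cdots L_d)^{1/d} = \Omega(\ell |A|^{1/d})$, which is exactly the claimed conclusion, with $c_2$ absorbing the constants. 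So the whole problem reduces to locating a large, low-dimensional, proper GAP inside $S$.

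I would first normalize: translate each $A_i$ so that $0 \in A_i$ (this only translates $S$), and by a dyadic pigeonhole inside each $A_i$ pass to a subset that is roughly equidistributed at a single common scale, losing only constant factors in $|A|$ and $u$, hence in $c_1, c_2$. The GAP is then built iteratively in a Freiman-type fashion: partition the index set $[1,\ell]$ into $d$ blocks of size $\approx \ell/d$, and show that the partial sumset over block $j$ contains an arithmetic progression in a single ``direction'' whose length is polynomial in $\ell/d$ and $|A|$; combining the $d$ directions assembles a $d$-dimensional GAP of volume $\Omega(\ell^d|A|)$. Pl\"unnecke--Ruzsa-type inequalities keep the intermediate sumsets from blowing up, and a compression/rectification step is needed to ensure that $P$ stays \emph{proper} (its coordinates $x_j$ are essentially uniquely determined), since properness is fragile under repeated summation.

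The step I expect to be the real obstacle --- and the heart of \cite{SV05}, reflected by the word ``thresholds'' in its title --- is showing that this GAP actually lies entirely in $S$ with the full volume $\Omega(\ell^d|A|)$, and that this holds exactly under $\ell^d|A| \ge c_1 u$ rather than under some lossier hypothesis. This needs a representation-counting (energy/second-moment) argument: for the right common differences, every point of the putative GAP must have many representations as a sum of one element from each $A_i$, so that no gaps are left. A crude Fourier or energy estimate yields such a GAP only after extra $\mathrm{polylog}$ or polynomial losses in $u$; removing these to land precisely on the $\ell^d|A|\ge c_1 u$ threshold while keeping the sharp length $\ell|A|^{1/d}$ is where the careful iterated density-increment argument of Szemer\'edi and Vu, trading dimension growth against a density gain at each step, is indispensable. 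The remaining bookkeeping --- passing from the GAP to the one-dimensional progression, and tracking how $c_1, c_2$ depend on $d$ --- is routine.
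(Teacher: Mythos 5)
The paper does not prove this theorem; it states it verbatim as Corollary~5.2 of Szemer\'edi and Vu~\cite{SV05} and uses it as a black box, so there is no in-paper proof to compare your sketch against. Your closing reduction is correct and dimensionally consistent with the claimed bound: if one finds a proper GAP of dimension $d$ and volume $\Omega(\ell^d|A|)$ inside $A_1+\cdots+A_\ell$, then its longest side is a genuine one-dimensional AP of length at least $(\text{volume})^{1/d}=\Omega(\ell|A|^{1/d})$, by AM--GM. The remainder of your proposal, however, is an outline that explicitly defers all the substantive steps --- building the GAP block by block, keeping it proper under repeated summation via compression, and the representation-counting / iterated density-increment argument that lands on the sharp threshold $\ell^d|A|\ge c_1 u$ with no extraneous $\mathrm{polylog}$ losses --- to~\cite{SV05} itself. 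That is a reasonable stance for a result the paper imports without proof, but it means what you have written is a roadmap rather than a proof, and nothing in this paper allows me to certify that the roadmap (in particular, the choice to route through $d$-dimensional GAPs rather than a more direct iterated-sumset argument) matches the actual architecture of the proof in~\cite{SV05}.
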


\begin{corollary}\label{coro:add-comb}
    There exists a sufficiently large constant $c$ such that the following holds.  Let $u$ be a positive integer. Let $A_1, \ldots, A_{\ell}$ be subsets of $[0,u]$ of size at least $k$. If $\ell (k-1) \geq cu$, then $A_1+ \cdots + A_\ell$ contains an arithmetic progression of length at least $u$.
\end{corollary}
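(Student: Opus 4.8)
The plan is to obtain Corollary~\ref{coro:add-comb} directly from Theorem~\ref{lem:add-comb} applied with $d=1$, after normalizing the input. Note first that $\ell(|A|-1)\geq cu\geq 1$ (assuming, as we may, $c\geq1$, and recalling $u\geq1$) forces $|A|\geq2$, so every $A_i$ has at least two elements. Now translate each set into the positive range: set $A_i^{\star}:=A_i-\min(A_i)+1$, so that $A_i^{\star}\subseteq[1,u+1]$ and $|A_i^{\star}|=|A_i|\geq|A|$; then delete arbitrary elements from each $A_i^{\star}$ until it has size exactly $|A|$. Since each $A_i^{\star}$ is (a subset of) a translate of $A_i$, any arithmetic progression contained in $A_1^{\star}+\cdots+A_\ell^{\star}$ translates back, by adding $\sum_i(\min(A_i)-1)$, to an arithmetic progression of the same length contained in $A_1+\cdots+A_\ell$. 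So it suffices to exhibit a progression of length $u$ in the sumset of the normalized sets.

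Next I would apply Theorem~\ref{lem:add-comb} with $d=1$ to the $\ell$ sets $A_i^{\star}$, which have common size $|A|$ and lie in $[1,u']$ for $u':=u+1$; let $c_1,c_2$ be the constants it supplies at $d=1$. The hypothesis $\ell\cdot|A|\geq c_1u'$ holds once $c$ is large enough: indeed $\ell|A|=\ell(|A|-1)+\ell\geq cu+1$, and $cu+1\geq c_1(u+1)$ whenever $c\geq 2c_1$ (using $u\geq1$). The theorem then yields an arithmetic progression in $A_1^{\star}+\cdots+A_\ell^{\star}$ of length at least $c_2\,\ell\,|A|\geq c_2(cu+1)$, which is at least $u$ as soon as $c\geq 1/c_2$. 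Choosing $c:=\max\{2c_1,\,1/c_2\}$ completes the argument.

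I expect the substance to be entirely in the choice $d=1$, and I would flag this as the crux. For any $d\geq2$ the guaranteed length in Theorem~\ref{lem:add-comb} is only $c_2\ell|A|^{1/d}$, and in the extreme case where $|A|$ is within a constant factor of $u$ while $\ell$ is merely a constant — which is still consistent with $\ell(|A|-1)\geq cu$ — this is $\Theta(u^{1/d})\ll u$; so $d=1$ is the unique exponent for which the bound can reach $u$, and it does reach it once $c$ absorbs $c_1$ and $c_2$. Everything else — moving the sets into a positive interval, reconciling ``size at least $|A|$'' with ``size exactly $|A|$'', and carrying the unit shifts through — is routine bookkeeping. (One can alternatively delete the translated minima rather than shift, landing in $[1,u]$ with sets of size at least $|A|-1$; this reproduces the ``$|A|-1$'' of the statement verbatim, at the cost of a separate one-line treatment of the case $|A|=2$, where such a deletion may leave singletons and the $d=1$ invocation would be vacuous.)
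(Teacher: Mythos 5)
Your argument is correct and is essentially the paper's: both invoke Theorem~\ref{lem:add-comb} at $d=1$ after moving the $A_i$ into positive integers and absorbing $c_1,c_2$ into $c$. The paper normalizes by deleting $0$ (so $A_i^{+}:=A_i\setminus\{0\}\subseteq[1,u]$ has size at least $|A|-1$), which is precisely the alternative you flag in your closing parenthetical; your translation variant is marginally tidier in that it keeps each set of size $|A|\geq 2$ and thus never applies the theorem to singletons, a corner case ($|A|=2$) that the paper's deletion step glosses over.
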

\begin{proof}
    Let $c_1$ and $c_2$ be the two constants for $d=1$ in Theorem~\ref{lem:add-comb}.  Assume that $c \geq c_1$ and that $cc_2 \geq 1$.  For $i\in [1, \ell]$, let $A^+_i = A_i \setminus \{0\}$.  $A^+_i$ is a subset of $[1,u]$ with size at least $k - 1$.  By Theorem~\ref{lem:add-comb}, $A^+_1 + \cdots + A^+_\ell$ contains an arithmetic progression of length at least $c_2\ell (k-1) \geq c_2c u \geq  u$. This arithmetic progression also appears in $A_1 + \cdots + A_\ell$ since $A^+_1 + \cdots + A^+_\ell \subseteq A_1 + \cdots + A_\ell$.
\end{proof}
Throughout the rest of the paper, $c$ denotes the constant in the above corollary. 

An arithmetic progression of length $u$ is not long enough for our purpose.  To produce a longer sequence, we need a collection of integer sets to be $\gamma$-dense.  

\begin{definition}\label{def:dense}
    Let $u$ and $\gamma$ be positive integers.  Let $A_1, \ldots, A_{\ell}$ be subsets of $[0,u]$.  The collection $\{A_1, \ldots, A_{\ell}\}$ is $\gamma$-dense if for some $k \in [2, u+1]$, at least $\frac{c\gamma u}{k-1}$ sets from this collection have size at least $k$.  We say that $\{A_1, \ldots, A_{\ell}\}$ is $\gamma$-sparse if it is not $\gamma$-dense.
\end{definition}

Through the next two lemmas, we will prove that, when a collection is $\gamma$-dense, we can use it to produce a long sequence with a small difference between consecutive elements. Basically, we use a small fraction of the collection to produce an arithmetic progression of length $u$, and use the rest of the collection to extend this progression to a longer sequence. We first show how to extend an arithmetic progression.

\begin{lemma}\label{lem:ap-extend}
    Let $B$ be a set of positive integers that contains an arithmetic progression $b_1 < \ldots < b_k$ with common difference $\Delta$. Let $u$ be a positive integer. Let $A_1, \ldots, A_{\ell}$ be subsets of $[0,u]$ containing $0$.  Let $\eta = \max(A_1) + \cdots + \max(A_\ell)$.  Let $S = B + A_1 + \cdots + A_\ell$.  If $b_k - b_1 \geq u - 1$, then $S$ contains a sequence $b_1 = s_1 < \cdots < s_{k'} = b_k + \eta$ such that $s_i - s_{i-1} \leq \Delta$ for $i \in [2, k']$.
\end{lemma}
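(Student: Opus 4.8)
The plan is to introduce the sets $A_1,\cdots,A_\ell$ one at a time and maintain a sequence that only grows at its top. Write $\eta_j=\max(A_1)+\cdots+\max(A_j)$ (so $\eta_0=0$ and $\eta_\ell=\eta$) and $S_j=B+A_1+\cdots+A_j$. I will prove by induction on $j$ that $S_j$ contains a sequence $b_1=s_1<\cdots<s_m=b_k+\eta_j$ with $s_i-s_{i-1}\le\Delta$ for every $i$; the lemma is the case $j=\ell$. The base case $j=0$ is immediate, since the progression $b_1<\cdots<b_k$ itself is such a sequence, with all gaps equal to $\Delta$ and $b_k=b_k+\eta_0$.

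For the inductive step, put $a=\max(A_{j+1})$. Since both $0$ and $a$ lie in $A_{j+1}$, the current sequence $s_1<\cdots<s_m$ and its translate $s_1+a<\cdots<s_m+a$ are both contained in $S_{j+1}=S_j+A_{j+1}$; the former runs from $b_1$ to $b_k+\eta_j$ and the latter from $b_1+a$ to $b_k+\eta_j+a=b_k+\eta_{j+1}$. Let $T$ be the union of these two sequences listed in increasing order. Its smallest element is $b_1$ and its largest is $b_k+\eta_{j+1}$, so it suffices to check that consecutive elements of $T$ differ by at most $\Delta$. Let $x<y$ be consecutive in $T$. If $x<s_m$, let $s_q$ be the least term of the original sequence exceeding $x$; then $s_q\in T$ forces $y\le s_q$, and since $x\ge b_1=s_1$ the term $s_{q-1}$ exists with $s_{q-1}\le x$, whence $y\le s_q\le s_{q-1}+\Delta\le x+\Delta$. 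If $x\ge s_m$, then $y$ must be a term of the translate; when $x\ge s_1+a$ the same argument applied inside the translate gives $y\le x+\Delta$, and otherwise $y=s_1+a=b_1+a$.

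This last situation is precisely where the hypothesis $b_k-b_1\ge u-1$ is needed: there $y-x\le (b_1+a)-s_m=a-(b_k-b_1)-\eta_j$, and combining $a=\max(A_{j+1})\le u$, $b_k-b_1\ge u-1$, $\eta_j\ge 0$, and $\Delta\ge 1$ (the common difference of a progression of integers) gives $y-x\le 1\le\Delta$. Hence every gap of $T$ is at most $\Delta$, which completes the inductive step; after $\ell$ steps $S_\ell=S$ contains the desired sequence from $b_1$ to $b_k+\eta$. Establishing this splice bound is the only genuine point in the argument; everything else is routine case analysis, so I do not anticipate a real obstacle.
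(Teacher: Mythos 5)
Your proof is correct and takes essentially the same approach as the paper: induction on the number of $A_j$'s introduced, taking the union of the current sequence with its translate by $\max(A_{j+1})$, and observing that the splice point has gap at most $1 \le \Delta$ because $b_k - b_1 \ge u - 1$. The only difference is that you spell out the gap-at-most-$\Delta$ verification for the merged sequence via explicit case analysis, whereas the paper compresses that step into the single inequality $s_1 + a^* \le s_{k'} + 1$ and leaves the rest implicit.
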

\begin{proof}
    We prove by induction on $j$.  When $j = 0$, the lemma is obviously true by letting $k' = k$ and $s_i = b_i$ for $i\in [1,k]$. Suppose that the lemma is true for some $0\leq j  \leq \ell -1$. We prove that it is also true for $j+1$.  Let $S_j = B + A_1 + \cdots + A_j$ and $\eta_j = \max(A_1) + \cdots + \max(A_j)$. Let $S_{j+1}= S_j + A_{j+1}$ and $\eta_{j+1} = \eta_j + \max(A_{j+1})$.  By inductive hypothesis, $S_{j}$ contains a sequence $(s_1, \cdots, s_{k'})$ with $s_1 = b_1$, $s_{k'} = b_k + \eta_j$, and $s_i - s_{i-1} \leq \Delta$ for $i \in [2, k']$.  Note that $S_{j} \subseteq S_{j+1}$ since $0 \in A_{j+1}$, so $S_{j+1}$ also contains the sequence $(s_1, \ldots, s_{k'})$.  Let $a^*$ be the maximum element in $A_{j+1}$.  The sequence $(s_1 + a^*, \ldots, s_{k'} + a^*)$ also belongs to $S_{j + 1}$. Note that $s_1 + a^* \leq b_1 + u \leq b_k + 1 \leq s_{k'} + 1$ and that $s_{k'} + a^* = b_k + \eta_j + a^* = b_k + \eta_{j+1}$.  Therefore, merging the two sequences by taking union and delete duplicates yields a sequence $z_1, \ldots, z_{k''}$ in $S_{j+1}$ with $z_1 = b_1$, $z_{k''} = b_k + \eta_{j+1}$, and $z_i - z_{i-1} \leq \Delta$ for any $i \in [2,k'']$.
\end{proof}

Now we are ready to prove that if $A_1, \ldots, A_{\ell}$ is $\gamma$-dense, their sumset must contain a long sequence with small differences between consecutive terms.

\begin{lemma}\label{lem:dense2longseq}
    Let $u$ be a positive integer. Let $A_1, \ldots, A_{\ell}$ be subsets of $[0,u]$ with  $0 \in A_i$ for every $i\in[1,\ell]$. Let $S = A_1 + \cdots + A_{\ell}$. Let $\eta = \max(A_1) + \cdots + \max(A_\ell)$. If $\{A_1, \ldots, A_{\ell}\}$ is $\gamma$-dense, then $S$ contains a sequence $s_1, \ldots, s_k$ such that $s_1 \leq \frac{2\eta}{\gamma}$, $s_k \geq \frac{(\gamma-2)\eta}{\gamma}$, and $s_{i} - s_{i-1} \leq \frac{4\ell}{\gamma}$ for any $i \in [2,k]$.
\end{lemma}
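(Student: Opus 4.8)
The plan is to follow the informal recipe of the technical overview: use a $\Theta(1/\gamma)$-fraction of the dense sets to produce an arithmetic progression of length at least $u$ through Corollary~\ref{coro:add-comb}, and then hand this progression, together with \emph{all} the remaining sets, to Lemma~\ref{lem:ap-extend}. Since $\{A_1,\dots,A_\ell\}$ is $\gamma$-dense, fix $k\in[2,u+1]$ and an index set $D\subseteq[1,\ell]$ with $|D|=m\ge\frac{c\gamma u}{k-1}$ such that $|A_i|\ge k$ for all $i\in D$. Because $m$ is an integer and $m\ge\frac{c\gamma u}{k-1}\ge\frac{cu}{k-1}$, we may pick $I\subseteq D$ consisting of the $m':=\lceil\frac{cu}{k-1}\rceil\le m$ indices of $D$ whose sets have the \emph{smallest} maxima. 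Then $|A_i|\ge k$ for $i\in I$ and $m'(k-1)\ge cu$, so Corollary~\ref{coro:add-comb} gives an arithmetic progression $b_1<\dots<b_L$ of length $L\ge u$ and some common difference $\Delta$ inside $\sum_{i\in I}A_i\subseteq[0,\eta_{\mathrm{in}}]$, where $\eta_{\mathrm{in}}:=\sum_{i\in I}\max(A_i)$.

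Next I would check the three bounds. Writing $\eta_{\mathrm{out}}:=\sum_{i\notin I}\max(A_i)$, so that $\eta=\eta_{\mathrm{in}}+\eta_{\mathrm{out}}$, the choice of $I$ as the $m'$ sets of smallest maxima among the $m$ sets of $D$ gives $\eta_{\mathrm{in}}\le\frac{m'}{m}\sum_{i\in D}\max(A_i)\le\frac{m'}{m}\eta\le\frac{2}{\gamma}\eta$, using $m'\le\frac{cu}{k-1}+1$, $\frac{cu}{k-1}\le\frac{m}{\gamma}$, and $m\ge c\gamma\ge\gamma$. Since the progression sits in $[0,\eta_{\mathrm{in}}]$ and has $L\ge u$ terms, $\Delta\le\frac{\eta_{\mathrm{in}}}{L-1}\le\frac{m'u}{u-1}\le 2m'\le\frac{4\ell}{\gamma}$ (here I use $u\ge 2$, $m'\le\frac{2m}{\gamma}$, and $m\le\ell$; the degenerate case $u=1$, which forces $k=2$ and all dense sets equal to $\{0,1\}$, is handled directly since the sumset of those sets is already an arithmetic progression of difference $1$ reaching $\eta$). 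Now apply Lemma~\ref{lem:ap-extend} with the progression $b_1<\dots<b_L$ as the set $B$ — it satisfies $b_L-b_1=(L-1)\Delta\ge u-1$ — and with the sets $\{A_i:i\notin I\}$, each containing $0$ and lying in $[0,u]$; since $S=\bigl(\sum_{i\in I}A_i\bigr)+\bigl(\sum_{i\notin I}A_i\bigr)\supseteq B+\sum_{i\notin I}A_i$, we obtain that $S$ contains a sequence $s_1=b_1<\dots<s_k=b_L+\eta_{\mathrm{out}}$ with consecutive gaps at most $\Delta\le\frac{4\ell}{\gamma}$. Finally $s_1=b_1\le\eta_{\mathrm{in}}\le\frac{2\eta}{\gamma}$ and $s_k=b_L+\eta_{\mathrm{out}}\ge\eta_{\mathrm{out}}=\eta-\eta_{\mathrm{in}}\ge\frac{(\gamma-2)\eta}{\gamma}$, which completes the proof.

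The main obstacle is the constant-factor bookkeeping: the statement asks for the specific constants $2$ at both endpoints and $4$ in the gap bound simultaneously. The one non-obvious idea needed is that the $m'$ sets used to build the progression should be chosen to have the \emph{smallest} maxima, since this is exactly what bounds the left endpoint $s_1=b_1$ (which otherwise could be as large as $\eta_{\mathrm{in}}$) by $\frac{2}{\gamma}\eta$; the gap bound $\Delta\le\frac{4\ell}{\gamma}$ then follows from the tension between the progression being squeezed into the short interval $[0,\eta_{\mathrm{in}}]$ and still being forced to have at least $u$ terms. All remaining points — that $I$ is well-defined, that the hypothesis $m'(k-1)\ge cu$ of Corollary~\ref{coro:add-comb} holds, and that the hypotheses of Lemma~\ref{lem:ap-extend} are met — are routine once we assume $c\ge 1$ and dispose of $u=1$ separately.
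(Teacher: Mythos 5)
Your proposal is correct and follows essentially the same route as the paper's own proof: select the roughly $\lceil cu/(k-1)\rceil$ dense sets with the \emph{smallest} maxima, invoke Corollary~\ref{coro:add-comb} on them to get an arithmetic progression of length at least $u$, then extend it with the remaining sets via Lemma~\ref{lem:ap-extend}, with $u=1$ handled separately. The only differences are cosmetic bookkeeping — you average over the $m$ dense indices and bound $\Delta$ via $\eta_{\mathrm{in}}\le m'u$, while the paper divides by the lower bound $\ell'$ and routes the $\Delta$ bound through $\eta\le\ell u$ — but the key idea (picking the smallest-maximum sets to control $s_1$) and the structure of the argument are identical.
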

\begin{proof}
    If $u = 1$, the lemma trivially holds because $[0,\eta] \subseteq S$.  Assume that $u \geq 2$.  By definition~\ref{def:dense}, for some $k \in [2, u+1]$, at least $\ell' = \frac{c\gamma u}{k-1}$ sets $A_i$'s have $|A_i| \geq k$.  From these $A_i$'s, we select the $\lceil \frac{\ell'}{\gamma} \rceil$ ones with smallest maximum elements.  Let $I$ be the set of indices of the selected $A_i$'s.   By the way, we select $A_i$'s, we have
    \[
         \sum_{i\in I}\max(A_i) \leq \frac{\lceil \frac{\ell'}{\gamma} \rceil}{\ell'}\sum_{i=1}^\ell \max(A_i) \leq \frac{2}{\gamma}\sum_{i=1}^\ell \max(A_i) = \frac{2\eta}{\gamma}.
    \]
    The second inequality is due to that $\frac{\ell'}{\gamma} \geq c \geq 1$, which follows by the fact that $\ell' = \frac{c\gamma u}{k-1}$ and that $k \leq u+1$.

    It is easy to verify that the collection $\{A_i\}_{i\in I}$ of the selected $A_i$'s satisfies the condition of Corollary~\ref{coro:add-comb}, so $\sum_{i \in I}A_i$ contains an arithmetic progression $(a_1, \ldots, a_u)$.  The common difference  
    \[
        \Delta = \frac{a_u - a_1}{u-1} \leq \frac{2a_u}{u} \leq \frac{2}{u}\cdot \sum_{i\in I}\max(A_i) \leq \frac{4\eta}{\gamma u} \leq \frac{4\ell}{\gamma}.
    \]
    The last inequality is due to that $\eta = \sum_{i=1}^\ell \max(A_i) \leq \ell u$. Moreover, $a_u - a_1 \geq u-1$.  Now consider $S = \sum_{i=1}^\ell A_i = \sum_{i\in I}A_i + \sum_{i\notin I}A_i$.  By Lemma~\ref{lem:ap-extend}, $S$ contains a sequence $s_1 ,\ldots, s_k$ such $s_1 = a_1$, $s_k = a_u + \sum_{i\notin I}\max(A_i)$, and $s_i - s_{i-1} \leq \Delta \leq \frac{4\ell}{\gamma}$. Note that
    \[
        s_1 = a_1 < a_u \leq \sum_{i \in I}\max(A_i) \leq \frac{2\eta}{\gamma}
    \]
    and
    \[
        s_k = a_u + \sum_{i\notin I}\max(A_i) \geq \sum_{i\notin I}\max(A_i) = \sum_{i=1}^\ell \max(A_i) - \sum_{i \in I}\max(A_i) \geq (1 - \frac{2}{\gamma})\eta.\qedhere
    \]
\end{proof}

The following lemma shows that a $\gamma$-sparse collection of integer sets has a small total size.

\begin{lemma}\label{lem:sparse-small-size}
    Let $A_1, \ldots, A_{\ell}$ be subsets of $[0,u]$. If $\{A_1, \ldots, A_{\ell}\}$ is $\gamma$-sparse, then 
    \[
        \sum_{i=1}^{\ell}|A_i| \leq \ell +  c\gamma u (1 + \log u).
    \]
\end{lemma}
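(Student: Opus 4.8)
The plan is to bound $\sum_i |A_i|$ by summing contributions ``by threshold,'' exploiting that $\gamma$-sparsity caps, for every $k \in [2,u+1]$, the number of sets of size at least $k$. First I would write, for each $i$, the telescoping identity $|A_i| = 1 + \sum_{k=2}^{u+1} \mathbf{1}[|A_i| \geq k]$ (valid since $0 \le |A_i| \le u+1$; strictly, $|A_i| - 1 = \#\{k \in [2,u+1] : |A_i| \ge k\}$, and if $A_i = \emptyset$ the identity still works once we note $A_i \subseteq [0,u]$ forces $|A_i| \ge 1$ only if we assume nonemptiness — otherwise just start the count from $\max(1,|A_i|)$, contributing the leading $\ell$). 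Summing over $i$ gives
\[
  \sum_{i=1}^{\ell} |A_i| \;=\; \ell \;+\; \sum_{k=2}^{u+1} \#\{i : |A_i| \geq k\}.
\]

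Next I would invoke the definition of $\gamma$-sparse. Since $\{A_1,\dots,A_\ell\}$ is \emph{not} $\gamma$-dense, for every $k \in [2,u+1]$ the number of sets of size at least $k$ is strictly less than $\frac{c\gamma u}{k-1}$, hence at most $\frac{c\gamma u}{k-1}$. Plugging this in,
\[
  \sum_{i=1}^{\ell} |A_i| \;\leq\; \ell \;+\; \sum_{k=2}^{u+1} \frac{c\gamma u}{k-1} \;=\; \ell \;+\; c\gamma u \sum_{j=1}^{u} \frac{1}{j}.
\]
Then I would bound the harmonic sum $H_u = \sum_{j=1}^u 1/j \leq 1 + \ln u \leq 1 + \log u$ (the last step using $\ln u \le \log u$ since $\log$ is base $2$), which yields exactly $\sum_i |A_i| \leq \ell + c\gamma u(1 + \log u)$.

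The argument is essentially a clean counting-plus-harmonic-sum estimate, so there is no serious obstacle; the only point requiring minor care is the edge handling. I would double-check the boundary of the threshold range: the definition restricts $k$ to $[2, u+1]$, and since $A_i \subseteq [0,u]$ we indeed have $|A_i| \le u+1$, so no size-threshold beyond $u+1$ is ever relevant and the telescoping sum has exactly the right range. I would also note the harmonic bound is where the $1 + \log u$ factor (rather than a bare $\log u$) comes from, and that the constant $c$ is inherited unchanged from Corollary~\ref{coro:add-comb} since the sparsity definition is stated in terms of that same $c$. If empty sets are possible, the leading term $\ell$ absorbs them with room to spare, so the stated inequality holds in all cases.
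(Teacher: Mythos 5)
Your proof is correct and follows essentially the same route as the paper's: bound $\sum_i |A_i|$ via the layer-cake identity (the paper phrases it as an Abel summation $\sum_k k(\ell_k - \ell_{k+1}) \le \sum_k \ell_k$, which is equivalent to your indicator-sum decomposition), apply the $\gamma$-sparsity cap $\ell_k < c\gamma u/(k-1)$ for each threshold $k \ge 2$, and finish with the harmonic-sum bound $H_u \le 1 + \log u$. Your handling of possibly-empty $A_i$ (absorbing them into the leading $\ell$) matches the paper's use of $\ell_1 \le \ell$.
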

\begin{proof}
    Note that $0 \leq |A_i| \leq u+1$ for any $i$.  For $k \in [0,u+2]$, let $\ell_k$ be the number of $A_i$'s with $|A_i| \geq k$.  Since $\{A_i\}_{i \in I}$ is $\gamma$-sparse, $\ell_k < \frac{c\gamma u}{k-1}$ for any $k \geq 2$. Then 
    \[
        \sum_{i=1}^{\ell}|A_i| = \sum_{k=1}^{u+1} k(\ell_k - \ell_{k+1}) = \sum_{k=1}^{u+1}\ell_k \leq \ell_1 + \sum_{k=2}^{u+1} \frac{c\gamma u}{k-1} 
        = \ell_1 + \sum_{k=1}^{u} \frac{c\gamma u}{k} \leq \ell +  c\gamma u (1 + \log u).\qedhere
    \]
\end{proof}

\subsection{Estimating the Density}
Recall that we compute a new level only when it is sparse, so we should estimate the density of a level without actually computing it. We first give an algorithm for estimating the size of the sumset of two integer sets. The following lemma is implied by~\cite{BFN22}.  We defer the proof to appendix~\ref{app:est-sumset-size}.

\begin{restatable}{lemma}{lemestsumsetsize}\label{lem:est-sumset-size}
    Let $u$ be a positive integer. Let $A$ and $B$ be two subsets of $[0,u]$. Let $k$ be any positive integer. We can determine whether $|A+B| \geq k$ or not in $O(k\log^6 u\,\mathrm{polyloglog}\,u)$ time.
\end{restatable}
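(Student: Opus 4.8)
The plan is to use the sparse convolution algorithm of Bringmann, Fischer and Nakos (Lemma~\ref{lem:sparse-fft}) in a \emph{truncated} or \emph{early-termination} fashion: rather than running it to completion, I would run it only long enough to decide whether the output has at least $k$ elements. Recall that Lemma~\ref{lem:sparse-fft} computes $A+B$ in time $O(|A+B|\log^5 u\,\mathrm{polyloglog}\,u)$; the running time is output-sensitive, so the natural idea is to halt the computation as soon as $k$ support elements have been discovered. The key point I would rely on is that the algorithm of~\cite{BFN22} (or a standard reduction built on top of it) can be made to produce the support of $A+B$ incrementally, or alternatively can be invoked with a guess $t$ on the output size so that it runs in $O(t\log^6 u\,\mathrm{polyloglog}\,u)$ time and either returns the full sumset (if $|A+B|\le t$) or reports ``too large''.

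Concretely, the steps are: (1) invoke the sparse convolution routine with size parameter $t := k$; (2) if it returns a set, check whether that set has at least $k$ elements and answer accordingly; (3) if it reports that the output exceeds the size bound, answer $|A+B|\ge k$. The correctness is immediate from the guarantee of the size-bounded routine, and the running time is $O(k\log^6 u\,\mathrm{polyloglog}\,u)$ as claimed, since we never let the computation exceed the $k$-sized budget. To turn the ``guess $t$'' primitive of~\cite{BFN22} into a clean one-shot algorithm with a fixed bound $k$, I would note that their algorithm already internally doubles a guess for the output size until it is correct; truncating this doubling at the first guess that reaches or exceeds $k$ costs only a $\log$ factor and a constant number of restarts, which is absorbed into the $\log^6 u$ and $\mathrm{polyloglog}\,u$ factors. (One can also phrase this using a Las Vegas verification step: run with guess $k$, and if the recovered support, when convolved back, does not reproduce the claimed polynomial, conclude the true size exceeds $k$.)

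The main obstacle is purely a matter of \emph{interface}: Lemma~\ref{lem:sparse-fft} as quoted is stated as a black box that outputs $A+B$, with no explicit promise of an early-termination or size-capped mode. So the real content of the proof in the appendix is to open up (or cite the appropriate variant of)~\cite{BFN22} and extract exactly the statement ``given a bound $k$, in $O(k\log^6 u\,\mathrm{polyloglog}\,u)$ time either output $A+B$ or certify $|A+B|>k$''. This is standard for sparse-convolution machinery — the algorithms of~\cite{BFN22} are built around iteratively guessing the sparsity — but it does require a careful reading of that paper rather than a one-line deduction; the extra $\log u$ factor over Lemma~\ref{lem:sparse-fft} (i.e.\ $\log^6$ versus $\log^5$) is precisely the slack we pay for the guessing/verification overhead. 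Everything else (comparing the returned support size to $k$, handling the trivial cases $k\le 1$ or $|A+B|$ obviously small) is routine bookkeeping.
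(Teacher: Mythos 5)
Your plan and the paper's proof take genuinely different routes, and yours as written contains a gap. You propose to open up the internals of~\cite{BFN22} and extract a size-capped variant of the sparse convolution routine: given a budget $k$, either output $A+B$ or certify $|A+B|>k$, in time $O(k\log^6 u\,\mathrm{polyloglog}\,u)$. You correctly flag that Lemma~\ref{lem:sparse-fft} as stated offers no such interface, and that obtaining it ``requires a careful reading'' of~\cite{BFN22}. That is precisely where the proposal stops short of a proof: the claim that the deterministic algorithm of~\cite{BFN22} already ``internally doubles a guess for the output size'' and so admits a clean early-termination mode at only a $\log$-factor cost is asserted, not established. Such guess-and-double structure is folklore for \emph{randomized} sparse-convolution schemes, but~\cite{BFN22} is deterministic and its structure is more intricate; nothing in your argument certifies that truncation is safe or that the overhead is bounded as you claim.

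The paper avoids this issue entirely and uses Lemma~\ref{lem:sparse-fft} strictly as a black box, via a modular-reduction trick. For $j=1,\dots,\lceil\log u\rceil$ it computes $(A\bmod 2^j)+(B\bmod 2^j)$ exactly, exploiting two elementary facts: first, $|(A\bmod\tau)+(B\bmod\tau)|\le 2|A+B|$, so if any modular sumset reaches size $2k$ one may immediately conclude $|A+B|\ge k$ and halt; second, doubling the modulus increases the modular sumset size by at most a factor of $3$ (Claim~\ref{claim:mod-sumset}), so as long as the algorithm has not halted, each round's output has size $O(k)$ and hence each invocation of Lemma~\ref{lem:sparse-fft} costs $O(k\log^5 u\,\mathrm{polyloglog}\,u)$. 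Summing over the $\lceil\log u\rceil$ rounds gives the stated $\log^6$ bound, and if all rounds complete the final round yields $A+B$ itself. This argument is self-contained, needs no claim about the internals of~\cite{BFN22}, and is where the extra $\log$ factor actually comes from. Your approach would be a valid alternative \emph{if} the early-termination property were proven for~\cite{BFN22}; as written, it leaves that central claim open, whereas the paper's modular-doubling argument discharges it unconditionally.
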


With the above lemma, we can estimate the density of a level without computing it.

\begin{lemma}\label{lem:est-density}
   Let $u, \gamma$ be two positive integers.  Let $A_1, \ldots, A_\ell$ be non-empty subsets of $[0, u]$. For $i \in [1, \ell/2]$, let $B_i = A_{2i-1} + A_{2i}$. Given $A_1,\ldots, A_\ell$, in $O((\ell + \gamma u) \log^7 u\,\mathrm{polyloglog}\,u)$ time, (without actually computing $B_1, \ldots, B_{\ell/2}$) we can
   \begin{enumerate}[label={\normalfont (\roman*)}] 
        \item either tell that $\{B_1, \ldots, B_{\ell/2}\}$ is $(4\gamma)$-sparse, or

        \item return a subset $I$ of $[1,\ell/2]$ such that $|B_i| \geq \frac{2c\gamma u}{|I|} + 1$ for $i \in I$. (Therefore, $\{B_1, \ldots, B_{\ell/2}\}$ is $\gamma$-dense).
     \end{enumerate}    
\end{lemma}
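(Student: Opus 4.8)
The plan is to determine the density of $\{B_1,\dots,B_{\ell/2}\}$ by probing the sizes of the $B_i$'s only up to the thresholds that matter, using Lemma~\ref{lem:est-sumset-size} as a subroutine. Recall (Definition~\ref{def:dense}) that $(4\gamma)$-density means that for some $k\in[2,u+1]$ at least $\frac{4c\gamma u}{k-1}$ of the $B_i$'s have size at least $k$. Rather than testing all $k$, I would discretize: for each dyadic value $k_p := 2^p$ with $p=1,\dots,\lceil\log(u+1)\rceil$, let $\tau_p := \lceil \frac{4c\gamma u}{k_p-1}\rceil$ be the number of large sets we would need at threshold $k_p$ (when $k_p=2$ this is $\approx 4c\gamma u$, the largest; when $k_p\approx u$ it is a small constant). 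The algorithm scans $B_1,B_2,\dots$ in order; for the current threshold level $p$ it calls Lemma~\ref{lem:est-sumset-size} on the pair $(A_{2i-1},A_{2i})$ to decide whether $|B_i|\ge k_p$, at cost $O(k_p\log^6 u\,\mathrm{polyloglog}\,u)$ each. It maintains, for every $p$, a counter $c_p$ of how many $B_i$ seen so far satisfy $|B_i|\ge k_p$. As soon as some $c_p$ reaches a suitable multiple of $\tau_p$ (to be fixed below so that the ``$\gamma$-dense, and $|B_i|\ge \frac{2c\gamma u}{|I|}+1$'' conclusion of case~(ii) holds for the collected index set $I$), we stop and output that $I$. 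If we finish scanning all $\ell/2$ pairs without triggering, we output case~(i).

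The key accounting step is to get the constants right so that ``not triggered'' really certifies $(4\gamma)$-sparseness while ``triggered'' certifies $\gamma$-density with the stated per-set lower bound. Concretely: if for every dyadic $k_p$ the number of $B_i$ with $|B_i|\ge k_p$ is less than, say, $\frac{1}{2}\cdot\frac{4c\gamma u}{k_p-1}$, then for an arbitrary $k\in[2,u+1]$, picking the dyadic $k_p$ with $k_p/2 < k \le k_p$ we get the count at threshold $k$ is at most the count at threshold $k_p/2$, which is less than $\frac{1}{2}\cdot\frac{8c\gamma u}{k_p/2-1}\le \frac{4c\gamma u}{k-1}$ (after absorbing the factor-$2$ slack in the $-1$, using $k\ge 2$), so $(4\gamma)$-sparseness holds — this is exactly the kind of dyadic-rounding argument already used in Lemma~\ref{lem:sparse-small-size}. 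Conversely, when some $c_p$ first reaches $|I| := \frac{2c\gamma u}{k_p-1}$ rounded appropriately, every $B_i$ with $i\in I$ has $|B_i|\ge k_p = 2^p$, and from $|I|\le \frac{2c\gamma u}{k_p-1}$ we get $k_p \ge \frac{2c\gamma u}{|I|}+1$, giving conclusion~(ii); and since $|I|\ge \frac{c\gamma u}{k_p-1}$ witnesses $\gamma$-density of $\{B_1,\dots,B_{\ell/2}\}$ at threshold $k_p$, the parenthetical remark holds too. The choice of the trigger constant ($\frac12$ versus a full $1$) is the delicate part, and it is what creates the $4$-vs-$1$ gap between the ``sparse'' and ``dense'' verdicts; I would fix the constants by working the two inequality chains above backwards.

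For the running time, observe that once we decide to call Lemma~\ref{lem:est-sumset-size} at threshold level $p$ for a pair $(A_{2i-1},A_{2i})$, the cost is $O(2^p\log^6 u\,\mathrm{polyloglog}\,u)$. Crucially we only ever test a pair at level $p$ if it already passed all lower levels $p' < p$ (a size-$\ge 2^p$ set is also size $\ge 2^{p'}$, so we can reuse the verdict; there is no need to recompute, and the tests are nested), and — more importantly — at level $p$ we stop the whole algorithm after accumulating only $\Theta(\tau_p) = \Theta(\gamma u/2^p)$ successes, hence we make at most $O(\ell + \gamma u/2^p)$ queries at that level before either triggering or having exhausted the input. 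Summing the query cost at level $p$: the $\gamma u/2^p$ triggering successes cost $O(\gamma u/2^p \cdot 2^p) = O(\gamma u)$ each, so $O(\gamma u\log^6 u\,\mathrm{polyloglog}\,u)$ per level; over $O(\log u)$ levels this is $O(\gamma u\log^7 u\,\mathrm{polyloglog}\,u)$. The "failed" queries (a pair that does not reach the next level) are charged $O(2^p)$ but are bounded in number by $\ell$ across all levels after a slightly more careful amortization — each pair contributes a geometric sum $\sum_{p} 2^p$ truncated at the first level it fails, which is $O(2^{p_i})$ where $p_i$ is that pair's first failing level, and $\sum_i 2^{p_i} = \sum_i O(|B_i|) $, but we never materialize a $B_i$ whose size would push us past a trigger, so this sum is again $O(\ell + \gamma u)$. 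Adding the $O(\ell)$ bookkeeping gives the claimed $O((\ell+\gamma u)\log^7 u\,\mathrm{polyloglog}\,u)$. I expect the main obstacle to be precisely this last amortization — ensuring the cost of the ``almost large'' sets (those just below a dyadic threshold) does not blow up — which is handled by stopping globally as soon as enough evidence of density at any single level is gathered, so no level is ever over-queried.
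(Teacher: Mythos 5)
Your proposal is essentially the paper's own approach: test the $B_i$'s against increasing dyadic thresholds via Lemma~\ref{lem:est-sumset-size}, only re-testing at level $j$ the pairs that survived level $j-1$, and halt globally the moment a level's survivor count certifies $\gamma$-density; as you note, the $4$-versus-$1$ gap between the two verdicts is exactly the dyadic-rounding slack. The loose ends you flag are indeed the only ones, and they resolve cleanly: the $B_i$'s lie in $[0,2u]$ (so the range is $j\le\lfloor\log(2u+1)\rfloor$ and the trigger is $|I_j|(2^j-1)\ge 2c\gamma u$), and no delicate amortization for the ``failed'' queries is needed, because not having triggered at level $j-1$ already gives $(2^{j-1}-1)|I_{j-1}|<2c\gamma u$, which directly bounds the total work at level $j$ by $O(\gamma u+\ell)$ times the per-query cost.
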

\begin{proof}
    Note that $|B_i| \leq 2u + 1$.  By definition, to determine whether the collection of $B_i$'s is $\gamma$-dense or not,  we should, for each $k \in [2,  2u+1]$, count the number of $B_i$'s with $|B_i| \geq k$.  Due to concern about running time, we check only for those $k$'s that are powers of $2$. More precisely, for $j \in [1, \lfloor \log (2u+1)\rfloor]$, we determine whether $|B_i| \geq 2^j$ or not via Lemma~\ref{lem:est-sumset-size}. We start with $j = 1$. Let $I_j$ be the set of the indices of $B_i$'s with $|B_i| \geq 2^j$.  If the collection $\{B_i\}_{i \in I_j}$ meets the condition of $\gamma$-dense, we are done. Otherwise, we proceed to the next $j$, and obviously, only the $B_i$'s with $i \in I_j$ need to be considered in the next round.  The above procedure is summarized by Algorithm~\ref{alg:est-size}.

    If $(2^{j'}-1)\cdot |I_{j'}| \geq 2c\gamma u$ for some $j'$, since every $B_i$ is a subset of $[0,2u]$ and every $B_i$ with $i \in I_{j'}$ has size at least $2^{j'}$, by definition, $\{B_1, \ldots, B_{\ell/2}\}$ is $\gamma$-dense, and $I_{j'}$ will be returned.

    Suppose that $(2^j-1)\cdot |I_j| < 2c\gamma u$ for all $j$. We will prove that the collection of $B_i$'s is $(4\gamma)$-sparse in this case. Let $\ell_k$ be the number of $B_i$'s whose sizes are at least $k$. It suffices to show that $\ell_k < \frac{8c\gamma u}{k -1}$ for any $k \in [2, u+1]$.  Let $k$ be any integer in $[1, 2u+1]$. Let $j' = \lfloor \log k \rfloor$. We have 
    \[
        \ell_k \leq \ell_{2^{j'}} < \frac{2c\gamma u}{2^{j'} -1} \leq \frac{8c\gamma u}{k -1}.
    \]
    Therefore, $\{B_1, \ldots, B_{\ell/2}\}$ is $(4\gamma)$-sparse.

    Algorithm~\ref{alg:est-size} stops as soon as it identifies that $\{B_1, \ldots, B_{\ell/2}\}$ is $\gamma$-dense, so the outer loop of  has at most $\lfloor\log (2u+1) \rfloor$ iterations. The running time of Algorithm~\ref{alg:est-size} is bounded by the following.
    \begin{align*}
        &\sum_{j = 1}^{\lfloor\log (2u+1) \rfloor} \sum_{i \in I_{j-1}} \left(2^j\log^6 u\,\mathrm{polyloglog}\,u)\right)\\
        =&\left(2\log^6 u\,\mathrm{polyloglog}\,u\right) \left(\sum_{j = 1}^{\lfloor\log (2u+1) \rfloor} (2^{j-1}-1)|I_{j-1}| + \sum_{j = 1}^{\lfloor\log (2u+1) \rfloor}|I_{j-1}|\right)\\
        \leq &2\log^6 u\,\mathrm{polyloglog}\,u \cdot \lfloor\log (2u+1) \rfloor \cdot (2c\gamma u + \ell)\\
        =& O((\ell + \gamma u) \log^7 u\,\mathrm{polyloglog}\,u).\qedhere
    \end{align*}
\end{proof}

\begin{algorithm}
    \caption{\texttt{EstimateDensity}$(A_1, \ldots, A_\ell, \gamma)$}
    \label{alg:est-size}
    \begin{algorithmic}[1]
    \Statex \textbf{Input:} non-empty
    subsets $A_1, \ldots , A_\ell$ of $[0, u]$ and a positive integer $\gamma$.
    \Statex \textbf{Output:} 
    Tells that $\{B_1, \ldots, B_{\ell/2}\}$ is $(4\gamma)$-sparse, or returns a subset $I$ of $[1,\ell/2]$.
    \State $I_{0} := [1, \ell/2]$\;
    \For{$j := 1,...,\lfloor\log (2u+1) \rfloor$}
        \State $I_{j} := \emptyset$\;
        \For{each $i \in I_{j-1}$}
            \If{$|A_{2i-1} + A_{2i}| \geq 2^j$ (via Lemma~\ref{lem:est-sumset-size})}
                \State $I_{j} := I_{j}\cup \{i\}$\;
            \EndIf
        \EndFor
        \If{$|I_{j}|\cdot (2^j-1) \geq 2c\gamma u$}
                \State The algorithm stops and \Return $I_j$\;
        \EndIf
    \EndFor
    \State \Return the collection $\{A_1+A_2, \ldots, A_{\ell-1} + A_{\ell}\}$ is $(4\gamma)$-sparse\;
    \end{algorithmic}
\end{algorithm}

\subsection{Putting Things Together}
Consider level $h$ of the tree for computing $\bigcup_{j=1}^r S^j_1 + \cdots + \bigcup_{j=1}^r S^j_{m}$.  Let $A_1, \ldots, A_\ell$ be nodes in this level where $\ell = \frac{mg}{2^h} \leq mg = O(\beta\log \frac{n}{\eps})$. By Observation~\ref{obs:level-property}, for each $i \in [1, \ell]$, $A_i$ is a subset of $[0, \frac{2^{h+1}}{\eps}]$ with $0 \in A_i$. Moreover, $\sum_{i=1}^{\ell} \max(A_i) \geq \frac{4\beta}{\eps}$.  We make an extra assumption that the elements of $A_i$'s are multiples of $2^{h+1}$.  This can be done by rounding each element of $A_i$ up to the nearest multiple of $2^{h+1}$.  This incurs an additive error of $\ell \cdot 2^{h+1} = 2mg = {O}(\beta\log\frac{n}{\eps})$.  Recall that there are $1 + \log mg$ levels, and each level is repeated for $r$ times. The total additive error caused by the rounding is $2(1 + \log mg)\beta\log\frac{n}{\eps} = O(\beta\log^2\frac{n}{\eps})$.  Let $B_1, \ldots, B_{\ell/2}$ be the next level $h+1$. 

\begin{lemma}\label{lem:tree-operation}
    Let $\gamma = \max(4,\lceil\frac{4mg}{\beta}\rceil)$. Given $A_1, \ldots, A_\ell$, in 
    $\widetilde{O}(\frac{1}{\eps})$ time, we can 
    \begin{enumerate}[label={\normalfont (\roman*)}]
        \item either compute $B_1, \ldots, B_{\ell/2}$ whose total size is $O(\frac{1}{\eps} \log^2 \frac{1}{\eps})$, or

        \item tell that $B_1, \ldots, B_{\ell/2}$ are $\gamma$-dense and that $A_1 + \cdots + A_{\ell}$ has a sequence $z_1 < \cdots < z_k$ such that $z_1 \leq \frac{\beta}{\eps}$, $z_k \geq \frac{2\beta}{\eps}$, and $z_i - z_{i-1} \leq \beta$ for $i \in [2, k]$.
    \end{enumerate}
\end{lemma}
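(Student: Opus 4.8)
The plan is to set $\gamma := \Theta(\log\frac{1}{\eps})$ (large enough for the arithmetic below to work out) and to run \texttt{EstimateDensity}$(A_1, \ldots, A_\ell, \gamma)$ from Lemma~\ref{lem:est-density} on the current level, with $u = \frac{2^{h+1}}{\eps}$. This call costs $O((\ell + \gamma u)\log^7 u\,\mathrm{polyloglog}\,u)$ time; since $\ell = \frac{mg}{2^h}$ and $u = \frac{2^{h+1}}{\eps}$, we have $\ell + \gamma u = O(\frac{mg}{2^h} + \frac{\gamma 2^{h+1}}{\eps})$, and one must check this is $O(\frac{1}{\eps}\mathrm{polylog}\frac{1}{\eps})$ for all relevant $h$ — the term $\frac{\gamma 2^{h+1}}{\eps}$ is the worrying one, but $B_1,\dots,B_{\ell/2}$ only ever matter when the level is sparse, hence has total size $O(\ell + c\gamma u(1+\log u))$ by Lemma~\ref{lem:sparse-small-size}; so in the branch where we actually want to compute the $B_i$'s, the sparsity bound already caps $u$ implicitly through the structure of the tree. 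Actually the clean way: $2^h \le mg = O(\beta\log\frac1\eps) = O(\frac1\eps\log\frac1\eps)$, so $u = O(\frac1{\eps^2}\log\frac1\eps)$; that is too crude. Instead observe that $\log u = O(\log\frac1\eps)$ always, and the algorithm of Lemma~\ref{lem:est-density} with the output-sensitive Lemma~\ref{lem:sparse-fft} spends time proportional to the number of \emph{witnessed} elements, which in the sparse case is $O(\frac1\eps\log^2\frac1\eps)$. So the two subcases of the dichotomy of Lemma~\ref{lem:est-density} drive the two subcases of this lemma.

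In case (i) of Lemma~\ref{lem:est-density}, $\{B_1,\ldots,B_{\ell/2}\}$ is $(4\gamma)$-sparse, so by Lemma~\ref{lem:sparse-small-size} its total size is $\ell + c\cdot 4\gamma\cdot u\cdot(1+\log u)$. Here $\ell\cdot 2^h = mg$ and $u = \frac{2^{h+1}}{\eps}$, so $\gamma u(1+\log u) = O(\log\frac1\eps\cdot\frac{2^{h+1}}{\eps}\cdot\log\frac1\eps)$; summing over levels and accounting for the $2^h$ factor this needs care, but the key point is that at a \emph{fixed} level the total size of $B_1,\ldots,B_{\ell/2}$ is what we must bound by $O(\frac1\eps\log^2\frac1\eps)$ — and that fails unless $2^h = O(1/\eps)$-ish. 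The resolution: the lemma statement's bound $O(\frac1\eps\log^2\frac1\eps)$ must be read as the size after we additionally round coordinates to multiples of $2^{h+1}$ (as set up in the paragraph preceding the lemma), which collapses $[0,\frac{2^{h+1}}{\eps}]$ to $[0,\frac2\eps]$, so every $B_i$ lives in a universe of size $\frac2\eps$ and the $(4\gamma)$-sparse bound becomes $\ell + O(\gamma\cdot\frac1\eps\cdot\log\frac1\eps) = O(\frac1\eps\log^2\frac1\eps)$ using $\ell\le mg = O(\frac1\eps\log\frac1\eps)$. Then we compute each $B_i = A_{2i-1}+A_{2i}$ by sparse convolution (Lemma~\ref{lem:sparse-fft}) in time linear in $|B_i|$ up to $\log^5$ factors, for a total of $O(\frac1\eps\log^7\frac1\eps\,\mathrm{polyloglog}\frac1\eps)$; combined with the density-estimation cost this gives the claimed running time.

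In case (ii) of Lemma~\ref{lem:est-density}, we get $I\subseteq[1,\ell/2]$ with $|B_i|\ge\frac{2c\gamma u}{|I|}+1$ for $i\in I$, certifying that $\{B_1,\ldots,B_{\ell/2}\}$ is $\gamma$-dense. We want to conclude that $A_1+\cdots+A_\ell = (A_1+A_2)+\cdots+(A_{\ell-1}+A_\ell) = B_1+\cdots+B_{\ell/2}$ has a long sequence with small gaps. Apply Lemma~\ref{lem:dense2longseq} to the collection $\{B_1,\ldots,B_{\ell/2}\}$ (each $B_i$ contains $0$ since each $A_i$ does), with $u$ replaced by $2u = \frac{2^{h+2}}{\eps}$ — here the universe is $[0, 2u]$. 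This yields a sequence $s_1<\cdots<s_k$ in $B_1+\cdots+B_{\ell/2} = A_1+\cdots+A_\ell$ with $s_1\le\frac{2\eta'}{\gamma}$, $s_k\ge\frac{(\gamma-2)\eta'}{\gamma}$, and $s_i-s_{i-1}\le\frac{4(\ell/2)}{\gamma} = \frac{2\ell}{\gamma}$, where $\eta' = \sum_{i}\max(B_i)\ge\sum_i\max(A_{2i-1}) $ (or $\max(A_{2i})$) $\ge \frac12\sum_{j}\max(A_j)\ge\frac{2\beta}{\eps}$; in fact $\eta' = \max(A_1+A_2)+\cdots \ge \sum_j \max(A_j) \ge \frac{4\beta}{\eps}$ since $\max(A_{2i-1}+A_{2i}) = \max(A_{2i-1})+\max(A_{2i})$. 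So $s_1 \le \frac{2\eta'}{\gamma}$; we need $s_1\le\frac\beta\eps$, i.e. $\frac{2\eta'}{\gamma}\le\frac\beta\eps$. Since $\eta'\le \ell'\cdot 2u$ where $\ell'$ is the number of nonzero-max $B_i$'s — hmm, $\eta'$ can be as large as $\frac{mg}{2^h}\cdot\frac{2^{h+2}}{\eps} = \frac{4mg}{\eps} = O(\frac\beta\eps\log\frac1\eps)$, so choosing $\gamma = C\log\frac1\eps$ for a suitable constant $C$ forces $\frac{2\eta'}{\gamma}\le\frac{\beta}{\eps}$. Symmetrically $s_k \ge (1-\frac2\gamma)\eta' \ge (1-\frac2\gamma)\cdot\frac{4\beta}{\eps} \ge \frac{2\beta}{\eps}$ once $\gamma\ge 4$. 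And the gap bound $\frac{2\ell}{\gamma}\le 2\beta$ needs $\ell\le\gamma\beta$; since $\ell\le mg = O(\beta\log\frac1\eps)$ and $\gamma = \Theta(\log\frac1\eps)$ this holds for the constant chosen large enough. Relabel $z_i := s_i$.

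\textbf{Main obstacle.} The delicate part is calibrating $\gamma = \Theta(\log\frac1\eps)$ and the universe size so that (a) in the dense case the three numerical constraints $z_1\le\frac\beta\eps$, $z_k\ge\frac{2\beta}\eps$, $z_i-z_{i-1}\le2\beta$ all come out simultaneously from Lemma~\ref{lem:dense2longseq}, which hinges on the lower bound $\eta'\ge\frac{4\beta}\eps$ (inherited from Observation~\ref{obs:level-property} via $\max(A_{2i-1}+A_{2i}) = \max(A_{2i-1})+\max(A_{2i})$) and the upper bound $\eta'\le\frac{4mg}\eps$, and (b) in the sparse case Lemma~\ref{lem:sparse-small-size} applied in the collapsed universe $[0,\frac2\eps]$ gives total size $O(\frac1\eps\log^2\frac1\eps)$, which forces $\gamma = O(\log\frac1\eps)$ from above — so the two requirements pin $\gamma$ to $\Theta(\log\frac1\eps)$ and must be checked to be consistent. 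Everything else is bookkeeping: the running time is the sum of the \texttt{EstimateDensity} cost and, in the sparse case, the sparse-convolution cost, both $O(\frac1\eps\,\mathrm{polylog}\frac1\eps\,\mathrm{polyloglog}\frac1\eps)$ with the exponent on the log coming out to $8$ as stated ($7$ from density estimation, or $5+$ overhead from the $\log\frac1\eps$-many $B_i$'s each costing $\log^5$, whichever dominates).
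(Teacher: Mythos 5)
Your proposal follows the same chain of lemmas as the paper's proof — \texttt{EstimateDensity} drives the dichotomy, the sparse branch uses Lemma~\ref{lem:sparse-small-size} and sparse convolution, and the dense branch invokes Lemma~\ref{lem:dense2longseq} — so the overall strategy is right. However, there is a genuine gap, and it sits exactly where you flag the ``main obstacle.''

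You repeatedly notice that running \texttt{EstimateDensity} and Lemma~\ref{lem:sparse-small-size} with universe $u = \frac{2^{h+1}}{\eps}$ gives bounds with a factor $\gamma u = \frac{\gamma 2^{h+1}}{\eps}$ that can be far larger than $\frac1\eps$, and you write that ``the resolution'' is the rounding to multiples of $2^{h+1}$, which collapses the universe to $[0,\frac2\eps]$. But you never actually carry out this collapse: your density estimation is still described as a call to \texttt{EstimateDensity}$(A_1,\ldots,A_\ell,\gamma)$ on the \emph{unrescaled} sets, and your dense-case application of Lemma~\ref{lem:dense2longseq} is done on the unrescaled $B_i$'s in $[0,2u]$ with $u = \frac{2^{h+1}}{\eps}$. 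As written, the running time of the density estimation, and the sparse-case size bound, do not come out to $O(\frac1\eps\,\mathrm{polylog}\,\frac1\eps)$ for large $h$. The fix is exactly what you gesture at but do not execute: one must explicitly define the rescaled sets $A_i' = \{a/2^{h+1} : a \in A_i\}$ (which are subsets of $[0,\frac1\eps]$ because of the prior rounding), run \emph{all} of the density estimation, the Lemma~\ref{lem:sparse-small-size} bound, the sparse convolutions, and Lemma~\ref{lem:dense2longseq} on the $A_i'$ (equivalently the $B_i' = A_{2i-1}'+A_{2i}'$), and only at the end multiply the resulting sequence by $2^{h+1}$ to land back in $B_1+\cdots+B_{\ell/2}$. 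This is what makes the time bound $O(\frac1\eps\log^8\frac1\eps\,\mathrm{polyloglog}\,\frac1\eps)$ and the sparse size bound $O(\frac1\eps\log^2\frac1\eps)$ hold uniformly over $h$.

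A secondary, smaller issue is the choice of $\gamma$. You set $\gamma = \Theta(\log\frac1\eps)$ ``for a suitable constant,'' which is the right order of magnitude, but the arithmetic that forces $z_1 \le \frac\beta\eps$ uses the specific upper bound $\eta \le \frac{\ell}{\eps}$ (in the rescaled universe) together with $\ell \cdot 2^h = mg$, which pins down exactly $\gamma \ge \frac{4mg}{\beta}$; the paper therefore takes $\gamma = \max(4,\lceil\frac{4mg}{\beta}\rceil)$, and this is also what makes $z_i - z_{i-1} \le \frac{8mg}{\gamma} \le 2\beta$ close. If you leave $\gamma$ as an unspecified $C\log\frac1\eps$ you are implicitly assuming a relationship between the constant $C$ and the ratio $mg/\beta$ that you never establish; making $\gamma$ an explicit function of $m,g,\beta$ removes this ambiguity.
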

\begin{proof}
    Recall that the elements of $A_i$'s are multiples of $2^{h+1}$.  For $i \in [1, \ell]$, define $A'_i = \{\frac{a }{2^{h+1}} : a \in A_i\}$. $A'_1, \ldots, A'_\ell$ are subsets of $[0, \frac{1}{\eps}]$.  For $i \in [1, \ell/2]$, define $B'_{i} = A'_{2i-1} + A'_{2i}$. We estimate the density of $\{B'_1, \ldots, B'_{\ell/2}\}$ via Lemma~\ref{lem:est-density}. The time cost is
    \[
        O((\ell + \frac{\gamma}{\eps}) \log^7 \frac{1}{\eps}\,\mathrm{polyloglog}\, \frac{1}{\eps}) = \widetilde{O}(\frac{1}{\eps}).
    \]
    
    If $\{B'_1, \ldots, B'_{\ell/2}\}$ is $(4\gamma)$-sparse, by Lemma~\ref{lem:sparse-small-size},  
    \[
        \sum_{i=1}^{\ell/2}|B'_i| \leq \frac{\ell}{2} +  4c\gamma \cdot \frac{2}{\eps} \cdot  (1 + \log \frac{2}{\eps}) = O(\frac{1}{\eps} \log^2 \frac{1}{\eps}).
    \]
    We can compute each $B'_i$ via Lemma~\ref{lem:sparse-fft}. The time cost for computing all $B'_i$'s is 
    \[
        O(\sum_{i=1}^{\ell/2} |B'_i| \log^5 \frac{1}{\eps}\,\mathrm{polyloglog}\, \frac{1}{\eps}) = \widetilde{O}(\frac{1}{\eps}).
    \] 
    From $B'_i$, we can easily obtain $B_i$ by multiply each element of $B'_i$ by $2^{h+1}$.

    Consider the case that $\{B'_1, \ldots, B'_{\ell/2}\}$ is $\gamma$-dense.  Let $\eta = \max(B'_1) + \cdots + \max(B'_{\ell/2})$.  We have that 
    \begin{equation}\label{eq:bound-sum-max}
            \frac{\beta}{2^{h-1} \eps } \leq \eta \leq \frac{\ell}{\eps}.
    \end{equation}
    The first inequality is due to that $\max(A_1) + \cdots + \max(A_\ell) \geq \frac{4\beta}{\eps}$, and the second is due to that $\max(A'_i) \leq \frac{1}{\eps}$.
    By Lemma~\ref{lem:dense2longseq}, $B'_1 + \cdots + B'_{\ell/2}$ contains a sequence $z'_1, \ldots, z'_k$ such that $z'_1 \leq \frac{2\eta}{\gamma}$, $z'_k \geq \frac{(\gamma-2)\eta}{\gamma}$, and that $z'_{i} - z'_{i-1} \leq \frac{2\ell}{\gamma}$ for $i \in [2,k]$.  For $i \in [1, k]$, let $z_i =2^{h+1} \cdot z'_i$.  One can see that $z_1, \ldots, z_k$ is a sequence contained in $B_1 + \cdots + B_{\ell/2} = A_1 + \cdots + A_{\ell}$. By inequality~\eqref{eq:bound-sum-max} and our choice of $\gamma$, we have the followings.
    \begin{align*}
        z_i - z_{i-1} &\leq \frac{2\ell}{\gamma}\cdot 2^{h+1} = \frac{4mg}{\gamma} \leq \beta\\
        z_1 &\leq  \frac{2\eta}{\gamma} \cdot 2^{h+1} \leq \frac{4mg}{\gamma\eps} \leq \frac{\beta}{\eps}\\
        z_k &\geq  \frac{(\gamma-2)\eta}{\gamma} \cdot 2^{h+1} \geq \frac{\gamma-2}{\gamma}\cdot \frac{4\beta}{\eps} \geq \frac{2\beta}{\eps}.\qedhere
    \end{align*}
\end{proof}

\begin{lemma}\label{lem:main}
    Given the $r$ partitions of $X$ resulting from color-coding,  we can obtain a set of size $O(\frac{1}{\eps}\log^2 \frac{1}{\eps})$ that approximates $\S_X[\frac{\beta}{\eps},\frac{2\beta}{\eps}]$ with additive error $O(\beta\log^2\frac{n}{\eps})$. The time cost is $\widetilde{O}(n + \frac{1}{\eps})$.
\end{lemma}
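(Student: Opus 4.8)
The plan is to run the tree-based computation of $S = \bigcup_{j=1}^r S^j_1 + \cdots + \bigcup_{j=1}^r S^j_{m}$ level by level, from the leaves upward, but at each level invoke Lemma~\ref{lem:tree-operation} to decide whether that level is sparse (and can be computed cheaply via sparse convolution) or dense (in which case we short-circuit the whole computation). First I would set up the tree exactly as in Section~\ref{subsec:tree-like-computing}: level $0$ has the $mg$ leaf sets (with the subsets from all $r$ color-coding repetitions attached to the bottom $\log g$ levels), there are $1 + \log(mg) = O(\log\frac1\eps)$ levels, and by Observation~\ref{obs:level-property} a level at height $h$ has $\ell = mg/2^h$ nodes, each a subset of $[0, 2^{h+1}/\eps]$ containing $0$, with $\sum_i \max(A_i) \geq 4\beta/\eps$. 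Before computing level $h+1$ from level $h$, I would apply Lemma~\ref{lem:tree-operation} to the nodes $A_1,\dots,A_\ell$ of level $h$ (after the rounding to multiples of $2^{h+1}$ described just above the lemma statement).

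There are two cases at each step. If Lemma~\ref{lem:tree-operation}(i) fires, we have explicitly computed $B_1,\dots,B_{\ell/2}$ with total size $O(\frac1\eps\log^2\frac1\eps)$ in time $O(\frac1\eps\log^8\frac1\eps\,\mathrm{polyloglog}\,\frac1\eps)$, and we recurse on level $h+1$. If we reach the top of the tree this way, the single surviving node is a set of size $O(\frac1\eps\log^2\frac1\eps)$ equal to $S$ up to the accumulated rounding error; we then intersect with $[\frac{\beta}{\eps},\frac{2\beta}{\eps}]$ (widened by the error) and are done. If instead Lemma~\ref{lem:tree-operation}(ii) fires at some level, we learn that $A_1 + \cdots + A_\ell$ — which, since every node contains $0$ and the sets above it are sumsets of the sets below, is a subset of $S$ — contains a sequence $z_1 < \cdots < z_k$ with $z_1 \leq \frac{\beta}{\eps}$, $z_k \geq \frac{2\beta}{\eps}$, and consecutive gaps at most $2\beta$. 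Then I would invoke Lemma~\ref{lem:easy-approx} with $w = \frac{\beta}{\eps}$, $v = \frac{2\beta}{\eps}$, $\delta = 2\beta$ to get that the explicit set $\{\frac{\beta}{\eps} + j\cdot 2\beta : j \in [0, \frac{1}{2\eps}]\}$, which has size $O(\frac1\eps)$, approximates $S[\frac{\beta}{\eps},\frac{2\beta}{\eps}]$ with additive error $2\beta$. (One should note that $z_1,\dots,z_k$ lives in $A_1+\cdots+A_\ell \subseteq S$, and also that every element of the output set lies between two consecutive $z_i$'s, hence within $2\beta$ of an element of $S$; this is exactly what Lemma~\ref{lem:easy-approx} gives.)

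For the error bound, I would account for the rounding: at level $h$ we round each of the $\ell = mg/2^h$ nodes' elements up to a multiple of $2^{h+1}$, costing at most $\ell\cdot 2^{h+1} = 2mg = O(\beta\log^2\frac1\eps)$ additive error per level (recall $mg = O(\beta\log\frac1\eps)$ is an overcount; more precisely $g = O(\log^2\frac1\eps)$, so $mg = O(\beta\log\frac1\eps)$ after the $\log$ in $m$ cancels — I would just carry the bound $O(\beta\,\mathrm{polylog}\frac1\eps)$), and summing over the $O(\log\frac1\eps)$ levels and the factor $r = O(\log\frac1\eps)$ for the repeated bottom levels gives $O(\beta\log^3\frac1\eps)$ total, matching the claim. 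Adding the $O(2\beta)$ or $O(\eps t)$-type error from the approximation step itself does not change the order. For the running time: color-coding on $n$ elements costs $O(n\log\frac1\eps)$ to partition; each of the $O(\log\frac1\eps)$ levels, repeated $O(\log\frac1\eps)$ times for the bottom portion, costs $O(\frac1\eps\log^8\frac1\eps\,\mathrm{polyloglog}\frac1\eps)$ by Lemma~\ref{lem:tree-operation}, for a total of $O(\frac1\eps\log^{10}\frac1\eps\,\mathrm{polyloglog}\frac1\eps)$; the short-circuit case is even cheaper. Summing gives $O(n\log\frac1\eps + \frac1\eps\log^{10}\frac1\eps\,\mathrm{polyloglog}\frac1\eps)$ as stated.

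The main obstacle I anticipate is bookkeeping the additive error correctly across the heterogeneous steps: the rounding performed \emph{inside} each call to Lemma~\ref{lem:tree-operation} (rounding to multiples of $2^{h+1}$) must be separated from the approximation error we incur at the end, and in the dense case one must be careful that the sequence $z_1,\dots,z_k$ is extracted from the \emph{rounded} sumset, so the error is measured against the rounded version of $\S_X$ and then transferred back. A second subtlety is that the output must approximate $\S_X[\frac\beta\eps,\frac{2\beta}\eps]$ and not merely $S[\frac\beta\eps,\frac{2\beta}\eps]$; this is handled by Lemma~\ref{lem:mod-color-coding}, which (with probability $1-q^*$) makes these two sets agree on the relevant window, so the randomized success probability is inherited. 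Everything else is a routine assembly of the lemmas already established.
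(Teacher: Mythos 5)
Your proposal follows the same route as the paper: compute levels via Lemma~\ref{lem:tree-operation} until reaching the root or hitting a dense level, and in the dense case output the explicit arithmetic grid via Lemma~\ref{lem:easy-approx}, with the same time and error accounting. One small imprecision: $A_1 + \cdots + A_\ell$ is \emph{not} a subset of $S$ once the per-level rounding has been applied (rounding up can introduce elements not in the true sumset), so the grid should first be shown to approximate $S_A := A_1 + \cdots + A_\ell$ with error $2\beta$, and then transferred to $\S_X$ through the accumulated $O(\beta\log^3\frac{1}{\eps})$ rounding error, which is what the paper does and what your final paragraph essentially flags as the needed fix.
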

\begin{proof}
    Each partition form a level $0$ of the tree for computing $\bigcup_{j=1}^r S^j_1 + \cdots + \bigcup_{j=1}^r S^j_{m}$.  We keep computing new levels until we reach the root or we encounter a level that is dense (That is, case (ii) of Lemma~\ref{lem:tree-operation}). Recall that there are $1 + \log mg$ levels and that the first $1 + \log g$ levels may be repeated, but only for at most $r$ times. Every level we have computed has a total size of $O(\frac{1}{\eps} \log^2 \frac{1}{\eps})$ (except for level $0$, which has a total size of $n$).  The total time cost for rounding the levels is
    \[
        O(r\cdot (n +  \frac{1}{\eps} \log^2 \frac{1}{\eps}\log mg)) = \widetilde{O}(n + \frac{1}{\eps}),
    \]
    and that for computing the levels is at most 
    \[
        r\log mg \cdot \widetilde{O}(\frac{1}{\eps}) = \widetilde{O}(\frac{1}{\eps}).
    \]

    Recall that rounding the levels incurs a total additive error of $O(\beta\log^2\frac{n}{\eps})$. If we reach the root of the tree, then we get a set that approximates $\S_X[\frac{\beta}{\eps}, \frac{2\beta}{\eps}]$ with additive error $O(\beta\log^2\frac{n}{\eps})$.  The size of this set is $O(\frac{1}{\eps}\log^2 \frac{1}{\eps})$. Consider the other case that we encounter at a dense level. Lemma~\ref{lem:tree-operation} implies that $A_1 + \cdots + A_{\ell}$ has a sequence $z_1 < \cdots < z_k$ such that $z_1 \leq \frac{\beta}{\eps}$, $z_k \geq \frac{2\beta}{\eps}$, and $z_i - z_{i-1} \leq \beta$ for $i \in [2, k]$.  We shall approximate $\S_X[\frac{\beta}{\eps}, \frac{2\beta}{\eps}]$ in the same spirit as Lemma~\ref{lem:easy-approx}. Let $S_A = A_1 + \cdots + A_{\ell}$.  By Lemma~\ref{lem:easy-approx}, the following set of size $O(\frac{1}{\eps})$ approximates $S_A[\frac{\beta}{\eps}, \frac{2\beta}{\eps}]$ with additive error $2\beta$.
    \[
        \widetilde{S} = \{\frac{\beta}{\eps} + \beta i : i \in [0, \frac{2}{\eps}]\}.
    \]
    We shall show that $\widetilde{S}$ approximates $\S_X[\frac{\beta}{\eps}, \frac{2\beta}{\eps}]$ with additive error $O(\beta\log^2 \frac{n}{\eps})$. Clearly, for any $s \in \S_X[\frac{\beta}{\eps}, \frac{2\beta}{\eps}]$, there is $\tilde{s} \in \widetilde{S}$ such that 
    \[
        s- O(\beta\log^2 \frac{n}{\eps}) \leq s - \beta \leq \tilde{s} \leq s + \beta \leq s + O(\beta\log^2 \frac{n}{\eps}).
    \]
    So the first condition of Definition~\ref{def:approx} is satisfied. Now consider any $\tilde{s} \in \widetilde{S}$.  Since $\widetilde{S}$ approximates $S_A[\frac{\beta}{\eps}, \frac{2\beta}{\eps}]$ with additive error $\beta$, there exist $s_A \in S_A[\frac{\beta}{\eps}, \frac{2\beta}{\eps}]$ such that 
    \[
           \tilde{s} - \beta \leq s_A \leq \tilde{s} + \beta.
    \]
    The error between $S_A$ and $\S_X$ is due to rounding, which is bounded by $O(\beta\log^2 \frac{n}{\eps})$. That is to say for every $s_A \in S_A$, there is $s \in \S_X$ such that 
    \[
            s_A- O(\beta\log^2 \frac{n}{\eps}) \leq s \leq s_A + O(\beta\log^2 \frac{n}{\eps}).
    \]
    So there is $s \in \S_X$ such that 
    \[
           \tilde{s} - \beta - O(\beta\log^2 \frac{n}{\eps}) \leq s \leq \tilde{s} + \beta + O(\beta\log^2 \frac{n}{\eps}).
    \]
    The second condition of Definition~\ref{def:approx} is also satisfied.
\end{proof}

\section{Recovering a Solution}\label{sec:algo-solution}
Let $\widetilde{S}$ be the set we obtained via Lemma~\ref{lem:main}.  Let $\delta^* = O(\beta\log^2\frac{n}{\eps})$ be the additive error.  This section gives a approach for recovering a subset $Y$ of $X$ with $\tilde{s} - \delta^* \leq \Sigma(Y) \leq \tilde{s} + \delta^*$ for any $\tilde{s} \in \widetilde{S}$.  

\subsection{A High-Level Overview}
\begin{observation}
    Given any $y \in A + B$, one can recover $a \in A$ and $b \in B$ such that $y = a + b$ as follows: sort $A$ and $B$, for each $a\in A$, check if $y - a \in B$ using binary search. The running time is $O(|A| \log |A| + |B|\log |B|)$, which is no larger than the time needed to compute $A + B$.  
\end{observation}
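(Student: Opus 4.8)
The statement essentially describes its own algorithm, so the plan is only to check correctness, bound the running time, and then compare that bound against the cost of forming $A+B$.

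First I would establish correctness in both directions. Since $y \in A+B$, the definition of the sumset supplies a witness $(a^\star, b^\star)$ with $a^\star \in A$, $b^\star \in B$, and $a^\star + b^\star = y$; when the loop reaches $a = a^\star$, the queried value $y - a^\star$ equals $b^\star$ and hence lies in $B$, so the binary search on the sorted copy of $B$ succeeds and the procedure never fails on a valid input. Conversely, whenever the procedure outputs a pair $(a,\, y-a)$ it has just verified $y - a \in B$, so $y = a + (y-a)$ is a genuine decomposition with both parts in the correct sets.

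Next I would bound the running time: sorting $A$ and $B$ costs $O(|A|\log|A| + |B|\log|B|)$, and the loop performs at most $|A|$ binary searches of cost $O(\log|B|)$ each, contributing $O(|A|\log|B|)$ in total. Splitting on whether $|B|\leq|A|$ shows that $|A|\log|B|$ is absorbed into $O(|A|\log|A| + |B|\log|B|)$, which is the claimed bound.

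Finally, to justify that this does not exceed the cost of computing $A+B$, I would note that $A + \{b\} \subseteq A+B$ for any fixed $b \in B$ (and symmetrically $\{a\} + B \subseteq A+B$), so $|A+B| \geq \max\{|A|,|B|\}$, while $A, B \subseteq [0,u]$ forces $u \geq \max\{|A|,|B|\} - 1$; hence both the FFT bound $O(u\log u)$ of Lemma~\ref{lem:fft} and the output-sensitive bound of Lemma~\ref{lem:sparse-fft} dominate $O(|A|\log|A| + |B|\log|B|)$ up to the polylogarithmic factors those lemmas already carry. There is no genuine obstacle in this argument; the only point that asks for a hair of care is the elementary arithmetic absorbing the $|A|\log|B|$ term, which the case split above handles.
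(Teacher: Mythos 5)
Your argument is correct and fills in precisely the elementary details the paper leaves implicit, since the paper states this as an observation without a written proof. The correctness check, the case split that absorbs the $|A|\log|B|$ term, and the comparison against the sumset-computation cost via $|A+B| \geq \max\{|A|,|B|\}$ and $u \geq \max\{|A|,|B|\}-1$ are all exactly what is needed.
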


If $\widetilde{S}$ is obtained by explicitly computing all levels, then by the above observation, $Y$ can be recovered by tracing back the computation of levels in $\widetilde{O}(n + \frac{1}{\eps})$ time. This method, however, does not work when some level is dense. This is because in the dense case, we use additive combinatorial tools to show the existence of a particular sequence (Lemma~\ref{lem:tree-operation}(ii)), skip the remaining levels, and directly give the approximation set $\widetilde{S}$.  

In order to make ``tracing back'' work, we tackle dense levels with the following alternative approach: even when a level is dense, it will be computed but only partially. We use additive combinatorics tools to ensure that the partial level has a small total size but is still dense enough to provide a good approximation.

\subsection{Computing Partial Levels}
We first show that given an integer $k$, it is possible to compute a partial sumset whose size is $k$.

\begin{lemma}\label{lem:subset-of-sumset}
    Let $A$ and $B$ be two non-empty subsets of $[0,u]$ containing $0$.  Let $k \leq |A + B|$ be a positive integer. In $O((k + |A| + |B|)\log^7 u\,\mathrm{polyloglog}\,u)$ time, we can compute a set $H \subseteq A+B$ such that $|H| = k$ and that $H$ contains $0$ and $\max(A) + \max(B)$.
\end{lemma}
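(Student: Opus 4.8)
The plan is to reduce the task to computing a single \emph{prefix‑truncated} sub‑sumset. First I would sort $A$ and $B$ and write $B = \{b_1 < b_2 < \cdots < b_m\}$, so $b_1 = 0$ and $b_m = \max(B)$. For $j \in [1,m]$ put $P_j = \{b_1, \ldots, b_j\}$ and $C_j = A + P_j$, so that $A = C_1 \subseteq C_2 \subseteq \cdots \subseteq C_m = A + B$. The crucial observation is that $C_{j+1} = C_j \cup (A + b_{j+1})$, hence $|C_{j+1}| \leq |C_j| + |A|$: adjoining one new element of $B$ creates at most $|A|$ new sums. Consequently the smallest index $j^*$ with $|C_{j^*}| \geq k$ — which exists because $|C_m| = |A+B| \geq k$ — also satisfies $|C_{j^*}| \leq (k-1) + |A| = O(k + |A|)$ (when $j^* = 1$ we simply have $|C_{j^*}| = |A|$).

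Given this, I would first locate $j^*$ by binary search on $[1,m]$: the predicate ``$|C_j| \geq k$'' is monotone in $j$ since the $C_j$ are nested, and each evaluation is exactly an instance of Lemma~\ref{lem:est-sumset-size} applied to $A$ and $P_j$ (both subsets of $[0,u]$) with threshold $k$, costing $O(k \log^6 u\,\mathrm{polyloglog}\,u)$. The $O(\log u)$ evaluations therefore cost $O(k \log^7 u\,\mathrm{polyloglog}\,u)$ in total. Once $j^*$ is known, I would compute $C_{j^*} = A + P_{j^*}$ explicitly using the sparse convolution of Lemma~\ref{lem:sparse-fft}; since $|C_{j^*}| = O(k + |A|)$ this takes $O((k + |A|) \log^5 u\,\mathrm{polyloglog}\,u)$.

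It remains to extract $H$. We have $0 = 0 + 0 \in C_{j^*}$ (as $0 \in A$ and $0 = b_1 \in P_{j^*}$), and $M := \max(A) + \max(B) \in A + B$. Since $|C_{j^*}| \geq k$, the set $C_{j^*} \cup \{M\} \subseteq A + B$ has at least $k$ elements and contains both $0$ and $M$, so I would take $H$ to be any $k$ distinct elements of it including $0$ and $M$ (if $k = 1$ then $0 = M$ is forced and $H = \{0\}$). Including the $O((|A| + |B|)\log u)$ cost of the initial sorting, the total running time is $O((k + |A| + |B|)\log^7 u\,\mathrm{polyloglog}\,u)$, as claimed.

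The one point that genuinely needs care is the two‑sided size bound on the truncated sumset $C_{j^*}$: it must be \emph{at least} $k$ so that it supplies the required $k$ outputs, yet \emph{at most} $O(k + |A|)$ so that Lemma~\ref{lem:sparse-fft} can afford to build it. Both halves rest on the single fact that each step $P_j \mapsto P_{j+1}$ enlarges the sumset by at most $|A|$, so that the first time the size reaches $k$ it overshoots by less than $|A|$; the monotonicity justifying the binary search and the membership of $0$ and $M$ are then routine. One harmless variant, if one prefers $M \in C_{j^*}$ automatically rather than adjoining it at the end, is to keep $b_m$ in every $P_j$ (replace $P_j$ by $P_j \cup \{b_m\}$); this changes $|C_1|$ to at most $2|A|$ and leaves the bound $|C_{j+1}| - |C_j| \leq |A|$ intact.
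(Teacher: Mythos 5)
Your proof is correct and is essentially the same as the paper's: you order $B$, take prefix sumsets $C_j = A + P_j$, observe that each step grows the sumset by at most $|A|$, binary-search for the first $j^*$ with $|C_{j^*}| \geq k$ using Lemma~\ref{lem:est-sumset-size}, build $C_{j^*}$ with Lemma~\ref{lem:sparse-fft}, and adjoin $\max(A)+\max(B)$ --- this is precisely the paper's construction with $B_i$ in place of your $P_i$ (the paper merely dispatches the case $|A| \geq k$ separately at the outset, which your $j^*=1$ parenthetical absorbs).
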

\begin{proof}
    If $|A| \geq k$, $H$ can be easily obtained from $A \cup \{\max(A) + \max(B)\}$. Assume that $|A| < k$. Let $m = |B|$. For $i \in [1, m]$, define $B_i$ be the set of the first $i$ elements of $B$.  Note that $|A + B_i|$ is an increasing function of $i$. Moreover, let $i^*$ be the smallest $i$ with $|A + B_i| \geq k$.  It must be that $k \leq |A + B_{i^*}| \leq k + |A|$.  And $i^*$ must exist since $k \leq |A + B|$.  We use binary search to find $i^*$.  Start with $i = m/2$. We can determine whether $|A + B_i| \geq k$ via Lemma~\ref{lem:est-sumset-size}.  We search the left half if $|A + B_i| \geq k$, and search the right half otherwise. The time cost for search $i^*$ is 
    \[
        \log m \cdot O(k\log^6 u\,\mathrm{polyloglog}\,u) = O(k\log^7 u\,\mathrm{polyloglog}\,u).
    \] 
    Then we compute $A + B_{i^*}$ via Lemma~\ref{lem:sparse-fft}. Given $A + B_{i^*}$, $H$ will be easy to constructed. The time cost for computing $|A + B_{i^*}|$ is 
    \[
            O((k + |A|)\log^5 u\,\mathrm{polyloglog}\,u) =  O((k + |A| + |B|)\log^5 u\,\mathrm{polyloglog}\,u).\qedhere
    \]
\end{proof}

Next we show that given a level that nicely approximates $\S_X[\frac{\beta}{\eps}, \frac{2\beta}{\eps}]$, we can always compute a new (perhaps partial) level of small size that also nicely approximates $\S_X[\frac{\beta}{\eps}, \frac{2\beta}{\eps}]$, even when the next level is dense.  Recall that in level $h$,  there are $\ell$ nodes $A_1, \ldots, A_\ell$ where $\ell = \frac{mg}{2^h} \leq mg = O(\beta\log \frac{1}{\eps})$. For each $i \in [1, \ell]$, $A_i$ is a subset of $[0, \frac{2^{h+1}}{\eps}]$ with $0 \in A_i$. Moreover, $\sum_{i=1}^{\ell} \max(A_i) \geq \frac{4\beta}{\eps}$ and the elements of $A_i$'s are assumed to be multiples of $2^{h+1}$.

\begin{lemma}\label{lem:compute-dense-level}
    Let $A_1, \ldots, A_\ell$ be a level such that $A_1 + \cdots + A_{\ell}$ approximates $\S_X[\frac{\beta}{\eps}, \frac{2\beta}{\eps}]$ with additive error $\delta$. 
    In $\widetilde{O}(\frac{1}{\eps} + \sum_{i=1}^{\ell} |A_\ell|)$ time, we can compute $Z_1, \ldots, Z_{\ell/2}$ such that the following is true.
    \begin{enumerate}[label={\normalfont (\roman*)}]
        \item For $i\in[1,\ell/2]$, $Z_i \subseteq A_{2i-1} + A_{2i}$ and $\{0, \max(A_{2i-1}) + \max(A_{2i})\}\subseteq Z_i$.

        \item $\sum_{i=1}^{\ell/2} |Z_{i}| = O(\frac{1}{\eps} \log^2\frac{1}{\eps}) $

        \item $Z_1 + \cdots + Z_{\ell/2}$ approximates $\S_X[\frac{\beta}{\eps}, \frac{2\beta}{\eps}]$ with additive error $\max(\delta, \beta)$.
    \end{enumerate} 
\end{lemma}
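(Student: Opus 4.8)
The plan is to repeat the argument behind Lemma~\ref{lem:tree-operation}, but to make it constructive: where that proof merely invokes Corollary~\ref{coro:add-comb} to assert that a long arithmetic progression (hence a long low-gap sequence) exists inside $A_1+\cdots+A_\ell$, we will instead \emph{compute} small witness sets $Z_i\subseteq A_{2i-1}+A_{2i}$ whose iterated sumset still contains such a sequence. The new tool that enables this is Lemma~\ref{lem:subset-of-sumset}, which produces a partial sumset of any prescribed size (containing $0$ and the top element). As in Lemma~\ref{lem:tree-operation}, I first rescale: the elements of the $A_i$ are multiples of $2^{h+1}$, so set $A'_i=\{a/2^{h+1}:a\in A_i\}\subseteq[0,\tfrac1\eps]$, let $\gamma=\max(4,\lceil 4mg/\beta\rceil)$, let $B'_i=A'_{2i-1}+A'_{2i}\subseteq[0,\tfrac2\eps]$, and run \texttt{EstimateDensity} (Lemma~\ref{lem:est-density}) on $\{B'_1,\dots,B'_{\ell/2}\}$ with parameter $\gamma$; afterwards we rescale whatever we compute back by $2^{h+1}$.

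If the collection is reported $(4\gamma)$-sparse, then by Lemma~\ref{lem:sparse-small-size} its total size is $O(\tfrac1\eps\log^2\tfrac1\eps)$, so I compute every $B'_i$ explicitly (Lemma~\ref{lem:sparse-fft}) and output $Z_i:=A_{2i-1}+A_{2i}$; here $Z_1+\cdots+Z_{\ell/2}=A_1+\cdots+A_\ell$ exactly, so (i)--(iii) are immediate with error $\delta$. In the remaining case \texttt{EstimateDensity} returns an index set $I$ with $|B'_i|\ge\frac{2c\gamma/\eps}{|I|}+1$ for all $i\in I$; since $|B'_i|\le\tfrac2\eps+1$ this forces $|I|\ge c\gamma\ge\gamma$. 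Following the proof of Lemma~\ref{lem:dense2longseq}, I pick $I^*\subseteq I$ to be the $\lceil|I|/\gamma\rceil$ indices with smallest $\max(B'_i)=\max(A'_{2i-1})+\max(A'_{2i})$ (these maxima are computable without forming the $B'_i$), so that $\sum_{i\in I^*}\max(B'_i)\le\frac2\gamma\sum_{i=1}^{\ell/2}\max(B'_i)$. For $i\in I^*$ I call Lemma~\ref{lem:subset-of-sumset} to compute $Z'_i\subseteq B'_i$ of size exactly $k':=1+\lceil\frac{2c\gamma/\eps}{|I|}\rceil$ with $\{0,\max(B'_i)\}\subseteq Z'_i$ (valid because $k'\le|B'_i|$), and for $i\notin I^*$ I just set $Z'_i:=\{0,\max(B'_i)\}$; finally $Z_i:=2^{h+1}Z'_i$.

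It remains to verify the three properties in the dense case. Property (i) holds by construction. Property (ii) holds because $\sum_{i\in I^*}|Z'_i|=|I^*|k'=O(\tfrac1\eps)$ (using $|I^*|(k'-1)\le\frac{2|I|}{\gamma}\bigl(\frac{2c\gamma/\eps}{|I|}+1\bigr)=O(\tfrac1\eps)$ via $|I|\le\ell/2$ and $\gamma\ge 4mg/\beta$) and $\sum_{i\notin I^*}|Z'_i|\le\ell=O(\beta\log\tfrac1\eps)$. For property (iii): the selected collection satisfies $|I^*|(k'-1)\ge\frac{|I|}{\gamma}\cdot\frac{2c\gamma/\eps}{|I|}=c\cdot\tfrac2\eps$, so Corollary~\ref{coro:add-comb} (with universe bound $\tfrac2\eps$) gives an arithmetic progression $a_1<\cdots<a_{2/\eps}$ in $\sum_{i\in I^*}Z'_i$ whose common difference is $\Delta'\le\frac{2}{2/\eps}\sum_{i\in I^*}\max(Z'_i)=\eps\sum_{i\in I^*}\max(B'_i)\le\frac{2\eps}{\gamma}\sum_i\max(B'_i)$; since $a_{2/\eps}-a_1=(\tfrac2\eps-1)\Delta'\ge\tfrac2\eps-1$, Lemma~\ref{lem:ap-extend} applied with the sets $\{Z'_i\}_{i\notin I^*}$ (subsets of $[0,\tfrac2\eps]$ containing $0$) shows $\sum_i Z'_i$ contains a sequence from $a_1$ to $a_{2/\eps}+\sum_{i\notin I^*}\max(B'_i)$ whose consecutive gaps are at most $\Delta'$. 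Rescaling by $2^{h+1}$ and repeating the bounds of Lemma~\ref{lem:tree-operation} (using $\sum_{j}\max(A_j)\ge\frac{4\beta}{\eps}$, $\max(B'_i)\le\tfrac2\eps$, $\gamma\ge 4mg/\beta$, $\gamma\ge4$) yields a sequence $z_1<\cdots<z_k$ in $Z_1+\cdots+Z_{\ell/2}$ with $z_1\le\frac\beta\eps$, $z_k\ge\frac{2\beta}\eps$, and $z_i-z_{i-1}\le2\beta$. This sequence certifies Definition~\ref{def:approx}(i) for $\S_X[\frac\beta\eps,\frac{2\beta}\eps]$ with error $2\beta$, while Definition~\ref{def:approx}(ii) comes for free from $Z_1+\cdots+Z_{\ell/2}\subseteq A_1+\cdots+A_\ell$ together with the hypothesis that $A_1+\cdots+A_\ell$ approximates $\S_X[\frac\beta\eps,\frac{2\beta}\eps]$ with error $\delta$; hence the overall error is $\max(\delta,2\beta)$. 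The running time is dominated by the single \texttt{EstimateDensity} call, the $\le\ell/2$ explicit sparse convolutions in the sparse branch, and the $|I^*|$ calls to Lemma~\ref{lem:subset-of-sumset} in the dense branch, all over a universe of size $\tfrac1\eps$; summing gives $O\bigl((\tfrac1\eps\log\tfrac1\eps+\sum_{i=1}^\ell|A_i|)\log^7\tfrac1\eps\,\mathrm{polyloglog}\,\tfrac1\eps\bigr)$.

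I expect the main obstacle to be pinning down $k'$: it must be small enough to keep $\sum_{i\in I^*}|Z'_i|=O(\tfrac1\eps)$, large enough that $\{Z'_i\}_{i\in I^*}$ still clears the density threshold of Corollary~\ref{coro:add-comb}, and at most $|B'_i|$ so that Lemma~\ref{lem:subset-of-sumset} applies — all three riding on the exact inequality $|B'_i|\ge\frac{2c\gamma/\eps}{|I|}+1$ delivered by Lemma~\ref{lem:est-density}(ii) (and on $|I|\ge\gamma$, which is why that side observation is needed). A secondary subtlety is to keep the containment $Z_i\subseteq A_{2i-1}+A_{2i}$ on \emph{every} coordinate, in particular to retain the genuine two-element witnesses $\{0,\max(A_{2i-1})+\max(A_{2i})\}$ for $i\notin I^*$ rather than discarding those coordinates, since this containment is exactly what lets the ``upper'' side of Definition~\ref{def:approx} survive with no extra work.
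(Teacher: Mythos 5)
Your proof is correct and follows the same strategy as the paper's: rescale, estimate density, compute full sumsets in the sparse case, and in the dense case use Lemma~\ref{lem:subset-of-sumset} to extract small partial sumsets whose iterated sum is then shown (via Corollary~\ref{coro:add-comb} and Lemma~\ref{lem:ap-extend}) to contain a low-gap sequence spanning $[\frac{\beta}{\eps},\frac{2\beta}{\eps}]$. The only cosmetic difference is that you restrict the partial-sumset computation to a subset $I^*\subseteq I$ of size $\lceil|I|/\gamma\rceil$ and inline the argument of Lemma~\ref{lem:dense2longseq}, whereas the paper computes partial sumsets for all of $I$, observes that the resulting collection is $\gamma$-dense, and then reuses Lemma~\ref{lem:dense2longseq} as a black box; both variants hit the same size and time bounds.
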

\begin{proof}
    The proof is almost the same as that of Lemma~\ref{lem:tree-operation}. The only difference is that, in the dense case, we will compute a small subset of $A_{2i-1} + A_{2i}$ so that the collection of these subsets has a small total size but is still dense.

    Recall that the elements of $A_i$'s are multiples of $2^{h+1}$.  For $i \in [1, \ell]$, define $A'_i = \{\frac{a }{2^{h+1}} : a \in A_i\}$. $A'_1, \ldots, A'_\ell$ are subsets of $[0, \frac{1}{\eps}]$. Let $\gamma = \max(4,\lceil\frac{4mg}{\beta}\rceil)$.  For $i \in [1, \ell/2]$, define $B'_{i} = A'_{2i-1} + A'_{2i}$. We estimate the density of $\{B'_1, \ldots, B'_{\ell/2}\}$ via Lemma~\ref{lem:est-density}. If $\{B'_1, \ldots, B'_{\ell/2}\}$ is $\gamma$-sparse, we compute $B'_1, \ldots, B'_{\ell/2}$.  Let $Z_i = \{2^{h+1} \cdot b : b \in B'_i\}$. One can see that the $Z_i$'s satisfy properties (i)(ii)(iii). And the total time cost is $\widetilde{O}(\frac{1}{\eps})$.

    Suppose that $\{B'_1, \ldots, B'_{\ell/2}\}$ is $\gamma$-dense. Let $u = \frac{2}{\eps}$. In this case, Lemma~\ref{lem:est-density} also returns a subset $I$ of $[1,\ell/2]$ such that $|B'_i| \geq \frac{2c\gamma u}{|I|} + 1$ for $i \in I$. We compute $Z'_i$ for all $i \in [1,\ell/2]$ as follows.  For $i \in I$, we compute a $Z'_i \subseteq A'_{2i-1} + A'_{2i}$ via Lemma~\ref{lem:subset-of-sumset} such that 
    \[
        |Z'_i| = \frac{2c\gamma u}{|I|} + 1.
    \] 
    For $i\notin I$, we let $Z'_i = \{0, \max(A'_{2i-1}) + \max(A'_{2i})\}$. Finally we returns $Z_i = \{2^{h+1} \cdot z : z \in Z'_i\}$ for all $i$. The total running time is
    \begin{align*}
        &\frac{\ell}{2} - |I| + \sum_{i \in I} ( \frac{2c\gamma u}{|I|} + 1 + |A'_{2i-1}| + |A'_{2i}|)\log^7 u\,\mathrm{polyloglog}\,u\\
    \leq & \ell + (2c\gamma u + |I| + \sum_{i=1}^{\ell} |A'_i|)\log^7 u\,\mathrm{polyloglog}\,u\\
    \leq & \widetilde{O}(\frac{1}{\eps}  + \sum_{i=1}^{\ell} |A_\ell|).
    \end{align*}
    Next, we show that the $Z_i$'s satisfy properties (i)(ii)(iii).  Property (i) is guaranteed by Lemma~\ref{lem:subset-of-sumset}. Property (ii) is satisfied since
    \begin{align*}
        \sum_{i=1}^{\ell/2}|Z_i| = \sum_{i=1}^{\ell/2}|Z'_i| \leq \frac{\ell}{2} - |I| + \sum_{i \in I} ( \frac{2c\gamma u}{|I|} + 1) = O(\frac{1}{\eps} \log\frac{1}{\eps}).
    \end{align*}
    We are left to prove property (iii).  Since $|Z'_i| \geq \frac{2c\gamma u}{|I|} + 1 $ for $i \in I$, the collection $\{Z'_1, \ldots, Z'_{\ell/2}\}$ is $\gamma$-dense. By the construction of $Z'_i$, we have $0 \in Z'_i$ for any $i \in [1,\ell/2]$ and $\sum_{i=1}^{\ell/2}\max(Z'_i) = \sum_{i=1}^{\ell}\max(A'_i)$.  Therefore, $Z'_1, \ldots, Z'_{\ell/2}$ has exactly the property as the $\gamma$-dense collection $\{B'_1, \ldots, B'_{\ell/2}\}$ in the proof of Lemma~\ref{lem:tree-operation}.  That is, $Z_1 + \cdots + Z_{\ell/2}$  
    has a sequence $z_1 < \ldots < z_k$ such that $z_1 \leq \frac{\beta}{\eps}$, $z_k \geq \frac{2\beta}{\eps}$, and $z_i - z_{i-1} \leq \beta$. Clearly, for any $s\in \S_X[\frac{\beta}{\eps}, \frac{2\beta}{\eps}]$, there is some $z \in Z_1 + \cdots + Z_{\ell/2}$ such that 
    \[
       s- \beta  \leq  z \leq s + \beta.
    \]
    For any $z \in Z_1 + \cdots + Z_{\ell/2}$, $z$ is also in $A_1 + \cdots + A_\ell$ which approximates $\S_X[\frac{\beta}{\eps}, \frac{2\beta}{\eps}]$ with additive error $\delta$. So there is some $s \in \S_X$ such that
    \[
        z-\delta \leq  s \leq z + \delta.      
    \]
    By Definition~\ref{def:approx}, $ Z_1 + \cdots + Z_{\ell/2}$ approximates $\S_X[\frac{\beta}{\eps}, \frac{2\beta}{\eps}]$ with additive error $\max(\delta, \beta)$.
\end{proof}

\subsection{Tackling Dense Levels via Partial Levels}

Given color-coding partitions of $X$ as level $0$, we compute a set $\widetilde{S}$ in a tree-like manner as in Section~\ref{sec:alg-determine}. Now no matter a level is dense or sparse, we can always compute it (partially) via Lemma~\ref{lem:compute-dense-level}.
The total additive error is $\max(\beta, O(\beta\log^2 \frac{n}{\eps})) = O(\beta\log^2 \frac{n}{\eps}) $, which results from Lemma~\ref{lem:compute-dense-level} and the rounding of the integers in each level. 
Since all the levels below $\widetilde{S}$ are explicitly computed, we can use ``tracing back'' to recover the corresponding subset $Y$ for each $\tilde{s} \in \widetilde{S}$.

Now we analyze the total running time of the above procedure. Lemma~\ref{lem:compute-dense-level} can be invoked for at most $\log \ell$ times.  Each level has a size of $O(\frac{1}{\eps}\log^2 \frac{1}{\eps})$ (except for $A_1, \ldots, A_\ell$, which may have a total size of $O(n)$). The total running time is bound by 
\[
    \widetilde{O}((\frac{1}{\eps} + \frac{1}{\eps} \log^2\frac{1}{\eps}) \cdot\log \ell + \frac{1}{\eps} + n ) \\
    = \widetilde{O}(n + \frac{1}{\eps}).
\]

\section{Conclusion}\label{sec:conclude}
We present a near-linear time weak approximation scheme for Subset Sum in this paper. It is interesting to explore whether our technique can bring improvement to other related problems. In particular, it is a major open problem whether there exists an $O(n+x_{\max})$-time exact algorithm for Subset Sum, where $x_{\max}$ refers to the largest input integer. The best known algorithm has a running time $O(n+x_{\max}^{3/2})$~\cite{CLMZ23}. It would be interesting to close or narrow the gap.   

\clearpage
\appendix

\section{Problem Reduction}\label{sec:reduce}
\lemreduce*
\begin{proof}
In this proof, when we say we can $\eps$-solve a Subset Sum instance $(X, t)$ , we mean that we can compute a set of size $\widetilde{O}(\frac{1}{\eps})$ that approximates $\S_X[0,t]$ with additive error $\widetilde{O}(\eps)\cdot t$, and given any element of the computed set, we can recover a solution with additive error $\widetilde{O}(\eps)\cdot t$. 

Let $(X, t)$ be the original instance. Let $Z = \{x < \eps t : x\in X\}$ be the set of tiny integers in $X$.  If $\Sigma(Z) \geq \eps t$, we iteratively extract from $Z$ a minimal subset $Z'$ such that $\Sigma(Z') \geq \eps t$ until $\Sigma({Z})< \eps t$.  Now it is safe to discard ${Z}$ as it incurs only an additive error of $\eps t$. Let $Z_1,Z_2,\cdots,Z_h$ be the sets we extracted from $Z$. For each $Z_j$, we replace it with a meta-integer of value $\Sigma(Z_j)$. Note that $\Sigma(Z_j) < 2\eps t$ for every $j$ as $Z_j$ is minimal.  It is easy to see that the above transformation can be done in $O(n)$ time. Moreover, replacing the tiny integers by the meta-integers incurs an additive error of at most $2\eps t$, because for any subset $Z'$ of tiny integers, there is a maximal set of meta-integers that sums to at least $\Sigma(Z) - 2\eps t$.


Now $x \geq \eps t$ for all $x \in X$, which implies that $\frac{x}{\eps^2t} \geq \frac{1}{\eps}$. By adjusting $\eps$ by a constant factor, we assume that $\frac{1}{\eps}$ is an integer.  
Now we scale the whole instance by $\eps^2 t$, that is, we replace $x\in X$ by $x':= \lfloor \frac{x}{\eps^2 t}\rfloor$ and $t$ by $t' = \lfloor \frac{t}{\eps^2t}\rfloor$. Scaling incurs an approximate factor of at most $1 + \eps$, or equivalently, an additive error of at most $\eps t$. After scaling, we have $x \in [\frac{1}{\eps}, \frac{1}{\eps^2}]$ for all $x \in X$ and $t = \frac{1}{\eps^2}$.   

Then we divide $X$ into $\log \frac{1}{\eps}$ groups, where each group contains integers within $[\frac{\alpha}{\eps}, \frac{2\alpha}{\eps}]$ for $\alpha \in \{1,2,4,8,\cdots\} \cap [1, \frac{1}{\eps}]$. We denote $X \cap [\frac{\alpha}{\eps}, \frac{2\alpha}{\eps}]$ as $X_{\alpha}$. Suppose we can $\eps$-solve each $(X_{\alpha}, \frac{1}{\eps^2})$ in $\widetilde{O}(n+\frac{1}{\eps})$-time (and let $\tilde{S}_{\alpha}$ denote the set that approximates $\S_{X_{\alpha}}$), then we can also $\eps$-solve $(X, \frac{1}{\eps^2})$ in $\widetilde{O}(n+\frac{1}{\eps})$-time via Lemma~\ref{lem:approx-k-fft} (by letting $A_j$'s be $\tilde{S}_{\alpha}$'s).

From now on we focus explicitly on $X$ where each $x\in X$ satisfies that $x \in [\frac{\alpha}{\eps}, \frac{2\alpha}{\eps}]$ for some $\alpha \in [1, \frac{1}{\eps}]$. 
For every $x\in X$, we further scale the instance by $\alpha$. That is, we let $x'=\lfloor x/\alpha\rfloor$ for every $x \in X$  and $t' = \lfloor\frac{1}{\alpha\eps^2}\rfloor$. Again, scaling incurs an approximate factor of at most $1 + \eps$, or equivalently, an additive error of at most $\eps t$. 

Now $x\in [\frac{1}{\eps}, \frac{2}{\eps}]$ for every $x\in X$ and $t=\lfloor\frac{1}{\alpha\eps^2}\rfloor\in [\frac{1}{\eps},\frac{1}{\eps^2}]$. We further assume that $t\leq  \Sigma(X)/2$. If $t>\Sigma(X)/2$, then to approximate $\S_X[0,t]$, it suffices to approximate $\S_X[0,\Sigma(X)/2]$ and $\S_X[\Sigma(X)/2, t]$ separately. The following claim indicates that it actually suffices to approximate $\S_X[0,\Sigma(X)/2]$.


\begin{claim}\label{claim:half}
Let  $\widetilde{S}$ be a set that approximates $\S_X[0,\Sigma(X)/2]$ with additive error $\eps t$. Then $\{\Sigma(X)-s':s'\in \widetilde{S}\}$ approximates $\S_X[\Sigma(X)/2, \Sigma(X)]$ with additive error $\eps t$.
\end{claim}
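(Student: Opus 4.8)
The plan is to prove Claim~\ref{claim:half} directly from Definition~\ref{def:approx} by exploiting the natural involution $s \mapsto \Sigma(X) - s$ on $\S_X$. The key observation is that $\S_X$ is symmetric about $\Sigma(X)/2$: if $s = \Sigma(Y)$ for some $Y \subseteq X$, then $\Sigma(X) - s = \Sigma(X \setminus Y) \in \S_X$. Moreover this map sends the interval $[0, \Sigma(X)/2]$ bijectively onto $[\Sigma(X)/2, \Sigma(X)]$, so it carries $\S_X[0,\Sigma(X)/2]$ onto $\S_X[\Sigma(X)/2,\Sigma(X)]$. Let $\widehat{S} = \{\Sigma(X) - s' : s' \in \widetilde{S}\}$ be the candidate approximating set; I want to verify conditions (i) and (ii) of Definition~\ref{def:approx} for $\widehat{S}$ approximating $\S_X[\Sigma(X)/2,\Sigma(X)]$ with additive error $\eps t$.

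For condition (i), take any $s \in \S_X[\Sigma(X)/2, \Sigma(X)]$. Then $\Sigma(X) - s \in \S_X[0,\Sigma(X)/2]$ by the symmetry above, so since $\widetilde{S}$ approximates $\S_X[0,\Sigma(X)/2]$ with additive error $\eps t$, there is $\tilde{s}' \in \widetilde{S}$ with $(\Sigma(X) - s) - \eps t \le \tilde{s}' \le (\Sigma(X) - s) + \eps t$. Subtracting from $\Sigma(X)$ and setting $\tilde{s} = \Sigma(X) - \tilde{s}' \in \widehat{S}$ gives $s - \eps t \le \tilde{s} \le s + \eps t$, as required. For condition (ii), take any $\tilde{s} \in \widehat{S}$, say $\tilde{s} = \Sigma(X) - \tilde{s}'$ with $\tilde{s}' \in \widetilde{S}$. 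By Definition~\ref{def:approx}(ii) applied to $\widetilde{S}$, there is $s' \in \S_X$ with $\tilde{s}' - \eps t \le s' \le \tilde{s}' + \eps t$. Then $s := \Sigma(X) - s' \in \S_X$ and $\tilde{s} - \eps t \le s \le \tilde{s} + \eps t$, which is exactly what condition (ii) demands (note it only requires $s \in \S_X$, not $s$ in the restricted interval, so no further argument is needed).

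I don't anticipate a genuine obstacle here — the claim is essentially a reflection trick and the proof is a few lines. The one place to be slightly careful is the interval bookkeeping: one should confirm that the reflection really does map $[0, \Sigma(X)/2] \cap \S_X$ onto $[\Sigma(X)/2, \Sigma(X)] \cap \S_X$ and not merely into it, so that condition (i) has a genuine preimage to work with; this follows because $s \mapsto \Sigma(X) - s$ is an involution on $\S_X$. A secondary minor point is that if $\Sigma(X)$ is such that $\Sigma(X)/2$ is not an integer, the two intervals overlap only at their (non-integer) boundary and the analysis is unaffected; if it is an integer, the midpoint lies in both intervals and is fixed by the reflection, again causing no issue. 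Everything else is routine substitution into Definition~\ref{def:approx}.
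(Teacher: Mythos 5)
Your proof is correct and follows the same reflection argument ($s \mapsto \Sigma(X) - s$) as the paper's proof of Claim~\ref{claim:half}. If anything, your write-up is slightly more complete: the paper explicitly verifies only condition (i) of Definition~\ref{def:approx} and leaves condition (ii) implicit, whereas you spell out both and also note correctly that (ii) only requires $s \in \S_X$ rather than membership in the restricted interval.
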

\begin{proof}[Proof of Claim~\ref{claim:half}]
    We prove Claim~\ref{claim:half} by Definition~\ref{def:approx}.
Consider an arbitrary $s\in \S_X[\Sigma(X)/2, \Sigma(X)]$, we have $\Sigma(X)-s\in \S_X[0, \Sigma(X)/2]$. Hence, there is $\tilde{s}\in \widetilde{S}$ with $\Sigma(X)-s-\eps t\leq \tilde{s}\leq \Sigma(X)-s+\eps t$. Rearranging the inequality we get $s-\eps t \leq \Sigma(X)-\tilde{s}\leq s+\eps t$. 
Hence, $\{\Sigma(X)-s':s'\in \widetilde{S}\}$ approximates $\S_X[\Sigma(X)/2, \Sigma(X)]$ with additive error $\eps t$.
\end{proof}

From now on we focus explicitly on approximating $\S_X[0,t]$ for $t\leq \Sigma(X)/2$.
To approximate $\S_X[0,t]$, we partition $\S_X[0,t]$ into $\log t \leq 2\log \frac{1}{\eps}$ subsets $\S_X[0,\frac{1}{\eps}], \S_X[\frac{1}{\eps},\frac{2}{\eps}],\S_X[\frac{2}{\eps},\frac{4}{\eps}],\ldots$. Note that $\S_X[0,\frac{1}{\eps}]$ can be computed directly as $x\in [\frac{1}{\eps}, \frac{2}{\eps}]$ for every $x\in X$, so we focus on the remaining subsets. Each of the remaining subsets can be denoted as $\S_X[\frac{\beta}{\eps}, \frac{2\beta}{\eps}]$ for some integer $\beta\in [1, \frac{1}{\eps}]$. To approximate $\S_X[0,t]$ with an additive error of $\widetilde{O}(\eps t)$, it suffices to approximate each $\S_X[\frac{\beta}{\eps}, \frac{2\beta}{\eps}]$ with an additive error of $\widetilde{O}(\beta)$. Notice that since $\frac{2\beta}{\eps}\leq t$ and $t\leq \Sigma(X)/2$, it follows directly that $\Sigma(X)\geq \frac{4\beta}{\eps}$ for every $\S_X[\frac{\beta}{\eps}, \frac{2\beta}{\eps}]$.
\end{proof}

\section{Details for Color-Coding}\label{app:color-coding}
Let $\beta\in [1,\frac{1}{\eps}]$. Fix $q^*: = (\frac{1}{\eps})^{-O(1)}$.  Let $m := 4\beta/\log \frac{4\beta^2}{\eps q^*}$ rounded up to be next power of $2$, $g:= 36\log^2\frac{4\beta^2}{\eps q^*}$ rounded up to the next power of $2$, and $r := \lceil \log \frac{4\beta^2}{\eps q^*} \rceil$. 
 The goal of this section is to prove the following.
\lemcolorcoding*

We are essentially following the color coding method in the prior work~\cite{Bri17} except that we need to additionally enforce the property that 
\[
\max(X^j_{1,1}) + \cdots + \max(X^j_{1,g}) + \cdots +\max(X^j_{m,1}) + \cdots + \max(X^j_{m, g}) \geq \frac{4\beta}{\eps}.
\]
which can be achieved by ensuring that there are enough non-empty subsets. This is not difficult to achieve, in particular, we may simply do the following: for half of the subsets, we let each of them contain exactly one integer; for the remaining half of the subsets, we use color coding. The above procedure is summarized in Algorithm~\ref{alg:mod-color-coding}.  


\begin{algorithm}[H]
\caption{$\mathtt{SmallColorCoding}(X, m, g, r)$}
\label{alg:small-color-coding}
    \begin{algorithmic}[1]
    \Statex \textbf{Input:} A multi-set $X$ of integers that $|X|\leq mg$ and three integers $m, g, r$
    \Statex \textbf{Output:} $r$ partitions $\{X^j_{1,1}, \ldots, X^j_{1,g}, \ldots, X^j_{m,1}, \ldots, X^j_{m, g}\}_{j \in [1, r]}$ of $X$
    \State Partition $X$ into $X_{1,1}, \ldots, X_{m,g}$ that each subset $|X_{i,j}|\leq 1$\;
        \For{$j =1,\ldots,r$}
        \State $X^j_{1,1}, \ldots, X^j_{m, g} := X_{1,1}, \ldots, X_{m, g}$
        \EndFor
  \State \Return the $r$ partitions of $X$. 
  \end{algorithmic}
\end{algorithm}

\begin{algorithm}
\caption{$\mathtt{ModifiedColorCoding}(X, k, q)$}
\label{alg:mod-color-coding}
\begin{algorithmic}[1]
    \Statex \textbf{Input:} A multi-set $X$ of integers, a positive integer $k$, and a target error probability $q$
    \Statex \textbf{Output:} $r$ partitions $\{X^j_{1,1}, \ldots, X^j_{2m, g}\}_{j \in [1, r]}$ of $X$ satisfying Lemma~\ref{lem:mod-color-coding}
    \State $m:= 2k/\log(k/q)$ rounded up to be next power of $2$
    \State $g:= 36\log^2(k/q)$ rounded up to the next power of $2$
    \State $r:= \lceil \log \frac{k}{q} \rceil$
    \If{$|X|\leq mg/2$}
        \State $\{X^j_{1,1}, \ldots,  X^j_{m, g}\}_{j \in [1, r]} := \mathtt{SmallColorCoding}(X, m, g, r)$\;
    \Else
        \State Let $X_1$ be an arbitrary subset of $X$ with $|X_1| = mg/2$\;
        \State $X_2 := X \setminus X_1$\;
        \State $\{X^j_{1,1}, \ldots,  X^j_{m/2, g}\}_{j \in [1, r]} := \mathtt{SmallColorCoding}(X, m/2, g, r)$\;
        \State $\{X^j_{m/2+1,1}, \ldots,  X^j_{m, g}\}_{j \in [1, r]} := \mathtt{ColorCoding}(X_2, k,q)$\;
    \EndIf
    \State \Return $\{X^j_{1,1}, \ldots, X^j_{m, g}\}_{j \in [1, r]}$\;
\end{algorithmic}
\end{algorithm}


Now we are ready to prove Lemma~\ref{lem:mod-color-coding}.

\begin{proof}[Proof of Lemma~\ref{lem:mod-color-coding}]
    We partition $X$ into $\{X^j_{1,1}, \ldots, X^j_{m, g}\}_{j \in [1, r]}$ by Algorithm~\ref{alg:mod-color-coding} as $\mathtt{ModifiedColorCoding}$ $(X, 2\beta, \frac{q^*\eps}{2\beta})$.

    We first prove that for any subset $Y\subseteq X$ with $|Y| \leq 2\beta$, with probability at least $1 - \frac{q^*\eps}{2\beta}$, 
    \[
        \Sigma(Y) \in \bigcup_{j=1}^r S^j_1 +  \cdots + \bigcup_{j=1}^r S^j_{m}.
    \]
    
    We distinguish into two cases. 
    
    If $|X|\leq mg/2$, $S_i^j$'s are the same for different $j$, whereas $\bigcup_{j=1}^r S^j_i = S^1_i$ for all $i\in[1,m]$. So
    \[\bigcup_{j=1}^r S^j_1 +  \cdots + \bigcup_{j=1}^r S^j_m = S^1_1 +  \cdots + S^1_{m} = (X^1_{1,1}\cup \{0\}) + \cdots + (X^1_{m,g} \cup \{0\}).\]
    As each subset has at most one element, it is obvious that for any $Y\subseteq X$, 
    \[
        \Sigma(Y) \in (X^1_{1,1}\cup \{0\}) + \cdots + (X^1_{m,g} \cup \{0\}).
    \]
    
    If $|X|> mg/2$, for any partition $X=X_1\cup X_2$, $Y$ can be partitioned to $Y_1$ and $Y_2$ that $Y_1\subseteq X_1$ and $Y_2\subseteq X_2$. Since $|X_1| = mg/2$, using the same argument as above, we have $\Sigma(Y_1)\in \bigcup_{j=1}^r S^j_1 +  \cdots + \bigcup_{j=1}^r S^j_{m/2}$. Note that $|Y_2|\leq |Y|\leq 2\beta$. According to Lemma~\ref{lem:original-color-coding}, with probability at least $1 - \frac{q^*\eps}{2\beta}$, $\Sigma(Y_2)\in \bigcup_{j=1}^r S^j_{m/2+1} +  \cdots + \bigcup_{j=1}^r S^j_{m}$. So with probability at least $1 - \frac{q^*\eps}{2\beta}$, 
    \[
    \Sigma(Y) = \Sigma(Y_1)+\Sigma(Y_2)\in \bigcup_{j=1}^r S^j_1 +  \cdots + \bigcup_{j=1}^r S^j_{m/2}+\bigcup_{j=1}^r S^j_{m/2+1} +  \cdots + \bigcup_{j=1}^r S^j_{m}.
    \]
    Note that $|\S_X[\frac{\beta}{\eps}, \frac{2\beta}{\eps}]|\leq \frac{\beta}{\eps}+1$. For each $s\in \S_X[\frac{\beta}{\eps}, \frac{2\beta}{\eps}]$, 
    we have shown that the probability that $s=\Sigma(Y)$ for some $Y$ but the event $s\not\in \bigcup_{j=1}^r S^j_1 +  \cdots  + \bigcup_{j=1}^r S^j_{m}$ occurs is at most $\frac{q^*\eps}{2\beta}$. So with probability at least $1 - \frac{q^*\eps}{2\beta}\cdot(\frac{\beta}{\eps}+1) \geq 1-q^*$,
    \[
    \S_X[\frac{\beta}{\eps}, \frac{2\beta}{\eps}] \subseteq (\bigcup_{j=1}^r S^j_1 +  \cdots + \bigcup_{j=1}^r S^j_{m}).
    \]
    It is easy to see that $\S_X[\frac{\beta}{\eps}, \frac{2\beta}{\eps}] \supseteq (\bigcup_{j=1}^r S^j_1 +  \cdots + \bigcup_{j=1}^r S^j_{m})$. So with probability at least $1-q^*$,
    \[
    \S_X[\frac{\beta}{\eps}, \frac{2\beta}{\eps}] = (\bigcup_{j=1}^r S^j_1 +  \cdots + \bigcup_{j=1}^r S^j_{m}).
    \]

    Next, we show that for any $j\in [1,r]$,
    \[
        \max(X^j_{1,1}) + \cdots + \max(X^j_{1,g}) + \cdots +\max(X^j_{m,1}) + \cdots + \max(X^j_{m, g}) \geq \frac{4\beta}{\eps}.
    \]   
    If $|X|\leq mg/2$, as each subset has at most one element,
    \[
    \max(X^j_{1,1}) + \cdots +  \max(X^j_{m, g}) = \Sigma(X)\geq \frac{4\beta}{\eps}.
    \]
    If $|X|> mg/2$, as $mg\geq 144\beta$ and for any $x\in X$ we have that $x\geq \frac{1}{\eps}$, it holds that
    \[
    \max(X^j_{1,1}) + \cdots +  \max(X^j_{m, g}) \geq \max(X^j_{1,1}) + \cdots +  \max(X^j_{m/2, g}) = \Sigma(X_1)\geq \frac{mg}{2}\cdot\frac{1}{\eps}\geq \frac{4\beta}{\eps}.
    \]

    Finally, we analyze the running time. We partition $X$ into $\widetilde{O}(\frac{1}{\eps})$ subsets, and repeat the process of partitioning each subset into $\mathrm{polylog}(n, \frac{1}{\eps})$ subsets $\mathrm{polylog}(n, \frac{1}{\eps})$ times. So the running time is $\widetilde{O}(n+\frac{1}{\eps})$.
\end{proof}

\section{Estimating Sumset Size}\label{app:est-sumset-size}
\lemestsumsetsize*
\begin{proof}
    Suppose $|A|, |B| < k$ since otherwise it is trivial.  Let $\tau$ be any positive number. For any integer set $Z$, define $Z \bmod k = \{z \bmod k : z\in Z\}$. It is easy to verify that 
    \begin{eqnarray}
           |(A \bmod \tau) + (B \bmod \tau)| \leq 2|(A+B) \bmod \tau| \leq 2|A + B|. \label{eq:mod}	
    \end{eqnarray}

    For $i = 1, \ldots, \lceil \log u\rceil$, we compute $(A \bmod 2^j) + (B \bmod 2^j)$ via Lemma~\ref{lem:sparse-fft}. If $(A \bmod 2^j) + (B \bmod 2^j) \geq 2k$ for some $j$, then we immediately stop and conclude that $|A + B| \geq k$ by Eq~\eqref{eq:mod}	. Otherwise, we will reach $j = \lceil \log u\rceil$, and obtain $A+B$ in this round.  Every round takes $O(k\log^5 u\,\mathrm{polyloglog}\,u)$ time except for the last round.  Let $j^*$ be the last round.  Since we do not stop at round $j^*-1$, 
    \[
        |(A \bmod 2^{j^*-1}) + (B \bmod 2^{j^*-1})| < 2k.
    \]
    We make the following claim.
    \begin{claim}\label{claim:mod-sumset}
    	Let A and B be two sets of integers. For any positive integer $k$, 
    	\[|(A \bmod 2k) + (B \bmod 2k)|\leq 3|(A \bmod k) + (B \bmod k)|.\]
    \end{claim}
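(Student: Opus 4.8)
# Proof Proposal for Claim~\ref{claim:mod-sumset}

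The plan is to prove the sharper statement $|(A \bmod 2k) + (B \bmod 2k)| \leq 3|(A \bmod k) + (B \bmod k)|$ by showing that every element of the left-hand sumset is the image, under a bounded-to-one map, of some element of the right-hand sumset, where "bounded" means at most $3$. The key observation is that reducing modulo $2k$ is a refinement of reducing modulo $k$: if $z \equiv z' \pmod{2k}$ then $z \equiv z' \pmod k$, so there is a natural surjection $\pi \colon (\mathbb{Z}/2k\mathbb{Z}) \to (\mathbb{Z}/k\mathbb{Z})$. The trouble is that $\pi$ does not commute with addition of sumsets in the way one wants, because an element of $(A \bmod 2k) + (B \bmod 2k)$ lives in $[0, 4k-2]$ while an element of $(A \bmod k) + (B \bmod k)$ lives in $[0, 2k-2]$, and the "carry" when adding two residues can shift things.

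First I would fix representatives: write each element of $A \bmod 2k$ as an integer in $[0, 2k-1]$, and similarly for $B$; then a generic element of $(A \bmod 2k) + (B \bmod 2k)$ is an integer $s \in [0, 4k-2]$. Given such an $s$ arising as $\bar a + \bar b$ with $\bar a \in A \bmod 2k$, $\bar b \in B \bmod 2k$, I would map it to the integer $(\bar a \bmod k) + (\bar b \bmod k) \in [0, 2k-2]$, which is genuinely an element of $(A \bmod k) + (B \bmod k)$. The core counting step is then: how many distinct values $s$ can map to a fixed value $w \in [0,2k-2]$? Since $\bar a \bmod k$ and $\bar a$ differ by $0$ or $k$ (and likewise for $\bar b$), the value $s = \bar a + \bar b$ is determined by $w$ together with the two carry bits $c_A, c_B \in \{0,1\}$, via $s = w + (c_A + c_B)k$. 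That gives at most $3$ possible values of $s$ (namely $w$, $w+k$, $w+2k$), which is exactly the factor $3$.

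The one subtlety to handle carefully — and this is the main obstacle — is that the map I described depends on the choice of decomposition $s = \bar a + \bar b$, not just on $s$; different elements $s$ of the left sumset might, a priori, be "accounted for" by the same $w$ through incompatible decompositions, or a single $s$ might have several decompositions. This is not actually a problem for an upper bound: I would simply argue that the set $(A \bmod 2k) + (B \bmod 2k)$, as a set of integers, is covered by $\bigcup_{w} \{w, w+k, w+2k\}$ where $w$ ranges over $(A \bmod k) + (B \bmod k)$, because any $s$ with some decomposition $\bar a + \bar b$ yields a specific $w = (\bar a \bmod k)+(\bar b \bmod k)$ in the right sumset with $s \in \{w, w+k, w+2k\}$. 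Hence $|(A \bmod 2k)+(B\bmod 2k)| \leq \sum_{w} |\{w,w+k,w+2k\}| \leq 3|(A \bmod k)+(B\bmod k)|$, which is the claim. I expect the write-up to be short; the only thing requiring care is stating the carry decomposition cleanly and noting that $w$ indeed lies in the claimed range so that it is a legitimate element of the smaller sumset.
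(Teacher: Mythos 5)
Your proposal is correct and follows essentially the same route as the paper's own proof: both decompose an arbitrary $s = \bar a + \bar b$ in the $\bmod\ 2k$ sumset, reduce $\bar a$ and $\bar b$ further modulo $k$ to obtain a witness $w$ in the $\bmod\ k$ sumset with $s - w \in \{0, k, 2k\}$, and conclude via the covering $\bigcup_{w}\{w, w+k, w+2k\}$. The carry-bit framing you use is just a slightly more explicit way of saying what the paper says in one line, namely that $c' \in \{c, c-k, c-2k\}$.
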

   Given Claim~\ref{claim:mod-sumset}, we conclude that the last round also takes $O(k\log^5 u\,\mathrm{polyloglog}\,u)$ time since
    \[
        |(A \bmod 2^{j^*}) + (B \bmod 2^{j^*})| \leq 3|(A \bmod 2^{j^*-1}) + (B \bmod 2^{j^*-1})| < 6k.
    \]
    Therefore, the total running time is $O(k\log^6 u\,\mathrm{polyloglog}\,u)$.
It remains to prove Claim~\ref{claim:mod-sumset}. 
\begin{proof}[Proof of Claim~\ref{claim:mod-sumset}]
    Let $C = (A \bmod 2k) + (B \bmod 2k)$, $C' = (A \bmod k) + (B \bmod k)$.
We prove that $C\subseteq \{c', c'+k, c'+2k:c'\in C'\}$. So $|C|\leq 3|C'|$.

For any $c\in C$, we can write $c = a+b$ for some $a \in (A \bmod 2k)$ and $b\in (B \bmod 2k)$. We have $a' = a \bmod k \in (A\bmod k)$ and $b' = b \bmod k \in (B\bmod k)$. Let $c' = a'+b'\in C'$. By definition, it is immediate that $c'\in\{c, c-k. c-2k\}$ and therefore $c\in \{c', c'+k, c'+2k:c'\in C'\}$.
\end{proof}
\end{proof}

\bibliographystyle{alphaurl}
\bibliography{main}
\end{document}